\newcommand*\circled[1]{\tikz[baseline=(char.base)]{
            \node[shape=circle,draw,inner sep=2pt] (char) {#1};}}
\newcommand{\etal}{\textit{e{}t~a{}l.}\xspace}
\DeclareMathOperator{\poly}{poly}
\DeclareMathOperator{\polylog}{polylog}
\newtheorem{theorem}{Theorem}[section]
\newtheorem{definition}[theorem]{Definition}
\newtheorem{lemma}[theorem]{Lemma}
\newcommand{\seclab}[1]{\label{sec:#1}}
\newcommand{\secref}[1]{Section~\ref{sec:#1}}
\newcommand{\thmlab}[1]{{\label{theo:#1}}}
\newcommand{\thmref}[1]{Theorem~\ref{theo:#1}}
\newcommand{\lemlab}[1]{\label{lemma:#1}}
\newcommand{\lemref}[1]{Lemma~\ref{lemma:#1}}
\newcommand{\figlab}[1]{\label{fig:#1}}
\newcommand{\figref}[1]{Figure~\ref{fig:#1}}
\newcommand{\deflab}[1]{\label{def:#1}}
\newcommand{\defref}[1]{Definition~\ref{def:#1}}
\renewcommand{\O}[1]{\ensuremath{{\mathcal{O}\pth{#1}}}}
\newcommand{\Frechet}{Fr\'echet\xspace}
\newcommand{\atgen}{\symbol{'100}}
\providecommand{\pth}[2][\!]{#1\left({#2}\right)}
\renewcommand{\Re}{{\rm I\!\hspace{-0.025em} R}}
\newcommand{\Na}{{\rm I\!\hspace{-0.025em} N}}
\newcommand{\distFr}[2]{\ensuremath{d_F\pth{#1,#2}}}
\newcommand{\distDTW}[2]{\ensuremath{d_{DTW}\pth{#1,#2}}}
\newcommand{\inner}[2]{\ensuremath{\left\langle #1,#2 \right\rangle}}
\newcommand{\ball}[2]{\ensuremath{\texttt{ball}\pth{#1,#2}}}
\begin{document}


\title{Probabilistic embeddings of the \Frechet distance
\thanks{The conference version of this paper will be published at 16th Workshop on Approximation and Online Algorithms (WAOA) 2018.}
}
\author{
   Anne Driemel
   \thanks{Department of Mathematics and Computer Science, TU Eindhoven, The Netherlands; 
     \texttt{a.driemel}\hspace{0cm}\texttt{\atgen{}tue.nl}. 
      Work on this paper was funded by NWO Veni project ``Clustering time series and trajectories (10019853)''. 
} 
   \and
   Amer Krivo\v{s}ija%
   \thanks{Department of Computer Science, TU Dortmund, Germany; 
      \texttt{amer.krivosija}\hspace{0cm}\texttt{\atgen{}tu-dortmund.de}. 
      Work on this paper has been partly supported by DFG within the Collaborative Research Center SFB 876 ``Providing Information by Resource-Constrained Analysis'', project A2.
} 
}

\date{\today}

\maketitle

\begin{abstract}
The Fr\'echet distance is a popular distance measure for curves which naturally lends itself to fundamental computational tasks, such as clustering, nearest-neighbor searching, and spherical range searching in the corresponding metric space. However, its inherent complexity poses considerable computational challenges in practice. To address this problem we study distortion of the probabilistic embedding that results from projecting the curves to a randomly chosen line. Such an embedding could be used in combination with, e.g. locality-sensitive hashing.  We show that in the worst case and under reasonable assumptions, the discrete Fr\'echet distance between two polygonal curves of complexity $t$ in $\Re^d$, where $d\in\lbrace 2,3,4,5\rbrace$, degrades by a factor linear in $t$ with constant probability. We show upper and lower bounds on the distortion. We also evaluate our findings empirically on a benchmark data set. The preliminary experimental results stand in stark contrast with our lower bounds. They indicate that highly distorted projections happen very rarely in practice, and only for strongly conditioned input curves.
\end{abstract}





\section{Introduction}

The \Frechet distance is a distance measure for curves which naturally lends
itself to fundamental computational tasks, such as clustering, nearest-neighbor
searching, and spherical range searching in the corresponding metric space. 
However, their inherent complexity poses considerable computational challenges
in practice. Indeed, spherical range searching under the \Frechet distance was
recently the topic of the yearly ACM SIGSPATIAL GISCUP competition\footnote{
6th ACM SIGSPATIAL GISCUP 2017, 
\url{http://sigspatial2017.sigspatial.org/giscup2017/} }, highlighting the
relevance and the difficulty of designing efficient data structures for this
problem. At the same time, Afshani and Driemel showed lower bounds on the 
space-query-tradeoff in the pointer model~\cite{ad-ranges-18} that demonstrate that 
this problem is even harder than simplex-range searching.

The computational complexity of computing a single \Frechet distance between
two given curves is a well-studied topic \cite{aaks-dfst-14, Bringmann14, BK-focs15, BringmannK17, buchin2014four, dhw-afd-12, eiter1994computing}. 
It is believed that it takes time that is
quadratic in the length of the curves and this running time can be achieved by
applying dynamic programming.
In this body of literature, the case of 1-dimensional curves under the
continuous \Frechet distance stands out. In particular, no lower bounds are
known on computing the continuous \Frechet distance between 1-dimensional
curves. It has been observed that the problem has a special structure in this
case~\cite{folding-17}.  
Clustering under the \Frechet distance can be done efficiently for
1-dimensional curves \cite{DriemelKS16}, but seems to be harder
for curves in the plane or higher dimensions. 
Bringmann and K\"unnemann used projections to lines to speed up their
approximation algorithm for the \Frechet distance~\cite{ BringmannK17}.
They showed that the distance computation can be done in linear time 
if the convex hulls of the two curves are disjoint.
It is tempting to believe that the curves being
restricted to 1-dimensional space makes the problem significantly easier.
However, in the general case, 
there are no
algorithms known which are faster for 1-dimensional curves than for curves in
higher dimensions.  
In practice, it is very common to separate 
the coordinates
of trajectories to simplify computational tasks. It seems that in practice  
the inherent character of a trajectory is often largely preserved when restricted to
one of the coordinates of the ambient space. Mathematically, this amounts to
projecting the trajectory to a line.

This motivates our study of probabilistic embeddings of the \Frechet distance
into the space of 1-dimensional curves. Concretely, we study distortion of the
probabilistic embedding that results from projecting the curves to a randomly
chosen line.  Such a random projection could be used in combination with
probabilistic data structures, e.g.  locality-sensitive
hashing~\cite{ds-lshc-17}, but also with the multi-level data structures for
\Frechet range searching given by Afshani and Driemel~\cite{ad-ranges-18}. See
below for a more in-depth discussion of these data structures.

We show that in the worst case and under certain assumptions, the discrete
Fr\'echet distance between two polygonal curves of
complexity $t$ in $\Re^d$, where $d=\lbrace 2,3,4,5\rbrace$, degrades by a factor linear in $t$ with constant probability. In
particular, we show upper and lower bounds on the change in distance for the
class of $c$-packed curves. 
The notion of the $c$-packed curves was introduced by Driemel, Har-Peled and
Wenk in \cite{dhw-afd-12} and has proved useful as a realistic input 
assumption~\cite{afpy-dtw-16, Bringmann14, dh-jydfd-13}. 
A curve is called 
$c$-packed for a value $c>0$ if the length of the intersection of the curve with
any ball of any radius $r$ is a most $cr$.
While our study is mostly restricted to the discrete \Frechet distance, we
expect that our techniques can be extended to the case of the continuous
\Frechet distance.

A closely related distance measure, which is popular in the field of data-mining,
is dynamic time warping (DTW)~\cite{dtswk-08,
mueller07dtw,RakthanmanonCMBWZZK12}. The computational complexity of DTW has
also been extensively studied, both empirically and in
theory~\cite{AbboudBW15, afpy-dtw-16, gs-dtw-17, keogh2005exact}. 
Some of our lower bounds extend to DTW.

\subsection{Related work on data structures with \Frechet distance}

The complexity of classic data structuring problems for the \Frechet distance
is still not very well-understood, despite several papers on the topic. We
review what is known for nearest-neighbor searching and range searching.
Indyk \cite{i-approxnn-02} gave a deterministic and approximate near-neighbor
data structure for the discrete \Frechet distance. 
A $c$-approximate nearest-neighbor data structure returns for a given query point $q$ a data point $p\in S$, such that the distance $d(p,q)$ is at most $c\cdot d(p^*,q)$, where $p^*\in S$ is the true nearest neighbor to $q$. Indyk's data structure for data set $S$, containing $n$ curves which have
at most $t$ vertices, achieves approximation factor $c\in\O{\log
t + \log\log n}$ and has query time $\O{\poly(t)\log n }$. This data structure
requires large space, as it precomputes all queries with curves with
$\sqrt{t}$ vertices. For short curves (with $t \in \O{\log n}$) Driemel and
Silvestri \cite{ds-lshc-17} described an approximate near-neighbor 
structure based on locality-sensitive hashing with approximation factor $\O{t}$, 
query time $\O{t \log n}$, using space $\O{n \log n + tn}$. LSH is a technique
that uses families of hash functions with the property that near points are
more likely to be hashed to the same index than far
points. Driemel and Silvestri were the first to define locality-sensitive hash
functions for the discrete \Frechet distance. Emiris and Psarros \cite{EmirisP18} improved their result and also showed how to obtain $(1+\varepsilon)$-approximation with query time $\tilde{\mathcal{O}}\left( d\cdot 2^{2t}\cdot \log n\right)$ and using space $\tilde{\mathcal{O}}(n)\cdot \left( 2+ d/\log t\right)^{\O{t\cdot d\cdot \log(1/\varepsilon)}}$. 
No such hash functions are known
for the continuous case. It is conceivable that the concept of signatures which
was introduced by Driemel, Krivo\v{s}ija and Sohler~\cite{DriemelKS16} in the
context of clustering of 1-dimensional curves could be used to define an LSH
for the continuous case and that this technique could be used in combination
with projections to random lines.

De Berg \etal \cite{BergCG13} studied range counting data structures for
spherical range search queries under the continuous \Frechet distance assuming
that the centers of query ranges are line segments.  This data structure stores compressed
subcurves using a partition tree, using space $\O{s\polylog (n)}$ and query time
$\O{(n/\sqrt{s})\polylog(n)}$ to obtain a constant approximation factor, where
$n\leq s\leq n^2$ is a parameter to the data structure which is fixed at
preprocessing time. 

Afshani and Driemel recently showed how to leverage semi-algebraic range
searching for this problem~\cite{ad-ranges-18}. Their data structure also
supports polygonal curves of low complexity and answers queries exactly. In
particular, for the discrete \Frechet distance they described a data structure
which uses space in $\O{n (\log\log n)^{t_s -1}}$ and achieves query time in
$\O{n^{1-1/d} \cdot \log^{\O{t_s}} n \cdot t_q^{\O{d}}}$, where $t_s$ denotes
the complexity of an input curve and it is assumed that the complexity of the
query curves $t_q$ is upper-bounded by a polynomial of $\log n$.  For the continuous
\Frechet distance they described a data structure for polygonal curves in the
plane which uses space in $\O{n (\log\log n)^{\O{t_s^2}}}$ and achieves query
time in $\O{\sqrt{n} \log^{\O{t_s^2}} n}$. For the case where the curves lie
in dimension higher than $2$ and the distance measure is the continuous
\Frechet distance, no data structures for range searching or range counting are
known.

\subsection{Related work on metric embeddings}

Given metric spaces $(X,d_X)$ and $(Y,d_Y)$, we call a metric embedding an injective mapping $f:X\rightarrow Y$. We call $c$, $c\geq 1$, the distortion of the embedding $f$ \cite{IndMat04} if there is an $r\in (0,\infty)$ such that for all $x,y\in X$ it is $r\cdot d_X(x,y)\leq d_Y(f(x),f(y)) \leq c\cdot r\cdot d_X(x,y)$.

The work that is perhaps closest to ours is a recent result by Backurs and
Sidiropoulos~\cite{approx_BackursS16}. They gave an embedding of the Hausdorff
distance into constant-dimensional $\ell_\infty$ space with constant
distortion. More precisely, for any $s,d\geq 1$, they obtained an embedding for
the Hausdorff distance over point sets of size $s$ in $d$-dimensional space,
into $\ell_{\infty}^{s^{\O{s+d}}}$ with distortion $s^{\O{s+d}}$.
No such metric embeddings are known for the discrete or continuous \Frechet distance.
It has been shown that the doubling dimension of the \Frechet distance is
unbounded, even in the case when the metric spaces is restricted to curves of
constant complexity \cite{DriemelKS16}.  A result of Bartal \etal
\cite{bartal2014impossible} for doubling spaces implies that a metric embedding
of the \Frechet distance into an $\ell_p$ space would have at least
super-constant distortion, but it is not known how to find such an embedding.

We discuss what is known on two variations of the metric embedding problem that are most studied. The first is to find the smallest distortion for any metric from the given class. Matou\v{s}ek \cite{matousek1996distortion} showed that any metric on a point set of size $s$ can be embedded into $d$-dimensional Euclidean space with multiplicative distortion $\O{\min\lbrace s^{2/d}\log^{3/2}s,s\rbrace}$, but not better than $\Omega \left( s^{1/\lfloor (d+1)/2\rfloor}\right)$. For $d=1$ this implies that the distortion is linear in the worst case.

The second problem is to find the smallest approximation factor to a minimal distortion for a given metric over a point set of size $s$. 
We call a spread $\Delta$ a maximum/minimum ratio of the distances of the input point set $X$.
Badoiu \etal \cite{BadoiuCIS05} gave an $\O{\Delta^{3/4}c^{11/4}}$-approximation to the embedding to a line, where $c$ is the distortion of embedding of the input set onto the line. They also showed that it is hard to approximate this problem up to a factor $\Omega\left(n^{1/12}\right)$, even for a weighted tree metrics with polynomial spread.
Assuming a constant distortion $c$ and a polynomial spread $\Delta$, Nayyeri and Raichel \cite{NayyeriR15} gave a $\O{1}$-approximation algorithm to the minimal distortion of the embedding to a line, in time polygonal in $s$ and $\Delta$. See the work of Badoiu \etal \cite{BadoiuDGRRRS05}, Fellows \etal \cite{FellowsFLLRS13}, H{\aa}stad \etal \cite{HastadIL03}, and Indyk \cite{Indyk01} for further reading.

\subsection{Our results}

Given two polygonal curves $P$ and $Q$ with $t$ vertices each from $\mathbb{R}^d$, where $d\in \lbrace 2,3,4,5\rbrace$.
Consider sampling a unit vector $\textbf{u}$ in respective $\mathbb{R}^d$ 
uniformly at random, and let $P'$ and $Q'$ be the projections of the two curves to the line supporting  $\textbf{u}$. 
We observe that \Frechet distance always decreases when the curves are projected to a line (\lemref{frechetprojection}).
We show that if the curves $P$ and $Q$ are $c$-packed for constant $c$, then, with constant probability, the discrete \Frechet
distance between the curves $P$ and $Q$, denoted by $\distFr{P}{Q}$,  
degrades by at most a linear factor in $t$. This is stated by \thmref{claim:discfrec:n:approx} for $d\in \lbrace 2,3\rbrace$, and by \thmref{claim:discfrec:n:approx45} for $d\in \lbrace 4,5\rbrace$.

\begin{theorem}
\thmlab{claim:discfrec:n:approx}
Given $c\geq 2$, for any two polygonal $c$-packed curves $P$ and $Q$ from $\mathbb{R}^2$ or $\mathbb{R}^3$, and for any $\gamma\in (0,1)$ it holds that
\[
\emph{Pr}\left[ \frac{\distFr{P}{Q}}{\distFr{P'}{Q'}}  \leq \frac{12c+16}{\gamma}\cdot t\right] \geq 1-\gamma.
\]
\end{theorem}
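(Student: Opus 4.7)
The plan is a cut argument in the free-space diagram combined with a union bound. Set $\Delta = \distFr{P}{Q}$; the goal is to lower bound $\distFr{P'}{Q'}$ by $\delta_0 := \gamma\Delta/((12c+16)t)$ with probability at least $1-\gamma$. Recall that $\distFr{P}{Q} \leq \delta$ iff the free space $F_\delta = \brc{(i,j) \in [t]^2 : \|p_i - q_j\| \leq \delta}$ contains a monotone lattice path from $(1,1)$ to $(t,t)$ using right/up steps. Since $\distFr{P}{Q} = \Delta$, the complementary set $B = \brc{(i,j) : \|p_i - q_j\| \geq \Delta}$ forms a \emph{cut}: every monotone path visits at least one cell of $B$.

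My first and main step is to extract a small subcut $C \subseteq B$. Let $A_1$ be the set of cells reachable from $(1,1)$ by monotone paths that stay in $\brc{(i,j) : \|p_i - q_j\| < \Delta}$, and take $C$ to be the forward frontier of $A_1$, that is, the cells $(i,j) \in B$ for which $(i-1,j) \in A_1$ or $(i,j-1) \in A_1$. Every monotone $(1,1)\!\to\!(t,t)$ path must first leave $A_1$ through a cell of $C$, so $C$ is a cut. The crux of the proof is the size bound $\cardin{C} \leq (12c+16)\,t$; without any input assumption $\cardin{C}$ can be as large as $\Theta(t^2)$, and this is precisely where the $c$-packed assumption enters. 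It limits how intricately the free-space boundary can weave within any row or column, and the plan is to charge each frontier transition to curve arc-length of $P$ or $Q$ inside a ball of radius $\Theta(\Delta)$, yielding the bound linear in $c$ and $t$.

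Given the cut $C$, the rest is a standard projection calculation applied per pair. For any $(i,j) \in C$ we have $\|p_i - q_j\| \geq \Delta$ and $|p'_i - q'_j| = \|p_i - q_j\|\cdot|\cos\theta|$, where $\theta$ is the angle between $\textbf{u}$ and $p_i - q_j$. For $\textbf{u}$ uniform on $S^{d-1}$: in $d=2$, $\Pr[|\cos\theta|\leq x] = (2/\pi)\arcsin x \leq x$ on $[0,1]$ (by convexity of $\arcsin$); in $d=3$, $\cos\theta$ is uniform on $[-1,1]$ and $\Pr[|\cos\theta|\leq x] = x$. In either case $\Pr[\,|p'_i - q'_j| \leq \delta_0\,] \leq \delta_0/\Delta$.

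Combining, the union bound over $C$ gives
\[
\Pr\bigl[\exists (i,j)\in C : |p'_i - q'_j| \leq \delta_0\bigr] \;\leq\; \cardin{C}\cdot\frac{\delta_0}{\Delta} \;\leq\; \gamma.
\]
On the complementary event every cell of $C$ satisfies $|p'_i - q'_j| > \delta_0$, so $C$ remains a cut at threshold $\delta_0$ in the projected free-space diagram; by the same free-space characterization applied to $P'$ and $Q'$, this forces $\distFr{P'}{Q'} > \delta_0$, which rearranges to the claimed ratio bound. The projection probabilities and the union bound are routine; the main obstacle is the cut-size bound in the second paragraph, which is the step that genuinely consumes the $c$-packed hypothesis.
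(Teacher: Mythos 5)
Your high-level architecture matches the paper's: build a ``cut'' set in the free-space diagram whose pairs all have distance $\geq \distFr{P}{Q}/\theta$ and which every monotone path must hit (the paper calls this a $\theta$-guarding set), apply \lemref{varphibound} to each pair via a union bound, and conclude the ratio bound with the right choice of $\varphi$. Your per-pair projection estimates for $d=2,3$ and the union-bound step are correct and match \lemref{varphibound} and \lemref{k:subset:approx}.

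The genuine gap is exactly the step you flag as the crux, and the fix is not just a routine charging argument. You define $C$ as the raw forward frontier of $S_B$ (equivalently, the output of the paper's Algorithm~\ref{guarding:theta} with $\theta=1$) and claim $\cardin{C}\leq(12c+16)t$ for $c$-packed curves, planning to ``charge each frontier transition to arc-length in a ball of radius $\Theta(\Delta)$.'' This cannot work at the fixed threshold $\Delta=\distFr{P}{Q}$: $c$-packedness bounds \emph{arc-length} inside a ball, not the number of times the curve crosses a sphere of a \emph{fixed} radius. A $c$-packed $Q$ can oscillate across the sphere $\partial\ball{p_i}{\Delta}$ many times with excursions of length only $O(\epsilon)$; choosing $\epsilon = O(c\Delta/t)$ makes the total arc length in $\ball{p_i}{\Delta}$ consistent with $c$-packedness while producing $\Theta(t)$ frontier cells in a single row, hence up to $\Theta(t^2)$ overall. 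This is why the paper's Figure~\figref{forklikecase} shows that Algorithm~\ref{guarding:theta} alone can output a quadratic set, and why \lemref{numberofbarriercellsincolumn} and \lemref{comparationofareas} are stated for a \emph{variable} radius $r\in[b/2,b]$: only by choosing $r$ adaptively inside an annulus can one bound the number of sphere crossings by $2c$. The paper then runs a three-phase trimming (remove avoidable pairs, then trim row-by-row with $b=\distFr{P}{Q}$, then column-by-column with $b=\distFr{P}{Q}/2$, via \lemref{trimminglemma2} and \lemref{trimmingalgo}), which costs a factor $4$ in the distance guarantee ($\theta$ goes from $1$ to $4$) but achieves $|B|\leq(3c+4)t$ by \lemref{extractobsolete}. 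Plugging $\theta=4$, $|B|\leq(3c+4)t$, and $\beta=1/\gamma$ into \lemref{k:subset:approx} gives the constant $(12c+16)$. Your proposal's $\theta=1$ cut with the same $(12c+16)t$ size bound is not established and, by the oscillation example above, is likely false; you need the relaxed-radius trimming to make the counting go through.
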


\begin{theorem}
\thmlab{claim:discfrec:n:approx45}
Given $c\geq 2$, for any two polygonal $c$-packed curves $P$ and $Q$ from $\mathbb{R}^4$ or $\mathbb{R}^5$, and for any $\gamma\in (0,1)$ it holds that
\[
\emph{Pr}\left[ \frac{\distFr{P}{Q}}{\distFr{P'}{Q'}}  \leq \left( 1+\frac{2}{\pi}\right) \cdot \frac{12c+16}{\gamma}\cdot t\right] \geq 1-\gamma.
\]
\end{theorem}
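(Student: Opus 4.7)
The plan is to reduce \thmref{claim:discfrec:n:approx45} to the argument already developed for \thmref{claim:discfrec:n:approx}, isolating the only genuinely dimension-dependent step. For $\mathbf u$ uniform on $S^{d-1}$ and any unit vector $\hat v$, let $F_d(\epsilon)=\Pr\bigl[\,|\inner{\mathbf u}{\hat v}|\le\epsilon\,\bigr]$. Examining the $d\in\{2,3\}$ proof, the probability that $\distFr{P'}{Q'}\le\epsilon\cdot\distFr{P}{Q}$ is controlled by a union bound over $O(ct)$ "witness'' vertex pairs produced from the $c$-packed structure of $P,Q$ together with the discrete \Frechet pairing, giving a failure bound of the form $(12c+16)\,t\cdot F_d(\epsilon)$. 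The inequality $F_d(\epsilon)\le\epsilon$, valid for $d\in\{2,3\}$, then converts this into the stated distortion. The first step is therefore to verify that the witness-extraction step is purely metric (relying only on the \Frechet pairing and the $c$-packed condition), so that the bound $(12c+16)\,t\cdot F_d(\epsilon)$ holds unchanged in $\mathbb{R}^4$ and $\mathbb{R}^5$.

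The substantive new work is to estimate $F_d$ for $d\in\{4,5\}$. Since the marginal density of a single coordinate of $\mathbf u\in S^{d-1}$ is proportional to $(1-x^2)^{(d-3)/2}$, direct integration gives
\[
F_4(\epsilon)=\tfrac{2}{\pi}\bigl(\epsilon\sqrt{1-\epsilon^2}+\arcsin\epsilon\bigr),\qquad F_5(\epsilon)=\tfrac{3}{2}\epsilon-\tfrac{1}{2}\epsilon^3.
\]
Using $\sqrt{1-\epsilon^2}\le 1$ together with the convexity bound $\arcsin\epsilon\le(\pi/2)\epsilon$ on $[0,1]$ yields $F_4(\epsilon)\le\tfrac{2}{\pi}\epsilon+\epsilon=(1+\tfrac{2}{\pi})\epsilon$, and trivially $F_5(\epsilon)\le\tfrac{3}{2}\epsilon\le(1+\tfrac{2}{\pi})\epsilon$. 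Substituting into the union bound gives a failure probability of at most $(12c+16)(1+\tfrac{2}{\pi})\,t\,\epsilon$. Setting this equal to $\gamma$ and solving for $1/\epsilon$ produces exactly the distortion asserted in \thmref{claim:discfrec:n:approx45}.

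The main obstacle is confirming that the proof of \thmref{claim:discfrec:n:approx} really does factor cleanly into a dimension-agnostic witness extraction followed by the one-pair tail estimate $F_d$. If some subroutine of the low-dimensional argument secretly leans on a planar or $3$-dimensional geometric fact (for instance a covering or arrangement property that degrades with $d$), then the reduction above would have to be supplemented for $d\in\{4,5\}$. Assuming no such dependency exists, the entire dimension dependence is concentrated in the bound on $F_d$, and the theorem follows simply by replacing the constant $1$ with $1+\tfrac{2}{\pi}$ in the final inversion step; the $c$-packing machinery, the witness pairing, and the scaling in $t$ and $\gamma$ all carry over verbatim from the $d\in\{2,3\}$ proof.
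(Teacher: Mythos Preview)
Your proposal is correct and follows essentially the same route as the paper. The paper likewise observes that the entire guarding-set machinery (Algorithm~\ref{guarding:theta}, the avoidable-pair trimming, and \lemref{comparationofareas} through \lemref{extractobsolete}) is dimension-agnostic, so the only change needed is to replace \lemref{varphibound} by \lemref{varphibound45} and \lemref{k:subset:approx} by \lemref{k:subset:approx45}; your caveat about a hidden planar dependency is unfounded, since \lemref{numberofbarriercellsincolumn} uses only the $c$-packed condition and works in any $\mathbb{R}^d$. Your computation of $F_4$ and $F_5$ via the marginal coordinate density $(1-x^2)^{(d-3)/2}$ is equivalent to, and somewhat cleaner than, the paper's computation via the angle density $h_d(\alpha)\propto(\sin\alpha)^{d-2}$---in particular your bound $\arcsin\epsilon\le(\pi/2)\epsilon$ replaces the paper's reuse of the $d=2$ Taylor estimate for $\arccos$, and your $F_5(\epsilon)=\tfrac{3}{2}\epsilon-\tfrac{1}{2}\epsilon^3$ is exactly what the paper's expression $\tfrac{9}{8}\varphi-\tfrac{1}{8}\cos(3\arccos\varphi)$ simplifies to.
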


We also present a lower bound on the ratio of the two distances. The construction of the lower bound uses 
$c$-packed curves with $c < 3$. 

\begin{theorem}
\thmlab{lowerbounddiscretecpacked}
Given $c\geq 2$, there exist polygonal $c$-packed curves $P$ and $Q$, such that for any $\gamma\in (0,1/\pi)$ 
\[
\emph{Pr}\left[ \frac{\distFr{P}{Q}}{\distFr{P'}{Q'}}  \geq \frac{5\pi\gamma}{6}\cdot t\right] \geq 1-\gamma.
\]
\end{theorem}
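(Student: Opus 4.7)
The plan is to construct an explicit pair of $c$-packed polygonal curves $P$ and $Q$ in $\mathbb{R}^2$ (with $c<3$) and directly analyze the ratio $\distFr{P}{Q}/\distFr{P'}{Q'}$ as a function of the projection angle. I would design $P$ and $Q$ so that (i) their first and last vertices coincide, $p_1=q_1$ and $p_t=q_t$---this is essential because any valid discrete \Frechet traversal matches endpoint to endpoint and would otherwise pin $\distFr{P'}{Q'}$ at $|\langle p_1-q_1,\mathbf{u}\rangle|$, preventing a factor-$t$ distortion; (ii) $P$ lies on a horizontal segment of length $L$ with $t$ evenly spaced vertices, while $Q$ shares the endpoints of $P$ but its $t-2$ interior vertices lie on a parallel segment at height $D$; and (iii) both curves are $2$-packed, which holds since each is supported on one or two line segments. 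By the vertex-to-vertex matching, $\distFr{P}{Q}=D$.

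For a unit direction $\mathbf{u}=(\cos\alpha,\sin\alpha)$ I would upper-bound $\distFr{P'}{Q'}$ by a monotone traversal that shifts indices in the interior: match $p'_{i+k}$ with $q'_i$ for an integer shift $k$ close to $t\tan\alpha$ (scaled to the vertex spacing $L/t$ and the offset $D$). For a suitable choice of $k$, the projected interior offset $D\sin\alpha$ cancels, and since the endpoints coincide there is no endpoint-forced term. The remaining cost comes from the initial and final ``wait'' phases where one curve idles at an endpoint while the other advances through $k$ vertices. Controlling the wait cost is the delicate part: a naive analysis of this construction gives a wait contribution of $k\cdot (L/t)\cos\alpha=D\sin\alpha$, which exactly cancels the interior gain, so the construction must be refined---for instance by clustering extra vertices near the endpoints, or by introducing a slight non-parallel shear---to attenuate the wait cost to $D/\bigl((5\pi\gamma/6)\,t\bigr)$ for all $\alpha$ outside a ``bad'' angular set of measure at most $2\pi\gamma$.

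The probability bound then follows directly. With $\mathbf{u}$ uniform on the unit circle, $\alpha$ is uniform on $[0,2\pi)$ and $|\sin\alpha|$ has density $2/\pi$ near zero; combined with the bad-set bound this yields the claimed $\Pr[\text{ratio}\geq (5\pi\gamma/6)\,t]\geq 1-\gamma$. The restriction $\gamma<1/\pi$ ensures the linearization $\arcsin(x)\approx x$ is applied in its valid range, and the constant $5\pi/6$ emerges by collecting the trigonometric factors from the density of $|\sin\alpha|$ together with the geometric parameters of the construction. The main difficulty is the attenuation of the wait cost via a suitably asymmetric vertex placement; this is the technical heart of the proof, and is what prevents a simple parallel-segment construction from working.
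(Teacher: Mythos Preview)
Your proposal has a genuine gap that you yourself identify but do not close. The parallel-segment construction for $Q$ fails for exactly the reason you state: the endpoint ``wait'' phases incur cost $\approx k\cdot (L/t)|\cos\alpha|\approx D|\sin\alpha|$, which is no better than the trivial $(i,i)$ traversal and yields no factor-$t$ gain. You then defer the actual work to unspecified refinements (``clustering extra vertices near the endpoints'' or ``a slight non-parallel shear''), but neither is carried out, and it is precisely this refinement that constitutes the proof. As written, the proposal is a plan without its key step. A secondary issue: your $Q$ is supported on three segments (up, across, down), and near the corner vertices a small ball of radius $r$ can pick up close to $4r$ of curve length when the trapezoid is flat, so the $2$-packedness claim also needs care.

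The paper's construction is essentially your ``non-parallel shear'' idea made concrete, and it removes the wait-cost obstruction entirely. Take $P$ to be a straight segment with $2t+1$ evenly spaced vertices, and take $Q$ to be a \emph{wedge}: two equal-length segments $\overline{q_1 q_{t+1}}$ and $\overline{q_{t+1} q_{2t+1}}$ with $q_1=p_1$, $q_{2t+1}=p_{2t+1}$, base angle $\alpha$, and $t+1$ evenly spaced vertices on each leg. Then $\distFr{P}{Q}=\sin\alpha$ (realised at the apex pair $(t{+}1,t{+}1)$), and both curves are $c$-packed for any $c\geq 2$. The crucial difference from your trapezoid is that consecutive vertices of $Q$ are always $1/t$ apart---there is no large jump at the ends---so no wait cost arises. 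When the apex projection $q'_{t+1}$ lands inside $P'$ (which fails only when the perpendicular to $L$ falls in the angular sector of width $\alpha$, i.e.\ with probability $\alpha/\pi$), each half of $Q'$ is an evenly spaced monotone sequence lying inside $P'$, and can be matched monotonically to the $2t$ evenly spaced vertices of $P'$ at cost at most $|P'|/(2t)\leq |P|/(2t)=(\cos\alpha)/t\leq 1/t$. Setting $\gamma=\alpha/\pi$ and using $\sin\alpha\geq \alpha-\alpha^3/6\geq 5\alpha/6$ for $\alpha\in(0,1)$ gives the stated bound; this is where both the constant $5\pi/6$ and the restriction $\gamma<1/\pi$ come from, not from the density of $|\sin\alpha|$.
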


\thmref{lowerbounddiscretecpacked} holds for the continuous \Frechet distance
and for dynamic time warping distance as well.

We also show that there exist polygonal curves $P$ and $Q$ that are not
$c$-packed for sublinear $c$ and their (continuous or discrete) \Frechet
distance degrades by a linear factor for any projection line (i.e. with
probability 1).  \thmref{lowerboundcontinuousgeneral} presents this result.

\section{Preliminaries}

Throughout the paper we use the following notational conventions.
Consider two polygonal curves $P=\lbrace p_1, p_2,\ldots, p_t \rbrace$ and $Q=\lbrace
q_1, q_2,\ldots, q_t \rbrace$ in $\mathbb{R}^d$ given by their sequences of vertices.  
We choose a unit vector $\textbf{u}$ in $\mathbb{R}^d$ by choosing a point on the
$(d-1)$-dimensional unit hypersphere uniformly at random. We denote with $L$ the 
line through the origin that supports the vector $\textbf{u}$. 
Let $P'=\lbrace p'_1, p'_2,\ldots, p'_t \rbrace$ and $Q'=\lbrace
q'_1, q'_2,\ldots, q'_t \rbrace$ 
be the projections of $P$ and $Q$ to $L$, defined by
$p'_i=\inner{p_i}{\textbf{u}}$ and $q'_j=\inner{q_j}{\textbf{u}}$, for all
$1\leq i\leq t$ and $1\leq j\leq t$.  We denote $\delta_{i,j}=\|p_i-q_j\|$ and
$\delta'_{i,j}=\|p'_i-q'_j\|$, for all $1\leq i\leq t$ and $1\leq j\leq t$,
i.e. $\delta_{i,j}$ and $\delta'_{i,j}$ are the pairwise distances of the
vertices for the input curves $P$ and $Q$ and for their respective projections
$P'$ and $Q'$.

We define the discrete \Frechet distance of $P$ and $Q$ as follows: we call a
\emph{traversal} $T$ of $P$ and $Q$ a sequence of pairs of indices $(i,j)$ of
vertices $(p_i,q_j)\in P\times Q$ such that \begin{compactenum}[i)]
\item the traversal $T$ starts with $(1,1)$ and ends with $(t,t)$, and
\item the pair $(i,j)$ of $T$ can be followed only by one of $(i+1,j)$, $(i,j+1)$ or $(i+1,j+1)$.
\end{compactenum}
We notice that every traversal is monotone. If $\mathcal{T}$ is the set of all
traversals $T$ of $P$ and $Q$, then the discrete \emph{\Frechet distance}
between $P$ and $Q$ is defined as 
\begin{equation}
\label{discretefrechet}
\distFr{P}{Q}=\min_{ T \in \mathcal{T}} \max_{(i,j)\in T} \|p_i-q_j \|.
\end{equation}

Furthermore, we define a directed, vertex-weighted graph $G=(V,E)$ on the node set 
$V=\lbrace (i,j): 1\leq i,j\leq t\rbrace$. A node $(i,j)$ corresponds to a pair of vertices 
$p_i$ of $P$ and $q_j$ of $Q$ and we assign it the weight $\delta_{i,j}$. The set of edges is defined as 
$E=\lbrace \left( (i,j),(i',j') \right): i'\in\lbrace i,i+1\rbrace, j'=\lbrace
j,j+1\rbrace, 1\leq i,i',j,j'\leq t\rbrace$. 
The set of paths in the graph $G$ between $(1,1)$ and $(t,t)$ corresponds to
the set of traversals $\mathcal{T}$. We call a path in $G$ which does not start
in $(1,1)$ or end in $(t,t)$ a \emph{partial traversal} of $P$ and $Q$.

It is useful to picture the nodes of the graph $G$ as a matrix, where rows
correspond to the vertices of $P$ and columns correspond to the vertices of
$Q$. For any fixed value $\Delta>0$,
we define the free-space matrix\footnote{Note that the conventional definition of the free-space matrix for parameter $\Delta$ is slightly different, since usually there is an 1-entry iff $\|p_i-q_j\|\leq \Delta$.  We are using this definition since it better suits our needs. } $F_\Delta=\left( \phi_{i,j}\right)_{1\leq i,j\leq t}$ with
\[
\phi_{i,j}=\begin{cases}
1& \text{if } \|p_i-q_j\|<\Delta\\
0& \text{if } \|p_i-q_j\|\geq\Delta.
\end{cases}
\]

Overlaying the graph with the free-space matrix for $\Delta>\distFr{P}{Q}$, we
can observe that there exists a path in the graph from $(1,1)$ to $(t,t)$ that
visits only the matrix entries with value $1$. Moreover, the existence of such 
a path in the free-space matrix for some value of $\Delta$ implies that $\Delta>\distFr{P}{Q}$.

We define $c$-packedness of curves as follows.

\begin{definition}[$c$-packed curve]\deflab{c_packed_curve}
Given $c>0$, a curve $P\in\mathbb{R}^d$ is $c$-packed if for any point $p\in\mathbb{R}^d$ and any radius $r>0$, the total length of the curve $P$ inside the hypersphere $\ball{p}{r}$ is at most $c\cdot r$.
\end{definition}



We prove the following basic fact about random projections to a line, stated for $d\in\lbrace 2, 3\rbrace$ by \lemref{varphibound}, and for $d\in \lbrace 4,5 \rbrace$ by \lemref{varphibound45}. For a general problem in much higher dimension $d$, the probability stated by these lemmas cannot be bounded by a linear function in $\varphi$, due to the measure concentration around $\pi/2$. 

\begin{lemma}
\lemlab{varphibound}
If two points $p$ and $q$ 
are projected to the straight line $L$, which supports the unit vector chosen uniformly at random on the unit hypersphere in $\mathbb{R}^2$ or $\mathbb{R}^3$, the probability that the distance of their projections will be reduced from the original distance by a factor greater than $\varphi$ is at most $\varphi$.
\end{lemma}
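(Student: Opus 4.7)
The plan is to reduce the statement to a tail estimate for $|\cos\theta|$, where $\theta$ is the angle between the fixed direction $v=p-q$ and the random unit vector $\textbf{u}$. Since the projection of a vector onto the line $L$ supporting $\textbf{u}$ is given by the inner product with $\textbf{u}$, we have $\|p'-q'\|=|\inner{v}{\textbf{u}}|=\|v\|\cdot|\cos\theta|$. Hence the projected distance is reduced from $\|p-q\|$ by a factor strictly greater than $\varphi$ precisely on the event $\{|\cos\theta|<\varphi\}$, and it suffices to bound $\Pr[|\cos\theta|<\varphi]$ by $\varphi$. By the rotational symmetry of the uniform distribution on the unit hypersphere, I would assume without loss of generality that $v$ points along a fixed coordinate axis, so that $\cos\theta$ is simply the corresponding coordinate of $\textbf{u}$.

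For $d=2$, writing $\textbf{u}=(\cos\alpha,\sin\alpha)$ with $\alpha$ uniform on $[0,2\pi)$ and $v$ along the first axis, the event $\{|\cos\alpha|<\varphi\}$ consists of two symmetric arcs centered at $\pm\pi/2$, each of angular length $\pi-2\arccos\varphi$. A short computation, using $\arccos\varphi+\arcsin\varphi=\pi/2$, gives
\[
\Pr\!\left[\,|\cos\theta|<\varphi\,\right]=\frac{2\pi-4\arccos\varphi}{2\pi}=\frac{2\arcsin\varphi}{\pi}.
\]
Because $\arcsin''(x)=x/(1-x^2)^{3/2}\geq 0$ on $[0,1)$, the function $\arcsin$ is convex on $[0,1]$; the chord inequality between the endpoints $\arcsin 0=0$ and $\arcsin 1=\pi/2$ then yields $\arcsin\varphi\leq(\pi/2)\varphi$ for every $\varphi\in[0,1]$, from which the desired bound $\varphi$ follows immediately.

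For $d=3$, I would invoke Archimedes' hat-box theorem: the area of any spherical zone on the unit $2$-sphere equals $2\pi$ times its height, so the push-forward of the uniform measure on $S^2$ onto any fixed axis is uniform on $[-1,1]$. In particular, $\cos\theta$ is uniform on $[-1,1]$, hence $|\cos\theta|$ is uniform on $[0,1]$, and $\Pr[|\cos\theta|<\varphi]=\varphi$ with equality.

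There is no deep obstacle; the argument is essentially a direct computation on the appropriate unit sphere. The one substantive point is the convexity of $\arcsin$ that saves the $d=2$ case by the factor $2/\pi$. This approach does not extend to higher dimensions: a Jacobian factor of $(1-x^2)^{(d-3)/2}$ appears in the density of the chosen coordinate of $\textbf{u}$, so for $d\geq 4$ the density of $|\cos\theta|$ diverges at $0$ and no linear bound in $\varphi$ can be obtained by the same reasoning, which is consistent with the authors' remark that a separate statement (\lemref{varphibound45}) is needed in dimensions $4$ and $5$, and that measure concentration precludes such a bound altogether in very high dimension.
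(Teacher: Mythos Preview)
Your proof is correct and follows the same overall approach as the paper: reduce to the tail probability $\Pr[|\cos\theta|<\varphi]$ via the inner-product formula, then bound this separately in dimensions $2$ and $3$.

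The technical execution differs in small but pleasant ways. For $d=2$ the paper expresses the probability as $1-\tfrac{2}{\pi}\arccos\varphi$ and then expands $\arccos$ in a Taylor series, bounding the tail sum to obtain $\tfrac{2}{\pi}\varphi+(1-\tfrac{2}{\pi})\varphi^3\leq\varphi$; your convexity-of-$\arcsin$ chord argument reaches the same conclusion in one line. For $d=3$ the paper integrates the density $h_3(\alpha)=\tfrac{1}{2}\sin\alpha$ explicitly to get equality $\varphi$, whereas you invoke Archimedes' hat-box theorem to read off directly that $\cos\theta$ is uniform on $[-1,1]$. Both shortcuts are legitimate and arguably cleaner, but neither changes the substance of the argument.
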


\begin{proof}

\begin{figure}[ht]
\centering
\includegraphics[width=0.45\textwidth]{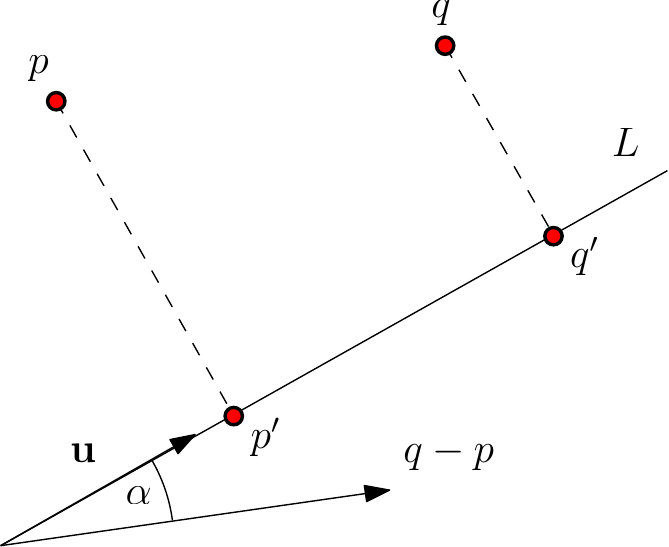}
\caption{The projection of the pair of the vertices to the straight line}
\figlab{projline:case1}
\end{figure}

Let $p$ and $q$ be two vertices in $\mathbb{R}^d$.
Let $\textbf{u}$ be the unit vector chosen uniformly at random on the unit hypersphere and let $L$ be the straight line that supports the vector $\textbf{u}$. Then let $p'$ and $q'$ be the projections of $p$ and $q$ respectively to the projection line $L$, and let $\alpha$ be the angle between $\textbf{u}$ and the vector $q-p$ (see \figref{projline:case1}). Then it holds by the definion of the inner product that
\begin{equation}
\|q'-p'\|= \|\inner{q -p}{\textbf{u}} \cdot \textbf{u} \| = \|q-p\|\cdot \|u\|\cdot |\cos\alpha| \cdot \|u\| .
\label{innerproduct}
\end{equation}
It is $|\cos \alpha|\geq \varphi$ for $\alpha \in \left[ 0, \arccos \varphi \right] \cup \left[ \pi-\arccos\varphi, \pi\right]$, for any $\varphi\in [0,1]$. 

A $d$-sphere is a $d$-dimensional manifold that can be embedded in Euclidean $(d+1)$-dime\-nsional space. A $d$-sphere with radius $R$ has the volume $V_d(R)$ and the surface area $S_d(R)$ given by:
\[
V_d(R)=\frac{\pi^{\frac{d}{2}}}{\Gamma\left(\frac{d}{2}+1\right)}\cdot R^d \hspace{0.5cm}\text{ and }\hspace{0.5cm} S_d(R)=\frac{2\pi^{\frac{d+1}{2}}}{\Gamma\left( \frac{d+1}{2}\right)}\cdot R^d.
\]
$\Gamma(z)$ is the gamma function defined as
\[
\Gamma(z)=\int_0^\infty x^{z-1} e^{-x}dx
\]
for all $z\in\mathbb{R}$, which is a known extension of the factorial function to the set of real numbers, satisfying $\Gamma(1/2)=\sqrt{\pi}$, $\Gamma(1)=1$ and $\Gamma(n+1)=n\cdot\Gamma(n)$ (for all $n\in\mathbb{N}$) \cite{askey2010gamma, huber1982gamma}.

Since the projection line $L$ supports the vector $\textbf{u}$, which is chosen uniformly at random on the unit hypersphere in $\mathbb{R}^d$ ($(d-1)$-sphere with radius 1), the angle $\alpha$ is distributed by the probability distribution function $h_d(\alpha)$, defined as the ratio of the surface of a $(d-2)$-sphere of radius $\sin\alpha$ and the surface of a unit $(d-1)$-sphere. This can be expressed as:
\begin{equation}
\label{eqn:distribution}
h_d\left( \alpha\right)=\frac{1}{\sqrt{\pi}}\cdot \frac{\Gamma\left( \frac{d}{2}\right)}{\Gamma\left( \frac{d-1}{2}\right)}\cdot \left(	\sin \alpha\right)^{d-2}
\end{equation}
over the interval $\alpha\in\left[ 0,\pi\right]$. 
\bigskip

For $d=2$ the distribution of $\alpha$ in Equation (\ref{eqn:distribution}) is uniform with $h_2(\alpha)=1/\pi$. Thus
\begin{equation}
\label{eqn:cosprob}
\text{Pr}\left[ \frac{\|q'-p'\|}{\|q-p\|} \geq \varphi \right] = \text{Pr}\left[ |\cos \alpha |\geq \varphi\right] = \frac{2\arccos \varphi}{\pi}
\end{equation}
and
\[
\text{Pr}\left[ \frac{\|q'-p'\|}{\|q-p\|} < \varphi \right]  = 1- \frac{2\arccos \varphi}{\pi}.
\]
Using Taylor series of $\arccos \varphi$ we get for $0\leq \varphi\leq 1$:
\begin{eqnarray*}
\arccos \varphi &=&  \frac{\pi}{2}- \sum_{k=0}^\infty \frac{(2k)!\cdot \varphi^{2k+1}}{2^{2k}\cdot (2k+1)\cdot (k!)^2}  = \frac{\pi}{2}- \varphi - \sum_{k=1}^\infty \frac{(2k)!\cdot \varphi^{2k+1}}{2^{2k}\cdot (2k+1)\cdot (k!)^2} \\
&\geq& \frac{\pi}{2}-\varphi - \varphi^3\cdot \sum_{k=1}^\infty \frac{(2k)!}{2^{2k}\cdot (2k+1)\cdot (k!)^2} = \frac{\pi}{2}-\varphi-\varphi^3\cdot \left( \frac{\pi}{2}-1\right)
\end{eqnarray*}
since $\varphi\geq \varphi^3 \geq \varphi^{2k+1}$ for all $k\geq 1$. Therefore
\begin{equation}
\label{eqn:phibound2}
\text{Pr}\left[ \frac{\|q'-p'\|}{\|q-p\|} < \varphi \right]  =1-\frac{2\arccos \varphi}{\pi} \leq \frac{2}{\pi}\cdot \varphi + \left( 1-\frac{2}{\pi}\right) \cdot \varphi^3 \leq \varphi.
\end{equation}
\bigskip

For $d=3$ the distribution of $\alpha$ in Equation (\ref{eqn:distribution}) is $h_3(\alpha)=\left(\sin\alpha\right)/2$ for $\alpha\in [0,\pi]$. Thus
\[
\text{Pr}\left[ |\cos \alpha |\geq \varphi\right] = \text{Pr}\left[ \alpha\in [0,\arccos\varphi]\cup [\pi-\arccos\varphi,\pi] \right] = 2\cdot \int_0^{\arccos\varphi} \frac{\sin \alpha}{2}d\alpha = 1-\varphi
\]
due to the symmetry of $h_3(\alpha)$ around $\pi/2$. Therefore it holds that 
\begin{equation}
\label{eqn:phibound3}
\text{Pr}\left[ \frac{\|q'-p'\|}{\|q-p\|} < \varphi \right]  =1- \left( 1-\varphi\right) = \varphi.
\end{equation}

The claim of the lemma follows from Equations (\ref{eqn:phibound2}) and (\ref{eqn:phibound3}). 
\end{proof}

\begin{lemma}
\lemlab{varphibound45}
If two points $p$ and $q$ 
are projected to the straight line $L$, which supports the unit vector chosen uniformly at random on the unit hypersphere in $\mathbb{R}^4$ or $\mathbb{R}^5$, the probability that the distance of their projections will be reduced from the original distance by a factor greater than $\varphi$ is at most $\left( 1+2/\pi \right)\cdot \varphi$.
\end{lemma}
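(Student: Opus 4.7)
The plan is to follow the same template as the proof of \lemref{varphibound}, but specialized to the distribution $h_d(\alpha)$ for $d\in\{4,5\}$. From the general argument leading to Equation~(\ref{innerproduct}), we again have $\|p'-q'\|/\|p-q\| = |\cos\alpha|$, where $\alpha\in[0,\pi]$ is the angle between the random direction $\textbf{u}$ and the displacement vector $q-p$. By symmetry of $h_d$ around $\pi/2$ we get
\[
\text{Pr}\left[ \frac{\|q'-p'\|}{\|q-p\|} < \varphi \right] = \text{Pr}\left[|\cos\alpha| < \varphi\right] = 2\int_{\arccos\varphi}^{\pi/2} h_d(\alpha)\, d\alpha,
\]
so the task reduces to evaluating this integral in each dimension and bounding it by $(1+2/\pi)\varphi$.

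For $d=4$, I would first simplify the density of Equation~(\ref{eqn:distribution}) using $\Gamma(2)=1$ and $\Gamma(3/2)=\sqrt{\pi}/2$, obtaining $h_4(\alpha) = \tfrac{2}{\pi}\sin^2\alpha$. Applying the standard antiderivative $\int \sin^2\alpha\,d\alpha = \alpha/2 - (\sin 2\alpha)/4$ and evaluating from $\arccos\varphi$ to $\pi/2$ yields
\[
\text{Pr}\left[|\cos\alpha| < \varphi\right] = 1-\frac{2\arccos\varphi}{\pi} + \frac{2\varphi\sqrt{1-\varphi^2}}{\pi}.
\]
The first two terms are exactly the $d=2$ expression, which by Equation~(\ref{eqn:phibound2}) is already bounded by $\varphi$. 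For the third term I would use $\sqrt{1-\varphi^2}\leq 1$, giving an overall bound of $\varphi + (2/\pi)\varphi = (1+2/\pi)\varphi$, as required.

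For $d=5$, I would simplify $h_5(\alpha)$ using $\Gamma(5/2)=3\sqrt{\pi}/4$ and $\Gamma(2)=1$, obtaining $h_5(\alpha) = \tfrac{3}{4}\sin^3\alpha$. Using the antiderivative $\int \sin^3\alpha\,d\alpha = -\cos\alpha + (\cos^3\alpha)/3$, evaluated from $\arccos\varphi$ to $\pi/2$, this yields
\[
\text{Pr}\left[|\cos\alpha| < \varphi\right] = \frac{3\varphi}{2} - \frac{\varphi^3}{2} \leq \frac{3\varphi}{2} \leq \left(1+\frac{2}{\pi}\right)\varphi,
\]
where the last inequality uses $2/\pi > 1/2$. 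Combining both cases completes the lemma.

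There is no real obstacle here; the argument is essentially a routine computation that parallels the $d\in\{2,3\}$ case. The only mildly subtle point is recognizing in the $d=4$ computation that the $\arccos$ part of the integral can be reused verbatim from Equation~(\ref{eqn:phibound2}), which avoids re-deriving the Taylor-series bound on $\arccos\varphi$; and in the $d=5$ case noticing that the closed-form already lies below $(3/2)\varphi$, which is comfortably below $(1+2/\pi)\varphi$.
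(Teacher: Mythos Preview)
Your proposal is correct and follows essentially the same approach as the paper: both compute the same densities $h_4$ and $h_5$, integrate to obtain the exact probability, and for $d=4$ reuse the bound from Equation~(\ref{eqn:phibound2}) on the $\arccos$ term together with $\sqrt{1-\varphi^2}\leq 1$. The only cosmetic difference is in the $d=5$ case: the paper writes the probability as $\tfrac{9}{8}\varphi-\tfrac{1}{8}\cos(3\arccos\varphi)$ and argues via monotonicity of an auxiliary function, whereas your polynomial form $\tfrac{3}{2}\varphi-\tfrac{1}{2}\varphi^3$ (which is the same expression after applying the triple-angle identity) lets you bound by $\tfrac{3}{2}\varphi<(1+2/\pi)\varphi$ directly, which is arguably cleaner.
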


\begin{proof}

We extend the proof of \lemref{varphibound} to the cases $d=4$ and $d=5$ as follows. 

For $d=4$ the distribution of $\alpha$ in Equation  (\ref{eqn:distribution}) is $h_4(\alpha)=\left(2\sin^2\alpha\right)/\pi$ for $\alpha\in [0,\pi]$. Thus
\[
\text{Pr}\left[ \frac{\|q'-p'\|}{\|q-p\|} < \varphi \right]  =1- 2\cdot\int_0^{\arccos \varphi} \frac{2}{\pi} \sin^2 \alpha d\alpha = 1-\frac{2}{\pi}\left[ \arccos\varphi -\varphi \cdot\sqrt{1-\varphi^2}\right].
\]
Using the last two inequalities of (\ref{eqn:phibound2}), this implies that
\begin{equation}
\label{eqn:phibound4}
\text{Pr}\left[ \frac{\|q'-p'\|}{\|q-p\|} < \varphi \right]  \leq \varphi + \frac{2}{\pi}\cdot\varphi\cdot \sqrt{1-\varphi^2} \leq \left( 1+\frac{2}{\pi}\right) \cdot\varphi.
\end{equation}

\bigskip

For $d=5$ the distribution of $\alpha$ in Equation  (\ref{eqn:distribution}) is $h_5(\alpha)=\left(3\sin^3\alpha\right)/4$ for $\alpha\in [0,\pi]$. Due to the symmetry of $h_5(\alpha)$ around $\pi/2$ it holds that
\begin{equation}
\label{eqn:phibound5}
\text{Pr}\left[ \frac{\|q'-p'\|}{\|q-p\|} < \varphi \right]  =1- 2\cdot\int_0^{\arccos \varphi} \frac{3}{4} \sin^3 \alpha d\alpha = \frac{9}{8}\cdot \varphi - \frac{1}{8} \cos\left( 3\arccos\varphi\right) \leq \left( 1+\frac{2}{\pi}\right) \cdot \varphi.
\end{equation}
The last inequality of (\ref{eqn:phibound5}) follows from the fact that the function $f(\varphi)=(2/\pi-1/8)\cdot \varphi + \cos\left( 3\arccos\varphi\right)/8$ is monotone and increasing, and it holds that $f(0)=0$.

The claim of the lemma follows from Equations (\ref{eqn:phibound4}) and (\ref{eqn:phibound5}). 
\end{proof}

For the sake of completeness we prove the following lemma.

\begin{lemma}
\lemlab{frechetprojection}
Given two curves $P=\lbrace p_1,\ldots,p_t\rbrace$ and $Q=\lbrace q_1,\ldots,q_t\rbrace$ in $\mathbb{R}^d$, and let $P'=\lbrace p'_1,\ldots,p'_t\rbrace$ and $Q'=\lbrace q'_1,\ldots,q'_t\rbrace$ respectively be their projections to the straight line $L$ which supports the vector $\textbf{u}$ chosen uniformly at random on the unit hypersphere in $\mathbb{R}^d$. It holds that $\distFr{P}{Q}\geq\distFr{P'}{Q'}$.
\end{lemma}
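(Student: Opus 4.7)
The plan is to exploit two standard facts: first, that orthogonal projection to any line in $\mathbb{R}^d$ is a contraction (i.e., $\|p'-q'\|\leq \|p-q\|$ for any two points $p,q$ and their projections), and second, that the set of valid traversals of $P$ and $Q$ coincides with the set of valid traversals of $P'$ and $Q'$, since traversals depend only on the indices and not on the actual vertex locations.

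First I would take an optimal traversal $T^*\in\mathcal{T}$ that realizes the discrete \Frechet distance between $P$ and $Q$, so that
\[
\distFr{P}{Q}=\max_{(i,j)\in T^*}\|p_i-q_j\|.
\]
Then I would observe that $T^*$ is also a valid traversal for the projected curves $P'$ and $Q'$, because the conditions (i) and (ii) in the definition of a traversal only refer to the indices. Hence by the definition of the discrete \Frechet distance as a minimum over all traversals,
\[
\distFr{P'}{Q'}\leq \max_{(i,j)\in T^*}\|p'_i-q'_j\|.
\]

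The final step is to invoke the contractivity of the orthogonal projection. Since $p'_i=\inner{p_i}{\textbf{u}}\cdot\textbf{u}$ and $q'_j=\inner{q_j}{\textbf{u}}\cdot\textbf{u}$ with $\|\textbf{u}\|=1$, by the Cauchy--Schwarz inequality (or directly from Equation (\ref{innerproduct}) with $|\cos\alpha|\leq 1$) we get $\|p'_i-q'_j\|\leq\|p_i-q_j\|$ for every pair of indices $(i,j)$. Combining this with the previous inequality yields
\[
\distFr{P'}{Q'}\leq \max_{(i,j)\in T^*}\|p'_i-q'_j\|\leq \max_{(i,j)\in T^*}\|p_i-q_j\|=\distFr{P}{Q},
\]
which proves the claim. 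No step appears to be a real obstacle; the argument is essentially a one-line observation combining contractivity with the fact that the optimal traversal for the original curves remains admissible for the projected ones.
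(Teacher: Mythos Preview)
Your proof is correct and uses essentially the same idea as the paper: take the optimal traversal $T^*$ for $P$ and $Q$, note that it is also a valid traversal for $P'$ and $Q'$, and apply the contractivity of orthogonal projection (the paper's Equation~(\ref{innerproduct}) with $|\cos\alpha|\leq 1$). The paper phrases this as a proof by contradiction, but the underlying argument is identical to your direct one.
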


\begin{proof}
Let $\distFr{P}{Q}>\distFr{P'}{Q'}$ for some projection line $L$, and let $T$ and $T'$ be the traversals of $P$ and $Q$, and $P'$ and $Q'$ that realize $\distFr{P}{Q}$ and $\distFr{P'}{Q'}$ respectively. $T,T'\in \mathcal{T}$, where $\mathcal{T}$ is the set of all traversals of $P$ and $Q$ (and also of $P'$ and $Q'$). Then it holds that
\[
\distFr{P}{Q}=\max_{(i,j)\in T} \| p_i-q_j\| < \max_{(i,j)\in T'} \|p'_i-q'_j\| \leq \max_{(i,j)\in T'', T''\in \mathcal{T}} \|p'_i-q'_j\|.
\]
For any $(i,j)\in T''$, $T''\in \mathcal{T}$, we denote with $\alpha_{i,j}$ the angle between the vectors $q_j-p_i$ and $q'_j-p'_i$ (the latter being parallel to $\textbf{u}$). Since any traversal of $P'$ and $Q'$ is a traversal of $P$ and $Q$, using Equation (\ref{innerproduct}) it holds that
\[
\distFr{P}{Q} < \max_{(i,j)\in T} \|p'_i-q'_j\| = \max_{(i,j)\in T} \|p_i-q_j\|\cdot |\cos \alpha_{i,j}| \leq \max_{(i,j)\in T} \|p_i-q_j\|,
\]
a contradiction.
\end{proof}

\section{Upper bound}
\seclab{discreteapproximation}

\subsection{Guarding sets}
\seclab{unionboundlemma}

The discrete \Frechet distance between curves $P$ and $Q$ is realized by some
pair $(p_i,q_j)$ of vertices $p_i\in P$ and $q_j\in Q$, being at the distance
$\|p_i-q_j\|=\delta$.  
We would like to apply  \lemref{varphibound} to this pair of vertices to show
that the distance is preserved up to some constant factor. However, it is possible that the 
pairwise distances in the projection are such that a cheaper traversal is possible 
that avoids the pair $(p_i,q_j)$ altogether.
Therefore, we will apply the lemma to a subset of pairs of vertices of $P$ and $Q$
whose distance is large (e.g. larger than $\Delta=\delta/\theta$ for some small
value of $\theta\geq 1$) and such that the chosen set forms a hitting set for
the set of traversals $\mathcal{T}$.
To this end we introduce the notion of the \emph{guarding set} by the following definition.

\begin{definition}[Guarding set]
\deflab{guardingdiscrete}
For any two polygonal curves $P=\lbrace p_1,\ldots, p_t \rbrace$ and $Q=\lbrace q_1,\ldots, q_t\rbrace$ and a given parameter $\theta\geq 1$, a $\theta$-guarding set $B\subseteq V$ for $P$ and $Q$ is a subset of the set of vertices of $G$ that satisfies the following conditions:
\begin{compactenum}[a)]
\item (distance property) for all $(i,j)\in B$, it holds that $\delta_{i,j} \geq \distFr{P}{Q}/\theta$, and
\item (guarding property) for any traversal $T$ of $P$ and $Q$, it is $T\cap B \neq \emptyset$.
\end{compactenum}
\end{definition}

Note that the set $B$ ``guards'' every traversal of $P$ and $Q$ in the sense
that any path in $G$ from $(1,1)$ to $(t,t)$ has non-empty intersection with
$B$. In other words, $B$ is a hitting set for the set of traversals $\mathcal{T}$. 

For a guarding set $B$ we define the subset of vertices $S_B \subseteq V$ that
can be reached by a path in $G$ starting from $(1,1)$ without visiting a vertex of $B$.
We also define the subset of vertices $H_B= V \setminus (B\cup S_B)$.
A guarding set $B$ thus defines a vertex partition of the graph $G$ into three
subsets $V= S_B \cup B\cup H_B$.

We show the following simple lemma for $d\in \lbrace 2,3\rbrace$, and its counterpart for $d\in \lbrace 4,5\rbrace$, given by \lemref{k:subset:approx45}. 

\begin{lemma}
\lemlab{k:subset:approx}
Given parameter $\theta\geq 1$, if $B$ is a $\theta$-guarding set for the given curves $P=\lbrace p_1,\ldots, p_t \rbrace$ and $Q=\lbrace q_1,\ldots, q_t\rbrace$ from $\mathbb{R}^2$ or $\mathbb{R}^3$, and if $P'$ and $Q'$ are their projections to the straight line $L$, whose support unit vector $\textbf{u}$ is chosen uniformly at random on the unit hypersphere, then for any $\beta>1$ it holds that 
\[
\frac{\distFr{P'}{Q'}}{\distFr{P}{Q}} \geq \frac{1}{\beta\cdot\theta\cdot |B|}
\]
with positive constant probability at least $1-1/\beta$. 
\end{lemma}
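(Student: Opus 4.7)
The plan is to combine the two defining properties of a $\theta$-guarding set with the pointwise projection bound from \lemref{varphibound} via a union bound.

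First, I would observe that the guarding property gives a clean lower bound on $\distFr{P'}{Q'}$ in terms of the projected distances at pairs in $B$. Indeed, for every traversal $T\in\mathcal{T}$, the guarding property guarantees some $(i,j)\in T\cap B$, and hence
\[
\max_{(i,j)\in T}\delta'_{i,j}\;\geq\;\min_{(i,j)\in B}\delta'_{i,j}.
\]
Taking the minimum over $T\in\mathcal{T}$ yields $\distFr{P'}{Q'}\geq \min_{(i,j)\in B}\delta'_{i,j}$. So it suffices to give a high-probability lower bound on every $\delta'_{i,j}$ with $(i,j)\in B$.

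Next, I would set $\varphi=1/(\beta\,|B|)$ and apply \lemref{varphibound} to each of the $|B|$ pairs. For a single pair $(i,j)\in B$ we get
\[
\Pr\!\left[\delta'_{i,j}<\varphi\cdot \delta_{i,j}\right]\;\leq\;\varphi\;=\;\frac{1}{\beta\,|B|}.
\]
A union bound over the at most $|B|$ pairs in $B$ then gives that, with probability at least $1-1/\beta$,
\[
\delta'_{i,j}\;\geq\;\frac{\delta_{i,j}}{\beta\,|B|} \qquad \text{for every } (i,j)\in B.
\]

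Finally, I would plug in the distance property $\delta_{i,j}\geq \distFr{P}{Q}/\theta$ for all $(i,j)\in B$. Combining this with the two displays above, with probability at least $1-1/\beta$,
\[
\distFr{P'}{Q'}\;\geq\;\min_{(i,j)\in B}\delta'_{i,j}\;\geq\;\frac{1}{\beta\,|B|}\cdot\min_{(i,j)\in B}\delta_{i,j}\;\geq\;\frac{\distFr{P}{Q}}{\beta\,\theta\,|B|},
\]
which is exactly the claimed inequality. There is no real obstacle here beyond the observation that the guarding property converts a min-max over traversals into a plain min over $B$; the heart of the argument is choosing $\varphi=1/(\beta|B|)$ so that the union bound over $B$ yields failure probability $1/\beta$. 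Note that the construction is tailored to $d\in\{2,3\}$ only through the invocation of \lemref{varphibound}; the same template with \lemref{varphibound45} in place of \lemref{varphibound} (and an extra factor $1+2/\pi$ in $\varphi$) will prove the $d\in\{4,5\}$ counterpart.
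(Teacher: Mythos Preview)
Your proof is correct and follows essentially the same approach as the paper: set $\varphi=1/(\beta|B|)$, apply \lemref{varphibound} pairwise, take a union bound over $B$, and then combine the guarding property with the distance property to convert the per-pair lower bound into a bound on $\distFr{P'}{Q'}$. Your explicit isolation of the inequality $\distFr{P'}{Q'}\geq\min_{(i,j)\in B}\delta'_{i,j}$ at the outset is a slightly cleaner presentation of the same argument the paper gives, and your remark about the $d\in\{4,5\}$ extension matches the paper's \lemref{k:subset:approx45} exactly.
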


\begin{proof}
Let $\textbf{u}$ be the unit vector which is chosen uniformly at random on the unit hypersphere in $\mathbb{R}^d$ with $d\in\lbrace 2,3\rbrace$, and let $\textbf{u}$ be supported by the projection line $L$. Let $\alpha_{i,j}$ be the angle between $\textbf{u}$ and the vector $q_j-p_i$, for $i,j\in\lbrace 1,\ldots,t\rbrace$. If we consider the distances of the pairs of the points $(p_i,q_j)\in P\times Q$, represented by the elements $(i,j)\in B$, then the probability of the event that some of these distances of the points of $P$ and $Q$ is reduced by a factor greater than $\beta\cdot |B|$ (the ``bad'' event) when projected to $L$ can be bounded by the union bound inequality and by \lemref{varphibound} for $\varphi=\frac{1}{\beta |B|}$ as:
\begin{equation}
\label{eqn:unionbound}
\text{Pr}\left[ \left( \exists (i,j)\in B \right) : \frac{\delta'_{i,j}}{\delta_{i,j}} < \frac{1}{\beta|B|} \right] \leq \sum_{(i,j)\in B} \text{Pr}\left[ \frac{\delta'_{i,j}}{\delta_{i,j}} < \frac{1}{\beta|B|} \right]\leq \sum_{(i,j)\in B}\frac{1}{\beta|B|} = \frac{1}{\beta}
\end{equation}
for any $\beta>1$.

Since by \defref{guardingdiscrete} any traversal $T$ of $P$ and $Q$ has nonempty intersection with $B$, the \Frechet distance of $P$ and $Q$ has to be at least as big as the distance of some pair $(i,j)\in T\cap B$. These pairs of vertices have distance at least $\distFr{P}{Q}/\theta$, and they are going to be reduced at most by the factor $\beta\cdot |B|$ (with positive constant probability). The traversal $T'$ of $P'$ and $Q'$ that realizes $\distFr{P'}{Q'}$ has to contain at least one of the pairs of $B$ by \defref{guardingdiscrete}, since the pairs of the traversal $T'$ are simultaneously the pairs of the traversal $T$ of $P$ and $Q$ (that contains the pairs of the vertices of $P$ and $Q$ in the same order as the pairs of their projections in $P'$ and $Q'$). Thus $\distFr{P'}{Q'}\geq \distFr{P}{Q}/\left( \beta \cdot \theta \cdot |B| \right)$, which proves the lemma.
\end{proof}

\begin{lemma}
\lemlab{k:subset:approx45}
Given parameter $\theta\geq 1$, if $B$ is a $\theta$-guarding set for the given curves $P=\lbrace p_1,\ldots, p_t \rbrace$ and $Q=\lbrace q_1,\ldots, q_t\rbrace$ from $\mathbb{R}^4$ or $\mathbb{R}^5$, and if $P'$ and $Q'$ are their projections to the straight line $L$, whose support unit vector $\textbf{u}$ is chosen uniformly at random on the unit hypersphere, then for any $\beta>1$ it holds that 
\[
\frac{\distFr{P'}{Q'}}{\distFr{P}{Q}} \geq \frac{1}{\left( 1+2/\pi \right) \cdot\beta\cdot\theta\cdot |B|}
\]
with positive constant probability at least $1-1/\beta$. 
\end{lemma}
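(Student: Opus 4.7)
The plan is to mimic the proof of \lemref{k:subset:approx} almost verbatim, the only change being that the per-pair probability bound from \lemref{varphibound} is replaced by the weaker bound from \lemref{varphibound45}, and the target contraction factor is rescaled to absorb the extra $(1+2/\pi)$ factor. Since \lemref{varphibound45} is the only place where the dimension enters, this is the right ``swap-in'' strategy.

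Concretely, I would fix the unit vector $\mathbf{u}$ chosen uniformly at random on the unit hypersphere in $\mathbb{R}^d$ with $d\in\lbrace 4,5\rbrace$, and let $L$ be its supporting line. I would then apply \lemref{varphibound45} with the parameter $\varphi = \frac{1}{(1+2/\pi)\,\beta\,|B|}$ to each pair $(i,j)\in B$. This gives
\[
\text{Pr}\left[\frac{\delta'_{i,j}}{\delta_{i,j}} < \frac{1}{(1+2/\pi)\,\beta\,|B|}\right] \;\leq\; (1+2/\pi)\cdot\varphi \;=\; \frac{1}{\beta\,|B|},
\]
so a union bound over the $|B|$ pairs in the guarding set yields, exactly as in \eqref{eqn:unionbound},
\[
\text{Pr}\left[\exists (i,j)\in B\,:\,\frac{\delta'_{i,j}}{\delta_{i,j}} < \frac{1}{(1+2/\pi)\,\beta\,|B|}\right] \;\leq\; \frac{1}{\beta}.
\]

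To close the argument, I would then invoke the guarding property (\defref{guardingdiscrete}, part b): every traversal $T\in\mathcal{T}$ of $P$ and $Q$ — and in particular the traversal $T'$ that realizes $\distFr{P'}{Q'}$, since traversals of $P'$ and $Q'$ coincide with traversals of $P$ and $Q$ — must contain at least one pair $(i,j)\in B$. By the distance property (part a) each such pair satisfies $\delta_{i,j}\geq \distFr{P}{Q}/\theta$. Therefore, conditioned on the good event of probability at least $1-1/\beta$,
\[
\distFr{P'}{Q'} \;\geq\; \delta'_{i,j} \;\geq\; \frac{\delta_{i,j}}{(1+2/\pi)\,\beta\,|B|} \;\geq\; \frac{\distFr{P}{Q}}{(1+2/\pi)\,\beta\,\theta\,|B|},
\]
which is exactly the claimed inequality.

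I do not anticipate any real obstacle here: all the geometric and combinatorial content (the existence of a traversal of $P'$, $Q'$ that must hit $B$, and the $\distFr{P}{Q}/\theta$ lower bound on vertex distances in $B$) is dimension-independent and was already established for the proof of \lemref{k:subset:approx}. The only dimension-sensitive ingredient is the single-pair projection bound, and \lemref{varphibound45} supplies it with the extra $(1+2/\pi)$ factor that propagates linearly through the union bound and through the final reciprocal. The proof thus reduces to the substitution described above, with no further calculation required.
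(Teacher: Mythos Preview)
Your proposal is correct and matches the paper's own proof essentially line for line: the paper also applies \lemref{varphibound45} with $\varphi = 1/\bigl((1+2/\pi)\,\beta\,|B|\bigr)$, performs the same union bound to get failure probability at most $1/\beta$, and then states that the remainder of the argument is analogous to the proof of \lemref{k:subset:approx}.
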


\begin{proof}
We adapt the proof of \lemref{k:subset:approx} as follows: the probability of the ``bad'' event -- that one of the distances of the points of $P$ and $Q$ is reduced by a factor greater than $\left( 1+2/\pi\right)\cdot \beta \cdot |B|$, when projected to $L$, is bounded by the union bound inequality and \lemref{varphibound45} for $\varphi = 1/\left(\left(  1+2/\pi\right)\cdot \beta \cdot |B|\right)$ as:
\begin{eqnarray*}
\label{eqn:unionbound45}
\text{Pr}\left[ \left( \exists (i,j)\in B \right) : \frac{\delta'_{i,j}}{\delta_{i,j}} < \frac{1}{\left(  1+2/\pi\right)\cdot\beta\cdot|B|} \right] & \leq &\sum_{(i,j)\in B} \text{Pr}\left[ \frac{\delta'_{i,j}}{\delta_{i,j}} < \frac{1}{\left(  1+2/\pi\right)\cdot\beta\cdot|B|} \right] \\
& \leq & \sum_{(i,j)\in B}\frac{1+2/\pi}{\left(  1+2/\pi\right)\cdot\beta\cdot|B|} = \frac{1}{\beta},
\end{eqnarray*}
for any $\beta>1$. The rest of the argumentation of the proof of  \lemref{k:subset:approx45} is analogous to the proof of \lemref{k:subset:approx}.
\end{proof}


Intuitively we think of $\delta'_{i,j}$ as an approximation to $\delta_{i,j}$. \lemref{k:subset:approx} yields a naive $\left(\beta\cdot t^2\right)$-approximation 
for any $\beta>1$ and $\theta=1$. Let $B$ be the set of all pairs $(i,j)\in \lbrace 1,\ldots, t\rbrace \times \lbrace 1,\ldots, t\rbrace$ such that $\|p_i-q_j\|=\delta_{i,j} \geq \distFr{P}{Q}$. In the worst case $B$ could contain all $t^2$ pairs. Set $B$ is a $1$-guarding set. The correctness of the condition a) of \defref{guardingdiscrete} is provided directly by the definition of $B$. The condition b) follows by contradiction. If there would exist some traversal $T$ such that $T\cap B=\emptyset$, then for all pairs $(i,j)\in T$ it would have to hold that $\| p_i-q_j\| <\distFr{P}{Q}$. But then the traversal $T$ would witness that $\distFr{P}{Q}\leq \max_{(i,j)\in T} \| p_i-q_j\| < \distFr{P}{Q}$, a contradiction.

One could obtain better constant $\beta$ by more technical argument, which we omit here.
Clearly, the approximation factor  of \lemref{k:subset:approx} can be improved by the better choice of the set $B$. This question we explore in the following section.

\subsection{Improved analysis for c-packed curves}

In order to ensure that the number of the pairs of the indices that take part in the sum in the union bound inequality in (\ref{eqn:unionbound}) is not quadratic but at most a linear one in terms of the input size, we have to carefully select a small subset that satisfies the guarding set properties.

\subsubsection{Building of the initial guarding set}

We first give the simple construction of a $\theta$-guarding set for any $\theta	\geq 1$ by  Algorithm~\ref{guarding:theta}. 

\begin{algorithm}[ht]
 \KwData{$\delta=\distFr{P}{Q}$, vertex-weighted graph $G=(V,E)$}
 \KwResult{set $B$}
\caption{Computing the $\theta$-guarding set, $\theta\geq 1$}\label{guarding:theta}
 	$B\leftarrow \emptyset$\;
	\If{$\delta_{1,1}\geq \delta/\theta$}{
		$B\leftarrow\lbrace (1,1)\rbrace$
		}
	\Else{
		FIFO-Queue $\mathcal{Q}\leftarrow \lbrace (1,1)\rbrace$\tcc*[r]{Breadth-First-Search on $G=(V,E)$}
		\While{$\mathcal{Q}\neq \emptyset$}{

			$(i,j)\leftarrow \text{pop}(\mathcal{Q})$\;
			\ForEach{$((i,j),(i'j'))\in E$}{			
				\If{$\delta_{i,j}<\delta/\theta$ \emph{and} $\delta_{i',j'}<\delta/\theta$}{\text{push}$(\mathcal{Q},(i',j'))$\;
				}
				\ElseIf{$\delta_{i,j}<\delta/\theta$ \emph{and} $\delta_{i',j'}\geq \delta/\theta$}{
					$B\leftarrow B\cup \lbrace (i',j')\rbrace$\;
				}
			}
		}
	\Return $B$
	}
\end{algorithm}

\begin{lemma}
The set $B$ obtained by Algorithm~\ref{guarding:theta} is a $\theta$-guarding set, for any $\theta\geq 1$.
\lemlab{algguardingset}
\end{lemma}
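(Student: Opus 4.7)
My plan is to verify the two conditions of \defref{guardingdiscrete} directly from the pseudocode of Algorithm~\ref{guarding:theta}. Property (a) is essentially immediate on inspection: the only places where a pair is ever inserted into $B$ are (i) as $(1,1)$ under the explicit guard $\delta_{1,1}\geq \delta/\theta$, or (ii) as a neighbor $(i',j')$ in the ``else if'' branch under the guard $\delta_{i',j'}\geq \delta/\theta$. In both cases the inserted element has weight at least $\delta/\theta$, so property (a) holds.

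The substantive work is in property (b). I would fix an arbitrary traversal $T=((i_1,j_1),\ldots,(i_m,j_m))\in\mathcal{T}$ with $(i_1,j_1)=(1,1)$ and $(i_m,j_m)=(t,t)$. If $\delta_{1,1}\geq\delta/\theta$, then $(1,1)\in T\cap B$ and we are done; so assume $\delta_{1,1}<\delta/\theta$, in which case the algorithm enters the BFS loop. By the definition of discrete \Frechet distance, $\max_{(i,j)\in T}\delta_{i,j}\geq\distFr{P}{Q}=\delta\geq\delta/\theta$, so there is a smallest index $k^*$ with $\delta_{i_{k^*},j_{k^*}}\geq\delta/\theta$; by minimality, $k^*\geq 2$ and $\delta_{i_{k^*-1},j_{k^*-1}}<\delta/\theta$. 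A short induction on $k$ then shows that every prefix vertex $(i_k,j_k)$ with $1\leq k\leq k^*-1$ is eventually pushed into the queue $\mathcal{Q}$: the base case is the initialization $\mathcal{Q}\leftarrow\{(1,1)\}$, and in the inductive step, when the low-weight vertex $(i_k,j_k)$ is popped, the edge $((i_k,j_k),(i_{k+1},j_{k+1}))\in E$ (which exists because $T$ is a path in $G$) falls into the first ``if'' branch, since both endpoints have weight below $\delta/\theta$, and hence $(i_{k+1},j_{k+1})$ is pushed. In particular, $(i_{k^*-1},j_{k^*-1})$ is eventually popped; when this happens, the edge to $(i_{k^*},j_{k^*})$ satisfies $\delta_{i_{k^*-1},j_{k^*-1}}<\delta/\theta\leq\delta_{i_{k^*},j_{k^*}}$ and thereby triggers the ``else if'' branch, inserting $(i_{k^*},j_{k^*})$ into $B$. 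Thus $(i_{k^*},j_{k^*})\in T\cap B$, and property (b) holds.

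The only part that needs any care is the inductive reachability argument: one has to notice that the BFS queue only ever holds vertices of weight strictly below $\delta/\theta$, which is precisely what forces the first high-weight vertex along $T$ to be picked up by the ``else if'' branch and deposited in $B$. Termination of the BFS is not an issue because $G$ is a DAG---every edge strictly increases the index sum $i+j$---so even without an explicit visited-marking the number of pushes is finite.
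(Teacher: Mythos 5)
Your proof is correct and follows essentially the same route as the paper's: both establish property (a) by direct inspection of the two insertion sites, and establish property (b) by an induction that traces a given traversal through the BFS until it must deposit a vertex in $B$. The paper phrases the induction as a running invariant (``at every point during the BFS, any traversal contains a vertex of $B$ or a vertex of $\mathcal{Q}$''), whereas you single out the first index $k^*$ along $T$ with $\delta_{i_{k^*},j_{k^*}}\geq\delta/\theta$ and show by a prefix-reachability induction that $(i_{k^*},j_{k^*})$ is the specific element of $T\cap B$; the extra observation that a high-weight vertex must exist on $T$ (because $\max_{(i,j)\in T}\delta_{i,j}\geq\distFr{P}{Q}$) and the remark on termination via the DAG structure are small but genuine tightenings of the paper's somewhat informal argument, so no gap.
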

\begin{proof}
We have to show that the resulting set $B$ satisfies the conditions of \defref{guardingdiscrete}. In the case that the distance $\delta_{1,1}\geq \delta/\theta$, it suffices to assign $B=\lbrace (1,1)\rbrace$, since any traversal of the curves $P$ and $Q$ has to include the pair $(1,1)$. For the rest of the proof let $\delta_{1,1}< \delta/\theta$.

Algorithm~\ref{guarding:theta} selects into $B$ only the pairs $(i',j')$ with $\delta_{i',j'} \geq \delta/\theta$ in the line 12, and that are reached by an edge from a pair $(i,j)$ with $\delta_{i,j}\leq \delta/\theta$. Thus the condition a) of \defref{guardingdiscrete} is satisfied by the yielded set. 
For the condition b) we show by induction the following invariant: in each point of time during the BFS, any traversal $T$ contains either a vertex of $B$ or a vertex in the queue $\mathcal{Q}$. The BFS starts with $(1,1)\in \mathcal{Q}$ with $\delta_{1,1}<\delta/\theta$. While processing the pair in $(i,j)\in \mathcal{Q}$ with $\delta_{i,j}<\delta/\theta$ during the BFS (lines 7 and 8) the traversal $T$ may use one of the pairs $(i+1,j)$, $(i,j+1)$ or $(i+1,j+1)$ (connected by the edges in $E$). The next pair in the traversal $T$ is either added into $\mathcal{Q}$ (line 10), or added into $B$ (line 12). In both cases the invariant remains valid. Since the queue is empty at the end, this means that any traversal contains a vertex in $B$, as claimed.
\end{proof}

\begin{figure}[ht]
\centering
\raisebox{-.5\height}{\includegraphics[width=0.5\textwidth]{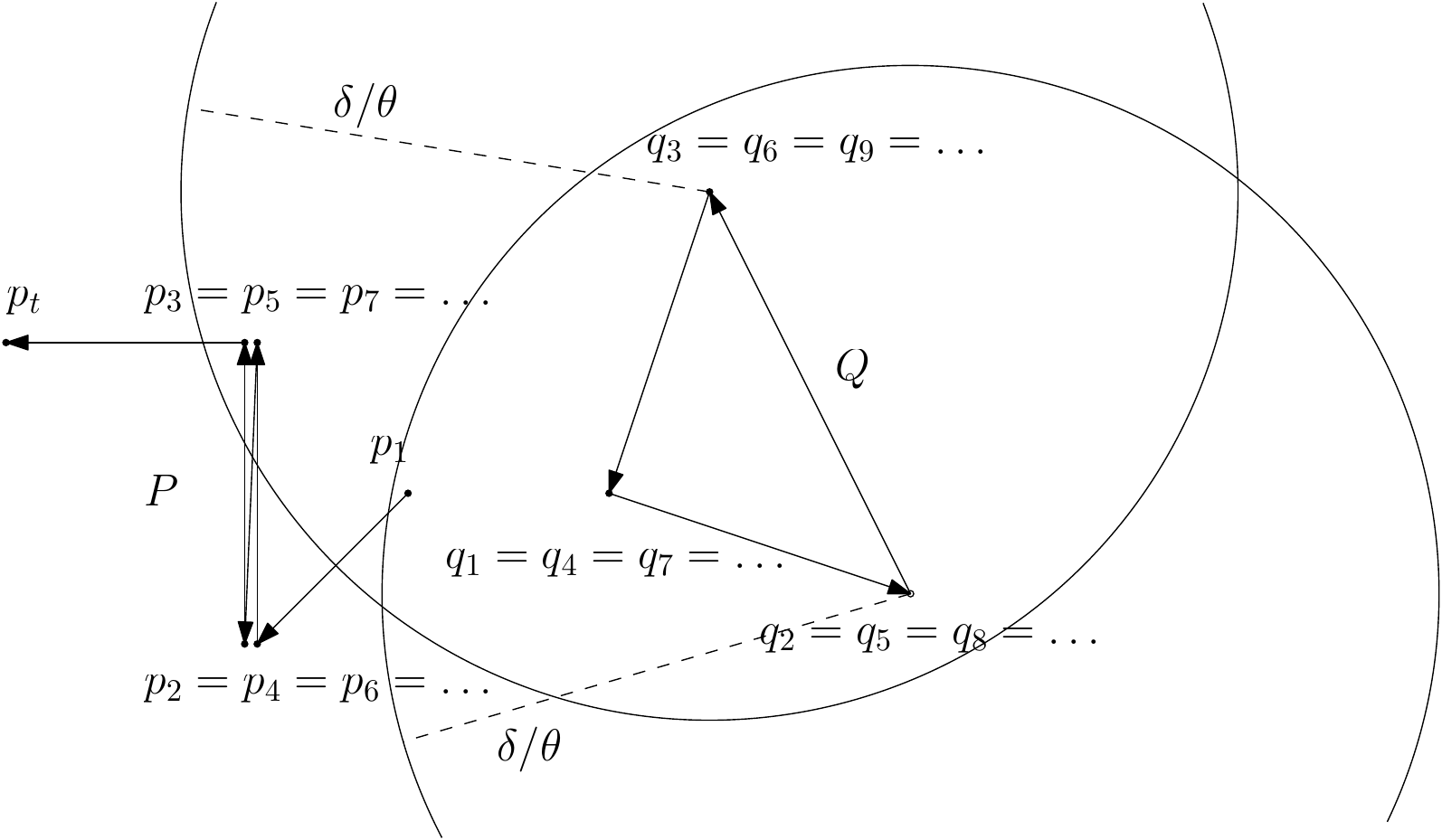}} 
\hspace{0.1cm} 
$F_{\delta/\theta}=\left(\begin{matrix}
\includegraphics[width=0.32\textwidth]{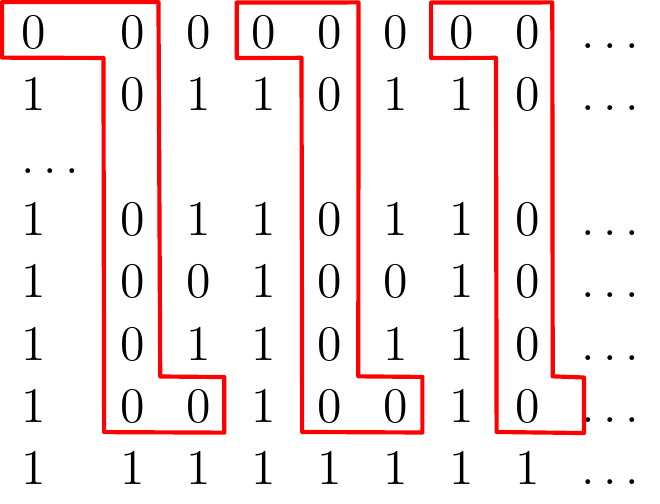}
\end{matrix}\right)$
\caption{The curves $P$ and $Q$ (left) that yield a ``fork-like'' free-space matrix $F_{\delta/\theta}$ for some $\theta\geq 1$ (right). The pairs selected into $B$ by Algorithm~\ref{guarding:theta} are marked with the red bound.}
\figlab{forklikecase}
\end{figure}

Unfortunately, the set $B$ built by Algorithm~\ref{guarding:theta} can have a quadratic number of elements in terms of the input size, like the one in \figref{forklikecase} (marked with the red bound). If the free-space matrix $F_{\delta/\theta}$ would have the ``fork-like'' structure for some $\theta\geq 1$, such that for every column $j$ with $j\mod 3=1$ it holds for all pairs $\delta_{i,j}<\delta/\theta$ and thus $\phi_{i,j}=1$ (except for $\delta_{t,j}\geq \delta/\theta$), and for every column $j$ with $j\mod 3=2$ there are all pairs with $\delta_{i,j}\geq \delta/\theta$ and thus $\phi_{i,j}=0$ (except for $\delta_{1,j}<\delta/\theta$). For the columns with $j \mod 3=0$ let $\phi_{1,j}=1$, $\phi_{2,j}=0$ and $\phi_{t,j}=0$ (the rest may be filled arbitrarily).
Then the set $B$ built by Algorithm~\ref{guarding:theta} would contain $\left( t-1\right)\cdot t/3 =\O{t^2}$ entries. We note that this cannot happen if the curves $P$ and $Q$ are $c$-packed for some constant $c$, $c\geq 2$, as it will be discussed in the further text.

\subsubsection{On the structure of the distance matrix}
\seclab{importantpairs}

\lemref{numberofbarriercellsincolumn} states one property of the $c$-packed curves, which we apply in \lemref{comparationofareas}.

\begin{lemma}
\lemlab{numberofbarriercellsincolumn}
Given point $p$ and a $c$-packed curve $Q=\lbrace q_1,\ldots,q_t \rbrace$ from $\mathbb{R}^d$, 
then for any value $b>0$
there exists a value $r \in [b/2,b]$, such that the hypersphere centered at $p$ with radius $r$ intersects 
or is tangent to
at most $2c$ edges of $Q$.
\end{lemma}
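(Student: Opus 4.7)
The plan is a standard averaging argument: I will bound the integral over $r \in [b/2,b]$ of the ``number of edges of $Q$ meeting the sphere of radius $r$ centered at $p$'' by the total length of $Q$ inside $B(p,b)$, and then invoke the $c$-packedness assumption together with pigeonhole.

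Concretely, let $g(r)$ denote the number of edges of $Q$ that intersect or are tangent to the sphere $S(p,r)$. For each edge $e$ of $Q$, parametrize $e$ by arc length $s \in [0,L_e]$ and consider the function $d_e(s) = \|p - e(s)\|$. The key point is that $d_e$ is $1$-Lipschitz (and in fact convex, as the norm of an affine function), so its range is an interval $J_e = [r_{\min}(e), r_{\max}(e)]$ with $|J_e| \leq L_e$. The sphere $S(p,r)$ meets $e$ precisely when $r \in J_e$, so
\[
\int_{b/2}^{b} g(r)\, dr \;=\; \sum_{e \in E(Q)} \bigl|J_e \cap [b/2, b]\bigr|.
\]

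The main step is to upper bound each summand by the length of $e$ contained in the annulus $A = \{x : b/2 \leq \|x - p\| \leq b\}$. This is where I will use a coarea-type inequality: because $d_e$ is $1$-Lipschitz, for any interval $I$,
\[
\mathrm{length}(e \cap d_e^{-1}(I)) \;\geq\; \bigl|I \cap J_e\bigr|,
\]
which one sees by splitting $e$ into the (at most two) monotone pieces on which $d_e$ is injective and $1$-Lipschitz, so that each piece's length is at least the Lebesgue measure of its image, and the images together cover $I \cap J_e$. Summing over edges and using that the annulus $A$ is contained in the ball $B(p,b)$, together with $c$-packedness (\defref{c_packed_curve}), gives
\[
\int_{b/2}^{b} g(r)\, dr \;\leq\; \sum_{e} \mathrm{length}(e \cap A) \;=\; \mathrm{length}(Q \cap A) \;\leq\; \mathrm{length}(Q \cap B(p,b)) \;\leq\; c\, b.
\]

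To finish, I average: the interval $[b/2, b]$ has length $b/2$, so there must exist some $r^* \in [b/2, b]$ with $g(r^*) \leq cb / (b/2) = 2c$, as required. The only mildly delicate part is the coarea inequality, since $d_e$ need not be monotone on $e$ (when the foot of the perpendicular from $p$ to the supporting line lies inside $e$); using convexity of $d_e$ to split $e$ into at most two monotone pieces handles this cleanly.
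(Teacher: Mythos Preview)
Your proof is correct and follows essentially the same approach as the paper: both arguments bound the total ``crossing count'' of spheres $S(p,r)$ for $r\in[b/2,b]$ by the length of $Q$ inside $B(p,b)$ via the $1$-Lipschitz property of the radial distance along each edge, then invoke $c$-packedness and average. The paper phrases the same averaging as a proof by contradiction and discretizes the integral using explicit ``event points'' (vertices of $Q$ and feet of perpendiculars from $p$), whereas you write it directly as $\int_{b/2}^{b} g(r)\,dr \le cb$ together with a coarea inequality; the underlying idea is identical.
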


\begin{proof}
Assume for the sake of contradiction that there exists $c'>2c$, such that for any $r\in [b/2,b]$ there are at least $c'$ edges of $Q$ that intersect or are tangent the surface of the hypersphere $\ball{p}{r}$. Let the \emph{event points} be the points in $\ball{p}{b}\setminus \ball{p}{b/2}$, such that they are either
\begin{compactenum}[i)]
\item vertices $q_i$ of $Q$ or 
\item the points $q'\in \overline{q_i q_{i+1}}$, such that $\overline{pq'}\perp \overline{q_i q_{i+1}}$.
\end{compactenum}
Let the set of events be $\mathcal{R}=\lbrace R_1,\ldots, R_\ell\rbrace$, and let $r_i=\|p-R_i\|$ for all $1\leq i\leq \ell$. We may assume that the events $R_i$ are sorted ascending by $r_i$. Let $r_0=b/2$ and $r_{\ell+1}=b$, thus $r_0\leq r_1\leq \ldots \leq r_{\ell+1}$.

The number of the edges of $Q$ that intersect or are tangent to $\ball{p}{r}$ is equal for all $r'\in \left[ r_i,r_{i+1}\right)$ and for all $0\leq i\leq \ell$, since the number of such edges changes only in event points. After assumption there are at least $c'$ edges of $Q$ that intersect $\ball{p}{r'}$, for any $r'\in \left[ r_i,r_{i+1}\right)$ and for any $0\leq i\leq \ell$. The length of the curve $Q$ within $\ball{p}{b}\setminus \ball{p}{b/2}$ is
\[
\sum_{i=0}^\ell \| Q\cap \left( \ball{p}{r_{i+1}}\setminus \ball{p}{r_i} \right)\| = \| Q\cap \left(\ball{p}{b}\setminus \ball{p}{\frac{b}{2}}\right) \|  \leq c\cdot b
\]
since $Q$ is $c$-packed. But on the other side it is
\[
\sum_{i=0}^\ell \| Q\cap \left( \ball{p}{r_{i+1}}\setminus \ball{p}{r_i} \right)\| \geq \sum_{i=0}^\ell c'\cdot |r_{i+1}-r_i| = c'\cdot \left( b-\frac{b}{2}\right) > c\cdot b,
\]
a contradiction.
\end{proof}

\begin{lemma}
\lemlab{comparationofareas}
Given point $p$ and a $c$-packed curve $Q=\lbrace q_1,\ldots,q_t \rbrace$ from $\mathbb{R}^d$, and given
a value $b>0$, then for any pairwise disjoint set of intervals
\[ I \subseteq \{ [i_1,i_2] ~|~ i_1 \leq i_2 \in \Na, 1 \leq i_1 \leq i_2 \leq t \} \]
with $d(p,q_i) \geq b$ for all $i \in [i_1,i_2] \in I$, there exists a value of $r \in [b/2,b]$ and a 
pairwise disjoint set of intervals 
\[ J \subseteq \{ [j_1,j_2] ~|~ j_1 \leq j_2 \in \Na, 1 \leq j_1 \leq j_2 \leq t \} \]
with the following properties:
\begin{compactenum}[(i)]
\item $|J| \leq c+1$
\item $~\forall~ [j_1,j_2] \in J~ \exists~ i_1 \leq i_2 < i_3 \leq i_4 ~:~ [i_1,i_2], [i_3,i_4] \in I \wedge j_1=i_1 \wedge j_2=i_4 $
\item $~\forall~ i \in [j_1,j_2] \in J ~:~ d(p,q_i) \geq r$
\end{compactenum}
\end{lemma}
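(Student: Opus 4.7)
The plan is to apply \lemref{numberofbarriercellsincolumn} with the given value $b$ to obtain a radius $r\in [b/2,b]$ such that at most $2c$ edges of $Q$ intersect or are tangent to the sphere $\partial\ball{p}{r}$. Once $r$ is fixed, I label each vertex $q_j$ as \emph{far} if $\|p-q_j\|\geq r$ and \emph{close} otherwise. Every edge $\overline{q_j q_{j+1}}$ with one far and one close endpoint necessarily crosses $\partial\ball{p}{r}$, hence the number of far/close transitions along the sequence $q_1,\ldots,q_t$ is at most $2c$. The vertex sequence therefore decomposes into at most $c+1$ maximal runs of far vertices, and distinct far runs are separated from each other by at least one close vertex.

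Since every interval $[i_1,i_2]\in I$ consists of vertices with $d(p,q_i)\geq b\geq r$, each interval of $I$ is entirely far and is thus contained in a unique maximal far run. I construct $J$ as follows: for every maximal far run $R$ that contains at least two intervals of $I$, let $[i_1,i_2]$ be the leftmost and $[i_3,i_4]$ be the rightmost such $I$-interval inside $R$, and place $[i_1,i_4]$ into $J$; maximal far runs containing zero or one interval of $I$ contribute nothing. Property~(i) follows because there are at most $c+1$ maximal far runs in total. Property~(ii) is immediate from the construction: both endpoints of every $[j_1,j_2]\in J$ are by definition endpoints of intervals of $I$, and $i_2<i_3$ follows from $[i_1,i_2]\neq[i_3,i_4]$ and the pairwise disjointness of $I$. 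Property~(iii) holds because $[i_1,i_4]\subseteq R$ and every vertex inside a maximal far run is at distance at least $r$ from $p$. Pairwise disjointness of $J$ is inherited from the disjointness (and even separation) of the maximal far runs.

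The only nontrivial ingredient is the bound of $c+1$ on the number of maximal far runs, which is exactly where \lemref{numberofbarriercellsincolumn} and the $c$-packedness of $Q$ enter. The most delicate point to phrase cleanly is that edges which merely pass through $\ball{p}{r}$ while keeping both endpoints on the same side of $\partial\ball{p}{r}$ induce no far/close transition, so the bound of $2c$ on edges intersecting or tangent to the sphere really does dominate the number of vertex transitions; the standard alternating-run count then caps the number of far runs at $\lceil (2c+1)/2\rceil\leq c+1$. Everything else is straightforward combinatorial bookkeeping, and I expect no genuine obstacle beyond carefully recording this edge-to-transition correspondence.
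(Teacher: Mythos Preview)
Your proof is correct and follows essentially the same approach as the paper: both invoke \lemref{numberofbarriercellsincolumn} to fix the radius $r$, and both merge intervals of $I$ that lie in the same stretch of ``far'' vertices, bounding the number of resulting groups by $c+1$ via the $2c$ bound on sphere-crossing edges. Your phrasing in terms of maximal far runs is a clean reformulation of the paper's iterative merging procedure, and you are in fact slightly more careful than the paper about property~(ii), since you explicitly discard runs containing fewer than two $I$-intervals.
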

\begin{proof} 
We set $r$ to be the value of the same variable as in \lemref{numberofbarriercellsincolumn}.  
Now we construct the set $J$ by merging intervals of $I$ as
follows. Initially $J$ is empty. We iterate over the intervals of $I$ in the order of their starting
points. Consider the first interval $[i_1,i_2]$ and the next interval in the
order $[i_3,i_4]$, we merge them into one interval $[i_1,i_4]$ if there exists
no point $q_j$ with $i_2 < j < i_3$ such that $d(p,q_j) < r$.  We continue
merging this interval with the intervals in $I$ until we found a point $q_j$
such that $d(p,q_j) < r$. Then, we add the current merged interval to $J$ and
take the next interval from $I$ and merge it with the proceeding intervals in
the same manner.  When there are no intervals left in $I$, we also add the
current interval to $J$.  Each time we add an interval to $J$ (except possibly
for the last one), we encountered two edges of $Q$ that intersect the sphere of
radius $r$ centered at $p$.  By \lemref{numberofbarriercellsincolumn} we
have added at most $c+1$ intervals to $J$ (including the last interval).  The
other properties stated in the lemma follow by construction of $J$. \figref{casesthetafig} illustrates the merging process. 
\end{proof}

\begin{figure}[ht]
\centering
\includegraphics[width=0.6\textwidth]{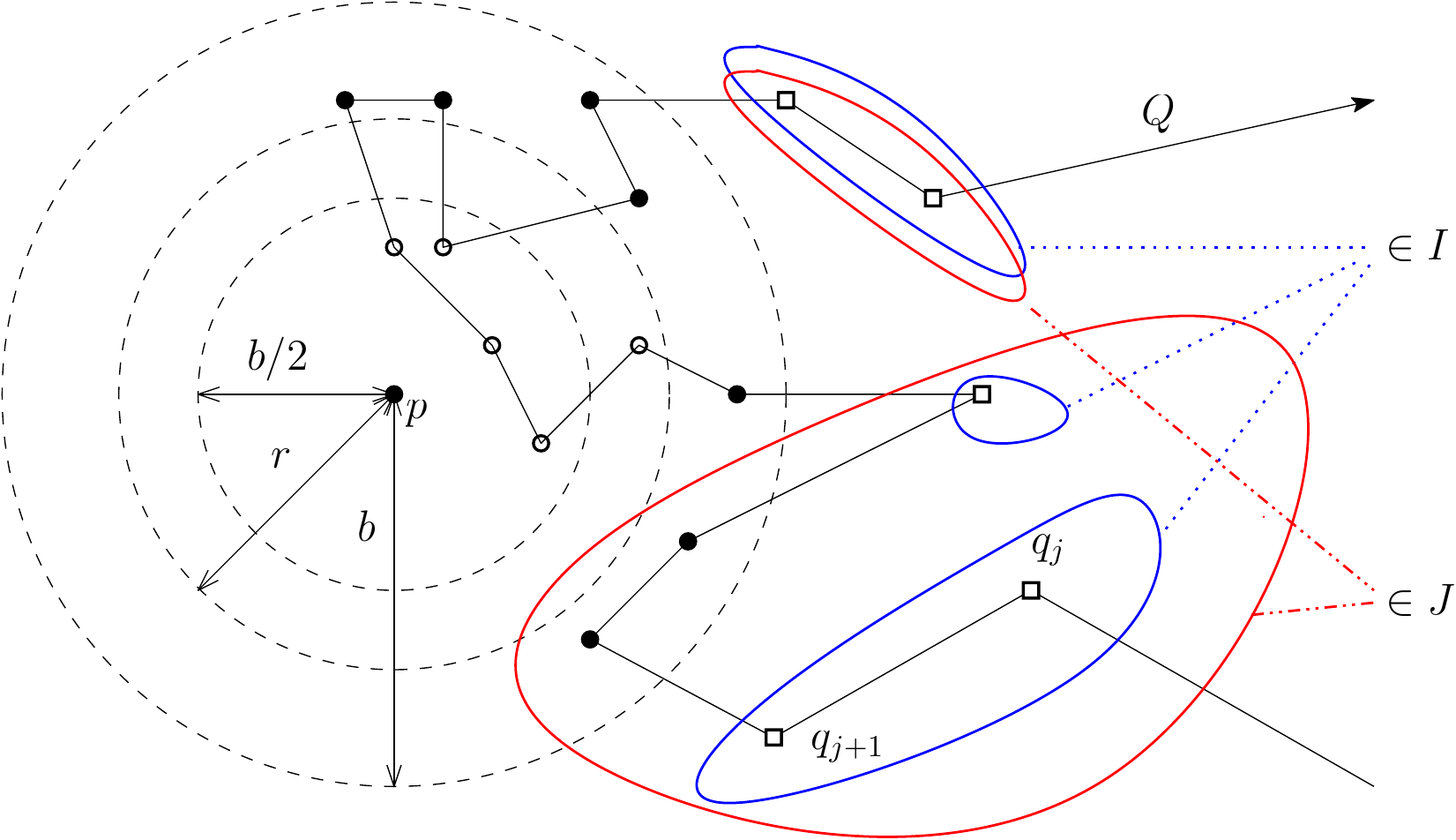}
\caption{The process of \lemref{comparationofareas} for the vertex $p$ and the curve $Q$}
\figlab{casesthetafig}
\end{figure}

\subsubsection{Avoidable pairs}

\begin{definition}[Avoidable pair]
\deflab{avoidablecell}
Let $B$ be the $\theta$-guarding set produced by Algorithm~\ref{guarding:theta}, and let $V=S_B\cup B\cup H_B$ be the partition of $V$ implied by $B$. The pair $(i,j)\in B$ is called avoidable if there exist a pair $(i',j')\in B$ and two partial traversals $T_1$ and $T_2$ of $P$ and $Q$ from $(1,1)$ to $(i',j')$, such that:
\begin{compactenum}[i)]
\item $\forall (i'',j'')\in \left( T_1\cup T_2\right) \setminus \lbrace (i',j') \rbrace$ it holds that $(i'', j'')\in S_B$,
\item there exist pairs $(i,y_1)\in T_1$ and $(i,y_2)\in T_2$, with $y_1<j<y_2$,
\item there exist pairs $(x_1,j)\in T_2$ and $(x_2,j)\in T_1$, with $x_1<i<x_2$.
\end{compactenum}
\end{definition}

We notice that for the pair to be avoidable, it suffices to have the conditions $i)$ and $ii)$, or $i)$ and $iii)$, since the remaining condition is implied by the monotonicity of the traversals. The definition of the avoidable pair $(i,j)$ implies that any partial traversal of $P$ and $Q$ from $(i,j)$ to $(t,t)$ has to have a nonempty intersection with $T_1\cup T_2$. 


\figref{obsoleteavoidable} shows the pairs selected by Algorithm~\ref{guarding:theta} into the $\theta$-guarding set $B$, for some $\theta \geq 1$, marked with polygonal red and blue bounds. The pairs within the red bound are avoidable, and the pairs within the blue bound are not. Two partial traversals $T_1$ and $T_2$ in $S_B$ that make the red bounded pairs avoidable (as in \defref{avoidablecell}) are marked by arrows.

\begin{figure}[ht]
\centering
\[
F_{\delta/\theta} = \left(\begin{matrix}
\includegraphics[width=0.33\textwidth]{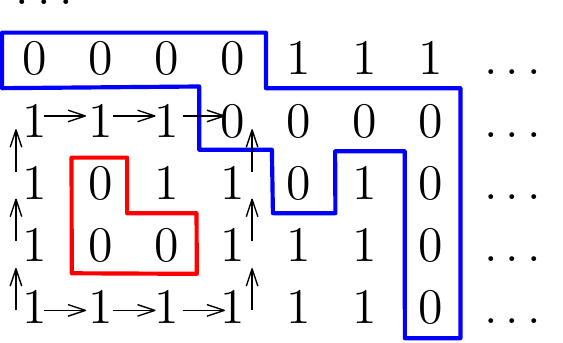}
\end{matrix}\right)
\]
\caption{Avoidable pairs from the $\theta$-guarding set $B$ (for some $\theta\geq 1$) are marked with red bound. Not avoidable pairs are marked with blue bound.
}
\figlab{obsoleteavoidable}
\end{figure}


\begin{lemma}
\lemlab{trimminglemma2}
Given parameter $\theta\geq 1$ and the $\theta$-guarding set $B$. 
Let $B'\subseteq B$ be the set of the avoidable pairs. 
Then $B\setminus B'$ is a $\theta$-guarding set.
\end{lemma}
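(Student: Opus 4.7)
Property (a) is immediate since $B\setminus B'\subseteq B$ inherits all distance lower bounds from $B$. The content of the lemma therefore lies in property (b): every traversal $T$ of $P$ and $Q$ must hit $B\setminus B'$. I would argue by contradiction together with a minimality argument. Suppose some traversal avoids $B\setminus B'$; among all such traversals, pick one $T$ minimizing $|T\cap B|$. Since $B$ is itself a $\theta$-guarding set by \lemref{algguardingset}, $T\cap B\ne\emptyset$, and by the assumption on $T$ every element of $T\cap B$ lies in $B'$. Fix any $(i,j)\in T\cap B$; it is therefore avoidable.

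Unpacking \defref{avoidablecell}, we get witnesses $(i',j')\in B$ and two partial traversals $T_1,T_2$ from $(1,1)$ to $(i',j')$, all of whose vertices other than $(i',j')$ lie in $S_B$. Two preliminary observations follow from monotonicity combined with conditions (ii) and (iii). First, the fact that $T_1$ contains both $(i,y_1)$ with $y_1<j$ and $(x_2,j)$ with $x_2>i$ forces $i'>i$, and symmetrically $j'>j$; in particular $(i',j')\ne(i,j)$. Second, $(i,j)\notin T_1\cup T_2$: otherwise $(i,j)\in(T_1\cup T_2)\setminus\{(i',j')\}\subseteq S_B$, contradicting $(i,j)\in B$ and $S_B\cap B=\emptyset$. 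Thus $T_1\cup T_2$ genuinely forms a closed loop in the grid from $(1,1)$ to $(i',j')$ with $(i,j)$ strictly in its interior.

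The geometric heart of the argument is the claim that the suffix of $T$ starting at $(i,j)$ must visit some vertex $(a,b)\in T_1\cup T_2$ with $(a,b)\ne(i,j)$. Informally, $T$ starts inside the loop and must eventually reach $(t,t)$, which is either $(i',j')$ itself (on the loop) or lies strictly outside the bounding rectangle $[1,i']\times[1,j']$; monotonicity then forces $T$ to share a vertex with $T_1$ or $T_2$. Given such an $(a,b)$, I would splice a new traversal $T''$ by replacing the prefix of $T$ up to $(a,b)$ with the prefix of $T_1$ (or $T_2$) ending at $(a,b)$. By condition (i) this alternate prefix uses only vertices of $S_B\cup\{(a,b)\}$, hence avoids $B$ except possibly at $(a,b)=(i',j')$, in which case $(i',j')$ already belongs to $T\cap B$. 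A routine check then gives $T''\cap B\subseteq(T\cap B)\setminus\{(i,j)\}$ and $T''\cap(B\setminus B')=\emptyset$, contradicting the minimality of $|T\cap B|$.

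The main obstacle I anticipate is formalizing the crossing step for monotone grid paths without directly invoking the planar Jordan curve theorem. I would handle it by tracking column-by-column the rows occupied by $T$ against those of $T_1$ and $T_2$: at column $j$, $T_1$ lies strictly below $(i,j)$ and $T_2$ strictly above by conditions (ii) and (iii), and monotonicity propagates these inequalities forward so that $T$ cannot escape the rectangle $[1,i']\times[1,j']$, nor reach $(i',j')$, without first coinciding with a vertex of $T_1\cup T_2$.
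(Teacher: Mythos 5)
Your proof is correct and arrives at the same splice-based contradiction as the paper, but via a cleaner bookkeeping device. The paper also fixes an $(i,j)$ and splices the prefix of one of $T_1,T_2$ onto the suffix of $T$, but it needs two extra ingredients you avoid: it picks $(i,j)$ to be the \emph{last} avoidable pair along $T$ (so the spliced suffix of $T$ is $B$-free), and it runs a ``chain argument'' --- iterating the witness $(i',j')$ through a sequence $(i_1,j_1),\dots,(i_\ell,j_\ell)$ of avoidable pairs until landing on one in $B\setminus B'$ --- so that $(i',j')\notin T$ and hence the splice point cannot be $(i',j')$. Your minimality criterion (minimize $|T\cap B|$) replaces both: any $(i,j)\in T\cap B$ works, and the possibility that the splice point $(a,b)$ equals $(i',j')$ is harmless because $(i',j')$ was then already in $T\cap B$, so $T''\cap B\subsetneq T\cap B$ still holds. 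What your route gives up is the fully explicit construction of a $B$-avoiding traversal; what it buys is eliminating the chain-of-witnesses step and the need to order the avoidable pairs. Both arguments share the same genuinely nontrivial geometric step: that the portion of $T$ after $(i,j)$ must meet $T_1\cup T_2$. The paper states this without proof (in the remark following \defref{avoidablecell} and in ``there has to exist at least one such pair''), and you correctly flag it and sketch a column-by-column monotonicity argument; making that crossing lemma fully rigorous is the one piece neither account spells out, but it is not a gap specific to your variant.
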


\begin{proof}
The validity of the condition a) of \defref{guardingdiscrete} for the set $B\setminus B'$ is inherited from the set $B$. In order to prove the condition b), for the sake of contradiction let there exist a traversal $T$ of $P$ and $Q$ such that $T\cap \left( B\setminus B'\right) =\emptyset$. Since by \lemref{algguardingset} the traversal $T$ of $P$ and $Q$ satisfies $T\cap B\neq\emptyset$, there exists $(i,j)\in T\cap B'$, and we may assume that $(i,j)$ is the last such avoidable pair along $T$. Let $(i',j')\in B\setminus B'$, 
$T_1$ and $T_2$ be respectively the pair in $B$ and 
two traversals from \defref{avoidablecell} that make the pair $(i,j)$ avoidable. 

We may assume that $(i',j')$ is in $B\setminus B'$. To see this 
let $(i,j)=(i_1,j_1),(i_2,j_2),\ldots, (i_\ell, j_\ell)$ be the sequence of the pairs of indices, such that for all $m\in\lbrace 1,\ldots, \ell-1\rbrace$: 
\begin{compactenum}[a)]
\item the pair $(i_m,j_m)\in B'$; 
\item the pair $(i_{m+1},j_{m+1})$ makes the pair $(i_m,j_m)$ avoidable (from \defref{avoidablecell}); and
\item the pair $(i_\ell,j_\ell)\in B\setminus B'$.
\end{compactenum}
Such index $\ell$ has to exist, since it follows from \defref{avoidablecell} and from the monotonicity of traversals,  that $i_1<i_2<\ldots\leq n$ and $j_1<j_2<\ldots\leq n$. The partial traversals $T_1^{(\ell)}$ and $T_2^{(\ell)}$ from $(1,1)$ to $(i_\ell,j_\ell)$ given by \defref{avoidablecell}, that make the pair $(i_{\ell-1},j_{\ell-1})$ avoidable, satisfy the conditions of \defref{avoidablecell} for the pair $(i,j)$ as well. We assign $(i',j')=(i_\ell,j_\ell)\in B\setminus B'$, and thus it holds that $(i',j')\notin T$. 

Let $(\hat{i},\hat{j})\in T\cap (T_1\cup T_2)$ be the last such pair along $T$ (there has to exist at least one such pair, w.l.o.g let it be in $T_1$). We construct the traversal $T'$ of $P$ and $Q$ out of the partial traversal of $T_1$ from $(1,1)$ to $(\hat{i},\hat{j})$ and the partial traversal of $T$ from $(\hat{i},\hat{j})$ to $(t,t)$. For the pairs $(i'',j'')\in T'\cap T_1$ it holds by \defref{avoidablecell} that $(i'', j'')\in S_B$. Thus $(T'\cap T_1)\cap B=\emptyset$, since $B\cap S_B=\emptyset$.

But since $T\cap B=\emptyset$, it is also $(T'\cap T) \cap B =\emptyset$. Therefore for the traversal $T'$ it holds that $T'\cap B=\emptyset$. This contradicts the assumption that $B$ was the $\theta$-guarding set, and proves that the condition b) of \defref{guardingdiscrete} holds. Thus $B\setminus B'$ is a $\theta$-guarding set.
\end{proof}


\subsubsection{Trimming the reachable area of a guarding set}

Let $B$ be a $1$-guarding set for two curves $P$ and $Q$.  We now want
to modify $B$ to shrink the number of pairs while maintaining the guarding
property. It turns out that we can do this if we relax the approximation
quality of the guarding set (which we denoted with $\theta$). We perform this trimming in three phases: 
\begin{compactenum}[(1)]
\item Remove all avoidable pairs from $B$. 
\item Trim the reachable area of $B$ row by row. 
\item Trim the reachable area of $B$ column by column.
\end{compactenum}

In the following, we describe the trimming operation on a single row.  Consider a vertex
$p_i$ of the curve $P$ and consider the intersection of $B$ with the row of the
distance matrix associated with $p_i$. Let $I_i$ denote the set of intervals of
the column indices that represent this intersection. We now apply
\lemref{comparationofareas} with parameter $b=\distFr{P}{Q}$ to obtain
a set of intervals $J_i$ that can be used to trim the reachable area of $B$
with respect to the $i$th row. Each interval
in $J_i$ covers a set of intervals of $I_i$. Let $A_i$ be the subset of pairs
of the $i$th row of which the column index is 
contained in an interval of $J_i$, but not contained in any interval of $I_i$.
We call $A_i$ the \emph{filling pairs} of the row.
We now want to trim the reachable area $S_B$ defined by $B$ along the vertices of 
the reachability graph which correspond to pairs of $A_i$. For this we will 
remove all vertices of $B_i$ that are reachable from $A_i$ and add the 
pairs of $A_i$ to $B$.
See Algorithm~\ref{cutting:reach:row} for the pseudocode of this trimming operation. \figref{cuttingarea} illustrates the process with an example.
  The trimming operation for a single column is analogous, except that we use
  $b=\distFr{P}{Q}/2$ as a parameter to \lemref{comparationofareas}.

\begin{algorithm}[ht]
 \KwData{guarding set $B$, row index $i$, value of $b>0$}
 \KwResult{modified guarding set $B$}
 \caption{Trimming the reachable area for one row}\label{cutting:reach:row}
                $I_i := \{[j,j] ~|~ (i,j) \in B\}$\tcc*[r]{pairs of $B$ in the $i$th row}
                Let $J_i$ be the set of intervals obtained from \lemref{comparationofareas} using $I_i$ and $b=\distFr{P}{Q}$\;
                $A_i := S_B \cap \left\{(i,j)~|~ j \in \left(\bigcup_{[j_1,j_2] \in J_i} [j_1,j_2] \setminus \bigcup_{[i_1,i_2] \in I_i} [i_1,i_2] \right) \right\}$\tcc*[r]{Compute filling pairs}
 		FIFO-Queue $\mathcal{Q}\leftarrow A_i$\tcc*[r]{find guarding pairs reachable from $A_i$ via BFS}
		\While{$\mathcal{Q}\neq \emptyset$}{
			$(i,j)\leftarrow \text{pop}(\mathcal{Q})$\;
			\ForEach{$(i',j') \in \{ (i+1,j), (i+1,j+1)\}$}{			
				\If{$(i',j')\in B \setminus \mathcal{Q}$ }{$B \leftarrow B \setminus \lbrace (i',j')\rbrace$
                                \tcc*[r]{remove them from $B$}
				}
				\Else{
					push($\mathcal{Q},(i',j')$)\;
				}
			}
		}
        	$B \leftarrow B \cup A_i$\tcc*[r]{add pairs of $A_i$ to $B$}
\end{algorithm}

\begin{figure}[ht]
\centering
\[
F_{b}^{\text{before}} = \left(\begin{matrix}
\ldots& & & & & \\
0&0&\fbox{0}&\fbox{0}&\fbox{0}&\ldots\\
0&\fbox{0}&1&1&\fbox{0}&\ldots\\
0&1&1&\fbox{0}&0&\ldots\\
\fbox{0}&1&1&\fbox{0}&\fbox{0}&\ldots\\
1&1&1&1&1&\ldots
\end{matrix}\right)
\hspace*{1cm}
F_{b/2}^{\text{after}} = \left(\begin{matrix}
\ldots& & & & & \\
0&0&\circled{0}&\circled{0}&\circled{0}&\ldots\\
0&\circled{0}&1&1&\circled{0}&\ldots\\
0&1&1&\circled{0}&0&\ldots\\
\fbox{0}&\fbox{0}&\fbox{0}&\fbox{0}&\fbox{0}&\ldots\\
1&1&1&1&1&\ldots
\end{matrix}\right)
\]
\caption{The elements of a guarding set (marked with boxes) before (left) and after  (right) applying of Algorithm~\ref{cutting:reach:row} to the second row. The removed pairs are marked by circles}
\figlab{cuttingarea}
\end{figure}

\begin{lemma}
Let $B$ be a $1$-guarding set.
\begin{compactenum}[(i)]
\item After the first phase of the algorithm, which removes all avoidable pairs,
the modified set $B$ is a $1$-guarding set.  
\item After the second phase of the algorithm, which applies the trimming
operation to each row with $b=\distFr{P}{Q}$, the modified set $B$ is a
$2$-guarding set. 
\item After the third phase of the algorithm, which applies the trimming operation to each column with 
$b=\distFr{P}{Q}/2$, the modified set $B$ is a $4$-guarding set. 
\end{compactenum}
\lemlab{trimmingalgo}
\end{lemma}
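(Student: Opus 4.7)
The plan is to address the three parts of the lemma separately. Part (i) is an immediate corollary of \lemref{trimminglemma2} applied with $B'$ equal to the set of avoidable pairs of $B$: the lemma directly asserts that $B \setminus B'$ remains a $\theta$-guarding set for the same $\theta$, here $\theta = 1$. Throughout the remainder, the distance property will be supplied cheaply by \lemref{comparationofareas}, while the guarding property will be the main obstacle, argued inductively over the rows (respectively columns) trimmed by Algorithm~\ref{cutting:reach:row}.

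For part (ii), the distance bound $\delta_{i,j} \geq \distFr{P}{Q}/2$ holds on every surviving pair: retained original pairs trivially satisfy $\delta_{i,j} \geq \distFr{P}{Q}$, and every filling pair $(i,j) \in A_i$ added when trimming row $i$ satisfies $\delta_{i,j} = d(p_i,q_j) \geq r \geq b/2 = \distFr{P}{Q}/2$ by the construction of $J_i$ via \lemref{comparationofareas} applied to the point $p_i$ and the curve $Q$ with $b = \distFr{P}{Q}$. The guarding property is then proved by induction over the rows processed, with the induction base being the input $1$-guarding set produced by phase 1.

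The hard part is this induction step, and I would argue it by contradiction. Suppose a traversal $T$ avoids the updated $B^{\mathrm{new}}$ after row $i$ is trimmed. I would first extract three structural observations: (a) every $T$-hit with $B^{\mathrm{old}}$ in a row $\leq i$ would be untouched by the BFS and hence persist in $B^{\mathrm{new}}$, forcing $T$'s prefix through row $i$ to lie in $S_{B^{\mathrm{old}}}$; (b) the row-$i$ columns $[y_1,y_2]$ visited by $T$ are disjoint from $\bigcup J_i$, since any $y_k \in \bigcup J_i$ would place $(i, y_k)$ either in $B^{\mathrm{old}}$ (retained, as row $i$ is untouched by the BFS) or in $A_i$ (added), both contained in $B^{\mathrm{new}}$; and (c) any first hit $(i^*,j^*) \in T \cap B^{\mathrm{old}}$ lies in a row $> i$ and was removed by the BFS, exposing a down-right path from some $(i,j_0) \in A_i$ with $j_0 \in \bigcup J_i$, through $S_{B^{\mathrm{old}}}$, to $(i^*,j^*)$. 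Combining this BFS path with a prefix from $(1,1)$ to $(i,j_0)$ in $S_{B^{\mathrm{old}}}$ yields a second partial traversal from $(1,1)$ to $(i^*,j^*)$, passing through row $i$ in a column $j_0 \neq y_2$. The plan is then to extend both this new partial traversal and the prefix of $T$ along $T$'s tail to a common $B^{\mathrm{old}}$ endpoint; the monotonicity of the two paths combined with $j_0 \neq y_2$ should verify the row- and column-surround conditions of \defref{avoidablecell} for a $B^{\mathrm{old}}$-pair enclosed by them, contradicting the fact (inherited from part (i)) that $B^{\mathrm{old}}$ has no avoidable pairs.

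Part (iii) is handled by the same template, transposed to columns and with $b = \distFr{P}{Q}/2$. The choice of $b$ is licensed because entering phase 3 the set $B$ is already a $2$-guarding set, so every pair of $B$ has distance at least $\distFr{P}{Q}/2 = b$, satisfying the hypothesis of \lemref{comparationofareas} now applied to $q_j$ and the curve $P$. The new filling pairs satisfy $\delta_{i,j} \geq r \geq b/2 = \distFr{P}{Q}/4$, and the symmetric guarding argument yields a $4$-guarding set.
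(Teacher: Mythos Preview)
Your overall strategy matches the paper's: part (i) via \lemref{trimminglemma2}, the distance property from condition (iii) of \lemref{comparationofareas}, and the guarding property by exhibiting an avoidable pair sandwiched between two partial traversals that meet at the first removed pair $(i^*,j^*)$. Two points, however, need tightening.

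First, the absence of avoidable pairs in $B^{\mathrm{old}}$ is not simply ``inherited from part (i)'' once earlier rows have already been trimmed; you need it as an \emph{invariant} of the row-by-row induction. The paper addresses this by asserting that the trimming operation introduces no new avoidable pairs, so the invariant persists throughout phase 2. Your induction hypothesis should carry this along explicitly, not just the guarding property.

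Second, and more substantively, the mere divergence $j_0 \neq y_2$ in row $i$ does not by itself produce a $B^{\mathrm{old}}$-pair enclosed between the two paths; two monotone paths from $(1,1)$ to $(i^*,j^*)$ can diverge without trapping any $B$-pair. The missing ingredient is condition (ii) of \lemref{comparationofareas}: every interval of $J_i$ begins and ends at an index of $I_i$. Since $j_0 \in \bigcup J_i$ (indeed $j_0 \in A_i \subseteq \bigcup J_i \setminus \bigcup I_i$) while your observation (b) places the row-$i$ columns of $T$ outside $\bigcup J_i$, the endpoint of the $J_i$-interval containing $j_0$ on the $y_2$-side gives a column $j''$ strictly between $j_0$ and $y_2$ with $(i,j'') \in B^{\mathrm{old}}$. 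It is this pair $(i,j'')$ that is shown avoidable via \defref{avoidablecell}, yielding the contradiction. The paper's proof is exactly this argument; you have all the pieces in (a)--(c) but stop short of invoking condition (ii) to locate the enclosed pair. (The proposed extension ``along $T$'s tail'' is unnecessary: both partial traversals already terminate at $(i^*,j^*) \in B^{\mathrm{old}}$.)
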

\begin{proof}
The first part of the lemma follows directly from \lemref{trimminglemma2}.
We now prove the second part of the lemma statement. Condition (iii) of \lemref{comparationofareas} ensures
that any pair of a set $A_i$ added to $B$ corresponds to a pair of vertices $p\in P$ and $q\in Q$ with $d(p,q) \geq b/2 = \distFr{P}{Q}/2$. Indeed, the
column indices of the pairs of $A_i$ are contained in intervals of $J_i$.
Therefore, after the second phase, the modified set $B$ satisfies property (a)
in the definition of guarding sets if we set $\theta=2$.  Secondly, we argue
that property (b) is not invalidated after the trimming operation was applied
to a row. Let $B$ denote the guarding set before the trimming operation applied 
to the $i$th row and let $B'$ denote the modifed guarding set after trimming.
Clearly, the trimming operation does not add any avoidable pairs to $B$.
Therefore we can assume that throughout the second phase no 
avoidable pairs are present.

Assume for the sake of contradiction that there exists a traversal
$T$ that contains a pair of $B$, but does \emph{not} contain a pair of $B'$. 
Let $(i',j')$ be the first pair along $T$ that was
removed from $B$ during the trimming operation and let $(i,j_2)$ be a pair of 
$A_i$ that has a BFS-path to $(i',j')$. 
$T$ must contain a pair $(i,j_1)$ in the $i$th row and this pair cannot be contained in
an interval of $J_i$ (otherwise $T$ would contain a pair of $B'$).
Let $T_1$ be the partial traversal (path in $G$) of $T$ that starts in $(1,1)$
goes via $(i,j_1)$ and ends in $(i',j')$. Since $(i',j')$ was the first vertex
along $T$ in $B$, it follows that $T_1$ only visits vertices that are in $S_B$.
Note that $i'>i$ since the BFS only visits row indices strictly greater $i$.
Since $A_i \subseteq S_B$, there must be a path $T_2$ in $G$ 
from $(1,1)$ via $(i,j_2)$ to $(i',j')$ that only contains vertices of $S_B$.
Now, condition (ii) of \lemref{comparationofareas} implies that there must be
a vertex $(i,j'')$ in $B$, such that either $j_1 < j''< j_2$ or
$j_2 < j''< j_1$. This implies that $(i,j'')$ must be avoidable with respect to $B$.
However, this contradicts the fact that $B$ does not contain any avoidable pairs.
This proves (ii).  The third part of the lemma
follows by a symmetric argument applied to the columns.
\end{proof}


\subsection{Bounding the complexity of the modified guarding set}
\seclab{boundingcomplexity}

Given set $B$ after the algorithm of \lemref{trimmingalgo}. For every row of $B$ (presented as matrix) let the pairwise disjoint set of intervals $R_i\subseteq \lbrace \left[ j_1,j_2\right] | j_1\leq j_2\in \mathbb{N}, 1\leq j_1\leq j_2\leq t\rbrace$ be a set of intervals on $\lbrace 1,\ldots, t\rbrace$ of minimal size, such that for any $1\leq j' \leq t$ there exist $j_1$ and  $j_2$ with  $j' \in [j_1,j_2] \in R_i$  if and only if $(i,j') \in B$. We can analogously define such pairwise disjoint sets $C_j$ over the columns of $B$. 

\lemref{comparationofareas} implies that for every row $i$ there is a set of pairwise disjoint intervals $J_i$ constructed by line 2 of Algorithm~\ref{cutting:reach:row}, with $|J|\leq c+1$. Algorithm~\ref{cutting:reach:row} takes into $B$ only the pairs that belong to the subsets of the intervals of $J_i$ that were in $S_B$ too. But since the pairs $(i,j)\in H_B$ such that $j\in [j_1,j_2]\in J_i$ have the property that any traversal using these pairs has to contain a pair in $B$ prior to $(i,j)$, we could have added such pairs too into $B$ and then it would be $J_i=R_i$. Since we took only its subsets, it holds that for every $[j_1,j_2]\in R_i$ there is $[j_3,j_4]\in J_i$ with $j_3\leq j_1\leq j_2\leq j_4$. By counting all intervals of $R_i$ that are subset of one interval from $J_i$ as one, we say that all such intervals $R_i$ build one \emph{extended group} of consecutive pairs within $i$th row. It follows that there are at most $c+1$ extended groups within $i$-th row. This process gets repeated over columns as well. See \figref{extendedgroups} for an illustration.

\begin{figure}[ht]
\centering
\includegraphics[width=0.35\textwidth]{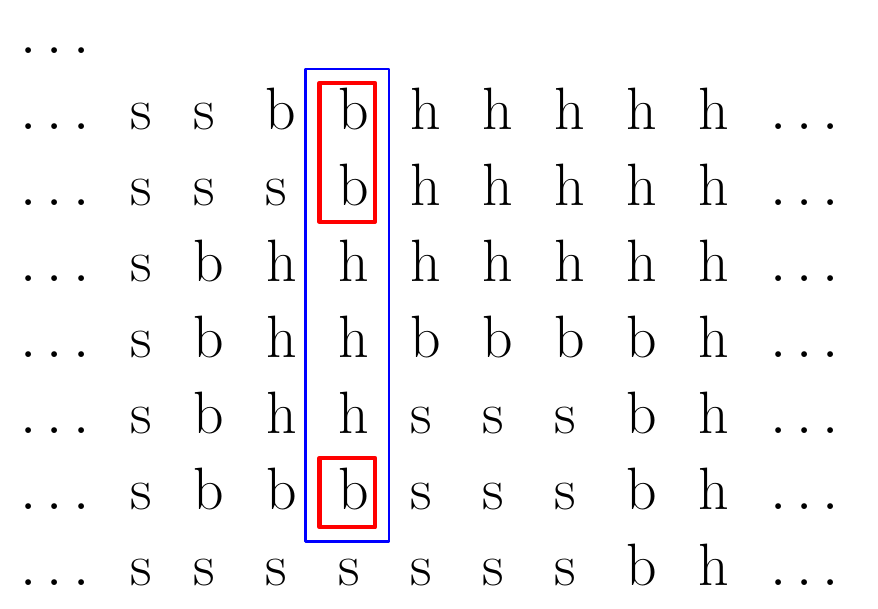}
\caption{The pairs of the guarding set $B$ (red) and its extended group (blue) within one column. The pairs denoted with s, b, and h are from $S_B$, $B$ and $H_B$ respectively}
\figlab{extendedgroups}
\end{figure}

We have to note that the filling pairs added into $B$ also imply the removal of a pair in $B$ that lies in the same row but with higher column index, except possibly for the last pair in the row. This can happen at most once per row, adding one pair (and one extended group) to the row. We obtain the following lemma.

\begin{lemma}\lemlab{extendedgroups}
In the guarding set produced by Algorithm~\ref{guarding:theta} and modifed by
the algorithm of \lemref{trimmingalgo}, there are at most $c+1$ extended groups
within a column, and $c+2$ extended groups within a row.
\end{lemma}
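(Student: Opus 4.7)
The plan is to bound the number of extended groups on columns and on rows separately, exploiting the fact that phase~(2) of the trimming algorithm (row trimming) precedes phase~(3) (column trimming) in the algorithm of \lemref{trimmingalgo}.

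For the column bound we argue as follows. When phase~(3) processes column $j$ with parameter $b=\distFr{P}{Q}/2$, \lemref{comparationofareas} produces a pairwise disjoint set $J_j$ of at most $c+1$ row-intervals whose union contains every row index of a column-$j$ pair currently in $B$. As spelled out in the paragraph preceding the statement, every interval of the minimal cover $R_j$ lies inside some interval of $J_j$, and hence the extended groups of column $j$---maximal clusters of $R_j$-intervals sharing a common containing $J_j$-interval---are in one-to-one correspondence with a subset of $J_j$. Since no later step of the algorithm modifies a column once it has been trimmed, the bound $c+1$ persists to the end.

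For the row bound, the analogous argument applied at the end of phase~(2) (with $b=\distFr{P}{Q}$) gives at most $c+1$ extended groups in each row $i$, determined by the family $J_i$ that is fixed when row $i$ is trimmed. The remaining task is to bound the additional extended groups introduced by phase~(3) by one. The key observation, quoted just before the lemma, is that inserting a filling pair $(i,j)$ into row $i$ during the trimming of column $j$ triggers a BFS that removes the next $B$-cell of row $i$ at a column strictly greater than $j$, whenever such a cell is reachable from $(i,j)$ through $S_B$. Every insertion in row $i$ is therefore paired with a removal in the same row, save for one exceptional insertion---the rightmost one---for which no reachable $B$-cell lies further to the right. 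The plan is to argue that each paired insertion preserves the extended-group count of row $i$ (the added cell and the removed cell fall into the same $J_i$-equivalence class, or into an adjacent class that remains nonempty), while the single unmatched insertion contributes at most one new extended group, yielding the bound $c+2$.

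The main obstacle will be the careful accounting needed in the row case. One must verify that (i)~a paired insertion never splits an existing extended group nor spawns a new one outside the $J_i$-equivalence classes already present, and (ii)~that the exceptional insertion can occur at most once during phase~(3), because after it takes place no later column trimming can produce a filling pair in row $i$ at a larger column---the corresponding forward monotone paths from $(1,1)$ through $S_B$ are blocked by the newly inserted $B$-cell. Both points lean on the monotone structure of the reachability graph $G$ and on the removal of all avoidable pairs in phase~(1), which is precisely what rules out the parallel-path configurations through $S_B$ that would otherwise break the pairing of insertions with removals.
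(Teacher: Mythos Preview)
Your proposal follows the same approach as the paper. In fact, the paper's own ``proof'' of this lemma is nothing more than the two paragraphs preceding it, which you explicitly quote and build on: the $c+1$ column bound comes directly from $|J_j|\leq c+1$ in phase~(3), and the extra $+1$ on rows is attributed to the single sentence about filling pairs being matched with removals ``except possibly for the last pair in the row''. The paper does not carry out the careful accounting you flag as the main obstacle; it simply asserts the pairing and the at-most-one exception and moves on. So your plan is faithful to the paper, and you have been more explicit than the paper about what would need to be checked to make the row bound fully rigorous.

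One caution on your point~(i): the claim that a paired insertion lands ``in the same $J_i$-equivalence class, or into an adjacent class that remains nonempty'' is stronger than what the paper asserts and is not obviously true --- a filling pair $(i,j)$ contributed by phase~(3) was in $S_B$, not in $B$, at the time $J_i$ was formed, so there is no a~priori reason $j$ lies in any interval of $J_i$. The paper sidesteps this by counting at the level of pairs rather than equivalence classes: each paired insertion is offset by a removal in the same row, so the net gain in pairs (hence in extended groups) is at most one. If you pursue the finer equivalence-class argument you will need an additional observation to locate the inserted cell relative to $J_i$.
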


To finally bound the complexity of our guarding set by \lemref{extractobsolete}, we show first \lemref{predecessorins}.

\begin{lemma}\lemlab{predecessorins}
For the guarding set produced by Algorithm~\ref{guarding:theta} and after every phase of algorithm of \lemref{trimmingalgo} the following invariant holds: for every pair $(i,j)\in B$ there exists a pair $(i',j')\in S_B$ such that $((i',j'),(i,j))\in E$.
\end{lemma}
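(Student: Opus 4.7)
I would prove the invariant by induction on the sequence of modifications of $B$: first for the output of Algorithm~\ref{guarding:theta} (in the nontrivial branch $\delta_{1,1}<\delta/\theta$, so that $(1,1)\notin B$), then through each of the three phases of the algorithm of \lemref{trimmingalgo}. The base case follows by inspection of the BFS: a pair $(i',j')$ is placed in $B$ at line~12 only when an outgoing edge is explored from a pair $(i,j)$ popped from the queue at line~8, and this $(i,j)$ has $\delta_{i,j}<\delta/\theta$ and is connected to $(1,1)$ by a chain of queue-pops all of whose members satisfy the same distance bound and therefore lie outside $B$; hence $(i,j)\in S_B$ is the required predecessor.

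For Phase~1, removing avoidable pairs from $B$ can only enlarge $S_B$, so the previously witnessed predecessor of every surviving pair remains in $S_B$. For Phases~2 and~3 I would focus on row-trimming (the column case being symmetric). Fix a row $i$ and write $B_{\text{new}}=(B_{\text{old}}\setminus R)\cup A_i$. Inspecting Algorithm~\ref{cutting:reach:row}, the BFS only propagates along downward edges $(i,j)\to(i+1,j),(i+1,j+1)$, so $R$ is contained in rows strictly greater than $i$; consequently the reachable pairs whose first coordinate is at most $i-1$ coincide under $B_{\text{old}}$ and under $B_{\text{new}}$. For a newly inserted pair $(i,j)\in A_i$ the inclusion $A_i\subseteq S_{B_{\text{old}}}$ supplies a path $\pi$ from $(1,1)$ to $(i,j)$ in $G\setminus B_{\text{old}}$; letting $(i,j^*)$ denote the first vertex of $\pi$ lying in row $i$, the predecessor of $(i,j^*)$ on $\pi$ lies in row $i-1$ and the prefix of $\pi$ ending there avoids row $i$ entirely, and hence avoids $A_i\cup B_{\text{old}}\supseteq B_{\text{new}}$. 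This yields an $S_{B_{\text{new}}}$-witness in row $i-1$. The survivors in $B_{\text{new}}\cap B_{\text{old}}$ are handled by the same unchanged-reachability fact together with a rerouting argument based on the fact that deletions in $R$ only enlarge $S_B$ in rows $>i$.

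The main obstacle is the case of a newly inserted pair $(i,j)\in A_i$ whose witness path $\pi$ enters row $i$ at some column $j^*<j$ strictly to the left of $j$: the intermediate pairs of $\pi$ inside row $i$ can themselves belong to $A_i\subseteq B_{\text{new}}$, so the row-$i$ suffix of $\pi$ is not usable in $G\setminus B_{\text{new}}$. I would resolve this by a secondary induction along the columns of each fill-in interval of $J_i$, using \lemref{comparationofareas} to guarantee that no $B_{\text{old}}$-pair lies strictly between the two flanking $I_i$-columns of the interval, and propagating the row-$(i-1)$ witness already established for the leftmost $A_i$-pair of the interval along row $i-1$ to obtain a predecessor in $S_{B_{\text{new}}}$ for every subsequent $A_i$-pair in that interval.
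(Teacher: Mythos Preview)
Your phase-by-phase induction is exactly the paper's approach, and your base case and Phase~1 arguments match the paper's almost verbatim. Where you go further is in Phase~2: the paper simply asserts that newly added $A_i$-pairs ``were already in $S_B$, thus have also a predecessor in $S_B$,'' and that for surviving pairs ``their predecessors remain in $S_B$, so their status is not changed,'' without carefully distinguishing $S_{B_{\text{old}}}$ from $S_{B_{\text{new}}}$. You are right to flag the delicate case in which the witness path of an $A_i$-pair enters row $i$ at some $j^*<j$ and then moves horizontally through other $A_i$-pairs; the paper glosses over exactly this point. Your observation that the leftmost $A_i$-pair of a fill-in run automatically has a row-$(i{-}1)$ predecessor (because its left neighbour in row $i$ is an $I_i$-column and hence in $B_{\text{old}}$) is a genuinely useful refinement.

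However, your proposed resolution of the obstacle has its own gap. The secondary induction you sketch propagates a row-$(i{-}1)$ witness \emph{along row $i{-}1$}, and for this you implicitly need the row-$(i{-}1)$ pairs above the fill-in interval to lie outside $B_{\text{old}}$. Your appeal to \lemref{comparationofareas} only controls row $i$: the set $I_i$ records precisely the $B$-columns of row $i$, and the fill-in interval being disjoint from $I_i$ says nothing about row $i{-}1$. Nothing you have written rules out $(i{-}1,j')\in B_{\text{old}}$ for some $j'$ strictly inside the fill-in interval, and such a pair would block your row-$(i{-}1)$ propagation. To close this, you would need an additional argument --- for instance, exploiting that no avoidable pairs are present throughout Phase~2 (the paper notes this in the proof of \lemref{trimmingalgo}) to exclude such obstructions in row $i{-}1$, or else exhibit a different predecessor for the non-leftmost $A_i$-pairs.
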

\begin{proof}[Proof of \lemref{predecessorins}]
We call the pair $(i',j')$ the predecessor pair. The construction of the guarding set $B$ Algorithm~\ref{guarding:theta} guarantees that a pair $(i,j)$ is added into $B$ if it is visited over an edge $((i',j'),(i,j))\in E$, where $(i',j')\notin B$. Thus $((i',j')\in S_B$ as claimed. 

The first phase of the algorithm of \lemref{trimmingalgo} removes the avoidable pairs from $B$, thus for the pairs that remain in $B$ the invariant holds. The second phase runs Algorithm~\ref{cutting:reach:row} upon a row and adds into $B$ only pairs which were already in $S_B$, thus have also a predecessor in $S_B$. For every pair $(i',j')$ which was in $S_B$ before and is in $H_B$ after Algorithm~\ref{cutting:reach:row} it holds that the BFS passes it and then visits and subsequently removes the pairs from $B$. Therefore the invariant remains valid for the pairs that remain in $B$, as for the pairs that were already in $B$ their predecessors remain in $S_B$, so their status is not changed. The third phase is equivalent to the second one, 
and the invariant remains valid.
\end{proof}

\begin{lemma}
\lemlab{extractobsolete}
The set $B$ obtained by the algorithm of \lemref{trimmingalgo} is a $4$-guarding set, containing at most $(3c+4)\cdot t$ pairs.
\end{lemma}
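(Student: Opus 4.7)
The $4$-guarding property is already handed to us by Lemma~\ref{lemma:trimmingalgo}(iii), so the plan is to bound $|B|$. I would base the count on the predecessor structure from Lemma~\ref{lemma:predecessorins}: every $(i,j)\in B$ has an $S_B$-predecessor at one of $(i,j-1)$, $(i-1,j)$, or $(i-1,j-1)$. Accordingly I will partition $B$ into three classes --- $L$ where $(i,j-1)\in S_B$, $U$ where $(i,j-1)\notin S_B$ but $(i-1,j)\in S_B$, and $D$ otherwise (which forces $(i-1,j-1)\in S_B$) --- and bound each class separately using Lemma~\ref{lemma:extendedgroups}.

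For class $L$ each pair is the leftmost cell of a maximal $B$-run in its row, with an $S_B$-pair immediately to its left. The structural discussion preceding Lemma~\ref{lemma:extendedgroups} indicates that after trimming the $B$-pairs inside a single row extended group form one consecutive run, because the filling pairs $A_i$ absorb the internal $S_B$-gaps. Each row extended group then contributes at most one class-$L$ pair, and the $c+2$-bound per row gives $|L|\le (c+2)t$. Symmetric reasoning on columns gives $|U|\le (c+1)t$.

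For class $D$ the key observation I will exploit is that $(i-1,j-1)\in S_B \subseteq V\setminus B$ together with $(i,j-1)\in B$ makes $(i,j-1)$ the topmost $B$-pair of its column run, i.e.\ a column-interval start. The map $(i,j)\mapsto(i,j-1)$ is clearly injective on $D$, so $|D|$ is at most the total number of column-interval starts, which is again at most $(c+1)t$ by Lemma~\ref{lemma:extendedgroups}. Summing, $|B|=|L|+|U|+|D|\le (c+2)t+2(c+1)t=(3c+4)t$.

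The hard part, I expect, is the structural claim that after trimming no $H_B$-pair is left in the interior of any row or column extended group; this is what underpins the ``one run per extended group'' bookkeeping used above. I would verify it by tracing the behaviour of Algorithm~\ref{guarding:theta} together with the row/column trimming operations of Lemma~\ref{lemma:trimmingalgo}, arguing that inside an extended group every non-$B$ pair is either absorbed as a filling pair or was already in $S_B$ at the relevant moment and hence remains reachable without crossing $B$.
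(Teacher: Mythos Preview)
Your proposal is correct and follows essentially the same charging scheme as the paper: both partition $B$ into three classes and charge each to extended-group starts, obtaining $(c+2)t + 2(c+1)t = (3c+4)t$.  The paper classifies by ``is $j$ a row-group start / is $i$ a column-group start / neither'', while you classify by ``which neighbour lies in $S_B$''.  These match up: if $(i,j-1)\in S_B$ then $(i,j-1)$ lies outside every extended group (extended groups contain only $B\cup H_B$ pairs), so $j$ is automatically the leftmost index of its row extended group.  Thus your $L\subseteq$ type~i), your $U\subseteq$ type~ii), and your $D$ corresponds to type~iii); the paper's treatment of type~iii) likewise shows $(i-1,j-1)\in S_B$ and $(i-1,j),(i,j-1)\in B$, then charges to a column-group start, exactly as you do via the map $(i,j)\mapsto(i,j-1)$.

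The ``hard part'' you flag at the end---that no $H_B$-pair sits in the interior of an extended group, so each extended group is a single $B$-run---is stronger than what either argument needs.  Both proofs require only that extended groups contain no $S_B$-pairs, which is the content of the discussion preceding \lemref{extendedgroups} (the filling pairs $A_i$ absorb precisely the $S_B$-cells inside each $J_i$).  With that weaker fact your bound on $|L|$ is immediate: $(i,j-1)\in S_B$ forces $(i,j)$ to be the leftmost cell of its row \emph{extended group}, not merely of its $B$-run, so each extended group contributes at most one $L$-pair without any appeal to ``one run per group''.  The same applies to $U$ and $D$.  Once you drop the unnecessary structural claim, your proof and the paper's are essentially identical.
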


\begin{proof}
For every pair $(i,j)\in B$ one of the following holds true: 
\begin{compactenum}[i)]
\item the index $j$ is the smallest index of an extended group over the $i$th row;
\item the index $i$ is the smallest index of an extended group over the $j$th column; 
\item none of the above.
\end{compactenum}

We argue that if neither i) nor ii) holds true, then it must be that $i-1$ is
the smallest index of an extended group over the $j$th column.  Indeed, note
that if neither i) nor ii) holds true, then $(i-1,j)$ and $(i,j-1)$ are part of
an extended group and such groups can only contain pairs of $B$ or $H_B$.
Therefore, the pair $(i-1,j-1)$ must be in $S_B$ because
\lemref{predecessorins} implies that $(i,j)$ must have an ingoing edge from a
pair in $S_B$.  Now, since pairs of $S_B$ and $H_B$ cannot be directly
connected by an edge of $G$, it must be that $(i-1,j)$ and $(i,j-1)$ are both
in $B$. Thus, $i-1$ is the smallest index of an extended group over the $j$th
column.

We charge elements of $B$ of type i) and of type ii) to their respective
extended intervals. We charge elements of type iii) it to their extended interval 
over the column. Thus, extended intervals in the column are charged at most twice.
By \lemref{extendedgroups} we have at most $(c+1)$ extended intervals per column and at most
$(c+2)$ extended intervals per row. This implies that altogether $|B|\leq
(3c+4)\cdot t$, as claimed.
\end{proof}

\lemref{k:subset:approx} and \lemref{extractobsolete} imply the correctness of \thmref{claim:discfrec:n:approx}.
The proof of \thmref{claim:discfrec:n:approx45} is analogous to the proof of \thmref{claim:discfrec:n:approx}, while \lemref{varphibound} and \lemref{k:subset:approx} are replaced by \lemref{varphibound45} and \lemref{k:subset:approx45}, respectively. The rest of the proof can be taken verbatim.

\section{Lower bounds}
\seclab{lowerbounds}

\subsection{Definitions}

Related similarity measure between two curves to the discrete \Frechet distance is dynamic time warping. It considers the \emph{sum} of the used distances in the traversal (instead the maximum one). Formally, for two curves $P$ and $Q$ from $\mathbb{R}^d$, we define:
\begin{equation}
\distDTW{P}{Q}=\min_{ T \in \mathcal{T}} \sum_{(i,j)\in T} \|p_i-q_j \|.
\end{equation}

For the continuous \Frechet distance, let again $P=\lbrace p_1,\ldots, p_t\rbrace$ and $Q=\lbrace q_1,\ldots, q_t\rbrace$ be two curves from $\mathbb{R}^d$. Let $\pi:[0,1]\rightarrow P$ and $\tau:[0,1]\rightarrow Q$ be two functions on $[0,1]$ such that $\pi(0)=p_1$, $\pi(1)=p_t$, $\tau(0)=q_1$ and $\tau(1)=q_t$, and such that $\pi$ and $\tau$ are monotone on $P$ and $Q$ respectively. Let $\mathcal{H}$ denote the set of continuous and increasing functions
$f:[0,1]\rightarrow[0,1]$ with the property that $f(0)=0$ and $f(1)=1$. 
For two given curves $P$ and $Q$ and respective functions
$\pi$ and $\tau$,
their (continuous) \emph{\Frechet distance} is defined as 
\begin{equation} \label{def:frechet} 
\distFr{P}{Q}=\inf_{f\in \mathcal H}\; \max_{t \in [0,1]} \| Q(\tau(f(t)))-P(\pi(t))
\|
\end{equation}
and the function $f$ that reaches the value $\delta=\distFr{P}{Q}$ is called \emph{matching} from $P$ to $Q$ with cost $\delta$.


\subsection{c-packed curves}

We prove the correctness of \thmref{lowerbounddiscretecpacked} for the discrete and the continuous \Frechet distance, as well as for the dynamic time warping distance.

\begin{proof}[Proof of \thmref{lowerbounddiscretecpacked} for the discrete \Frechet distance]
Let the curves $P$ and $Q$ be from $\mathbb{R}^2$. Let the curve $P=\lbrace p_1,\ldots, p_{2t+1}\rbrace$ be the line segment $\overline{p_1p_{2t+1}}$, while the vertices $p_2,\ldots, p_{2t}$ are uniformly distributed on $P$, i.e. $\|p_{i+1}-p_i\|=\|p_i-p_{i-1}\|$ for all $i\in\lbrace 2,\ldots, 2t\rbrace$. Let $Q=\lbrace q_1,\ldots, q_{2t+1}\rbrace$ be composed by two line segments $\overline{q_1q_{t+1}}$ and $\overline{q_{t+1}q_{2t+1}}$, and the vertices $q_2,\ldots, q_{2t}$ are uniformly distributed on $Q$, i.e. $\|q_{j+1}-q_j\|=\|q_j-q_{j-1}\|$ for all $j\in\lbrace 2,\ldots, 2t\rbrace$. Let $p_1=q_1$ and $p_{2t+1}=q_{2t+1}$ and let $\angle q_{t+1}q_1p_{2t+1}=\alpha$ (as shown in \figref{lowerbounddiscrete}).

\begin{figure}[ht]
\centering
\includegraphics[width=0.75\textwidth]{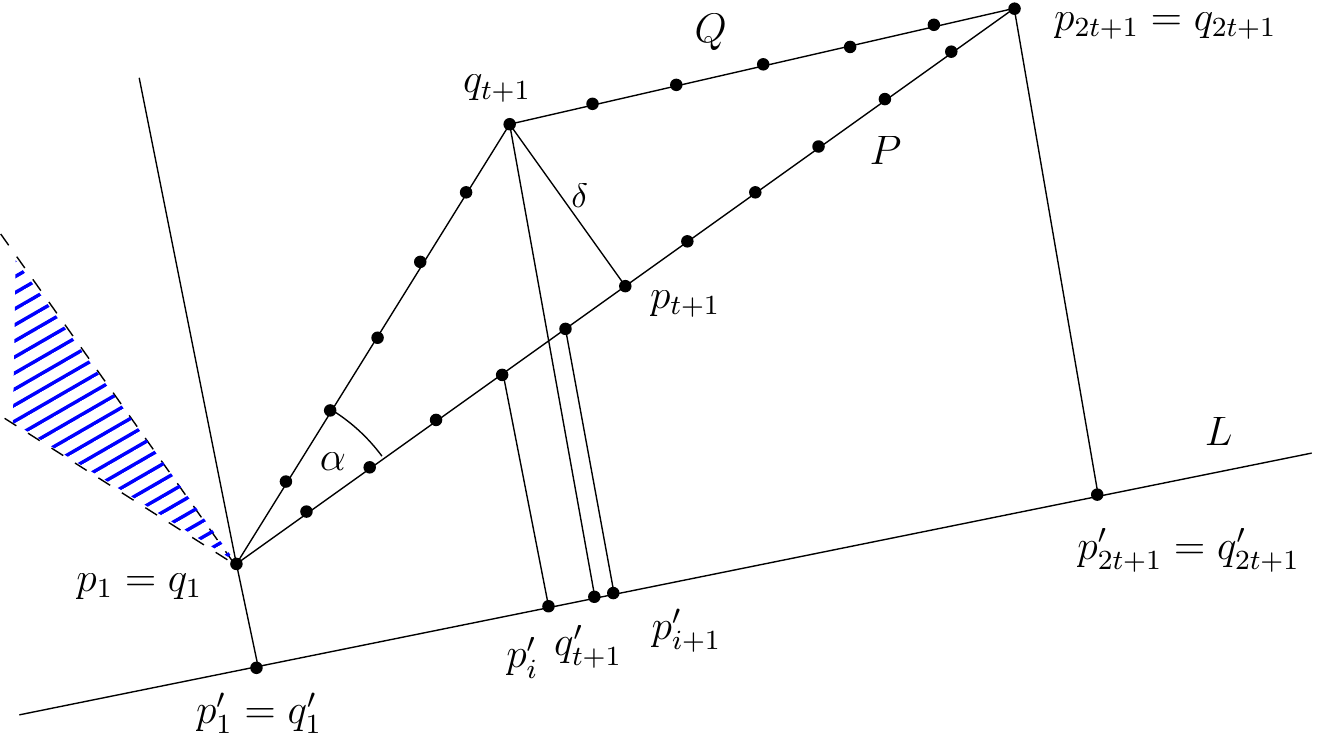}
\caption{Lower bound for the discrete \Frechet distance case for $c$-packed curves}
\figlab{lowerbounddiscrete}
\end{figure}

The curves $P$ and $Q$ are $c$-packed for any constant $c\geq 2$.  Let $|Q|=2$, then it holds that $|P|=2\cdot|\cos\alpha|$ and for both discrete and continuous \Frechet distance it holds that $\delta=\distFr{P}{Q}=|\sin\alpha|$.

Let the straight line $L$ support the unit vector $\textbf{u}$, which is chosen uniformly at random on the unit hypersphere, and let $P$ and $Q$ be projected to $L$. 
Observe that the discrete \Frechet distance of $P$ and $Q$ is realized by the pair $(t+1, t+1)$ in the traversal of $P$ and $Q$, thus $\| p_{t+1}-q_{t+1}\| = \distFr{P}{Q}=\delta$. The vertex $q_{t+1}\in Q$ is projected to $q'_{t+1}$ and $q'_{t+1}$ lies either within $P'$ or outside of it.

If it is inside ($q'_{t+1}\in P'$), and thus in one of the $2t$ line segments $\overline{p'_i p'_{i+1}}$ for some $i\in \lbrace 1,\ldots, 2t\rbrace$, then the distance of $q'_{t+1}$ to its matched vertex $p'_x\in P'$ is at most
\[
\|q'_{t+1}-p'_x\| \leq \frac{|P'|}{2t}\leq \frac{|P|}{2t}=\frac{ |\cos\alpha|}{t}.
\]
Therefore it holds that
\[
q'_{t+1}\in P' \Rightarrow \distFr{P'}{Q'}\leq \frac{|\cos\alpha|}{t} \leq \frac{1}{t}.
\]
The event $q'_{t+1}\in P'$ occurs with probability at least $1-\alpha/\pi$, i.e. when the perpendicular line to $L$ is not parallel to some straight line laying in $\angle q_{t+1}q_1p_{2t+1}=\alpha$ and including $q_1$ (tiled area in \figref{lowerbounddiscrete}). Then it holds that 
\[
\text{Pr}\left[ \frac{\distFr{P}{Q}}{\distFr{P'}{Q'}} \geq |\sin\alpha|\cdot t\right] \geq 1- \frac{\alpha}{\pi}.
\]
For $\alpha\in [0,1]$ it holds that $|\sin \alpha| \geq \alpha-\alpha^3/3!\geq \frac{5}{6}\alpha$, thus for $\gamma=\alpha/\pi$ is for $\gamma\in (0,1/\pi)$:
\[
\text{Pr}\left[ \frac{\distFr{P}{Q}}{\distFr{P'}{Q'}} \geq \frac{5\pi\gamma}{6}\cdot t\right] \geq 1- \gamma
\]
This proves the correctness of the theorem.

\end{proof}

\begin{proof}[Proof of \thmref{lowerbounddiscretecpacked} for the continuous \Frechet distance]
For the continuous case it holds that if $q'_{t+1}\in P'$, then $P'=Q'$ and $\distFr{P'}{Q'}=0$. Thus it holds that 
\[
\text{Pr}\left[ \frac{\distFr{P}{Q}}{\distFr{P'}{Q'}} \geq t\right] \geq \text{Pr}\left[ \distFr{P'}{Q'} =0\right] \geq 1-\alpha/\pi
\]
for any constant $\alpha\in (0,1)$. Thus the continuous \Frechet distance will be reduced at least by a factor of $t$ with  probability at least $1-\gamma$, where $\gamma=\alpha/\pi$ and $\gamma\in (0,1/\pi)$.
\end{proof}

\begin{proof}[Proof of \thmref{lowerbounddiscretecpacked} for the dynamic time warping distance]
For the curves $P$ and $Q$ it holds that
\begin{eqnarray*}
\distDTW{P}{Q}&=&\sum_{i=1}^{2t+1} \| p_i - q_i\| = 2\cdot \left(\sum_{i=2}^{t} \| p_i - q_i\|\right) + \| p_{t+1} - q_{t+1}\| \\
&=& 2\cdot \left(\sum_{i=1}^{t} \| p_{i+1} - q_{i+1}\| \right) - \| p_{t+1} - q_{t+1}\| = 2\cdot \left(\sum_{i=1}^{t} \frac{i\cdot |\sin \alpha|}{t}\right) - |\sin\alpha|  \\
&=& t\cdot |\sin\alpha|
\end{eqnarray*}
For the projection curves it holds that with the probability $1-\alpha/\pi$ that (analogously to the discrete \Frechet distance case):
\begin{eqnarray*}
\distDTW{P'}{Q'}&=&\min_{ T \in \mathcal{T'}} \sum_{(i,j)\in T} \|p'_i-q'_j \| 
\end{eqnarray*}
where $\mathcal{T'}$ is the set of all traversals of $P'$ and $Q'$.

Let the set of the pairs $T'$ be defined, such that for $1\leq j\leq 2t+1$, the pair $(i,j)\in T'$ iff $\|p'_i-q'_j\|$ is minimal over all $1\leq i\leq 2t+1$. Such set $T'$ is a traversal of $P'$ and $Q'$. This is shown by induction, since $p_1=q_1$ and $p_{2t+1}=q_{2t+1}$. Let the pair $(i,j)$ be in $T'$. Then the closest vertex of $P'$ to the vertex $q'_{j+1}$ has to be either $p'_i$ or $p'_{i+1}$. The other ones (either with smaller or greater index) cannot be the closest ones to $q'_{j+1}$ because of the order of the vertices on $P'$ and $Q'$. Thus the pair $(i,j)\in T'$ is followed either by $(i+1,j+1)$ or $(i,j+1)$ (the possibility of $(i+1,j)$ is excluded, since we choose exactly one matched vertex for each $j$, $1\leq j\leq 2t+1$), and $T'\in \mathcal{T}'$ is a traversal.

Therefore it holds that 
\begin{eqnarray*}
\distDTW{P'}{Q'}&\leq & \sum_{(i,j)\in T'} \|p'_i-q'_j \| \leq \frac{1}{2}\sum_{(i,j)\in T'} \|p'_i-p'_{i+1} \|\\
 &\leq & \frac{1}{2}\sum_{i=2}^{2t} \|p'_i-p'_{i+1} \| \leq \frac{1}{2}\cdot |P'| \leq  \frac{1}{2}\cdot |P| = |\cos \alpha|\leq 1
\end{eqnarray*}
with the probability $1-\alpha/\pi$. Thus
\[
\text{Pr}\left[ \frac{\distDTW{P}{Q}}{\distDTW{P'}{Q'}} \geq |\sin\alpha|\cdot t\right] \geq 1- \frac{\alpha}{\pi}.
\]
By repeating the analysis of \thmref{lowerbounddiscretecpacked} for the discrete \Frechet distance we obtain that the dynamic time warping distance will be reduced at least by a factor of $5\pi\gamma t /6$ with probability at least $1-\gamma$, for any $\gamma\in (0,1/\pi)$.
\end{proof}

\subsection{General case curves}

If the curves $P$ and $Q$ are not $c$-packed, for any constant $c\geq 2$, then the ratio of the continuous \Frechet distances between $P$ and $Q$ and their projection curves $P'$ and $Q'$ can be at least linear in $t$, as claimed by \thmref{lowerboundcontinuousgeneral}. This event can happen with probability 1. We claim the same bound for the discrete \Frechet distance. 

\begin{theorem}
\thmlab{lowerboundcontinuousgeneral}
There exist the curves $P=\lbrace p_1,\ldots, p_t \rbrace$ and $Q=\lbrace q_1,\ldots, q_t\rbrace$, such that if $P'$ and $Q'$ respectively are their projections to the one-dimensional space that supports the unit vector chosen uniformly at random on the unit hypersphere, then it holds that 
\[
\frac{\distFr{P}{Q}}{\distFr{P'}{Q'}}\geq f(t),
\]
where $f(t)\in \Omega(t)$.
\end{theorem}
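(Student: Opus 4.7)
The strategy is to exhibit explicit polygonal curves $P$ and $Q$ with $t$ vertices each whose (continuous or discrete) \Frechet distance decays by a factor linear in $t$ under projection, for every random projection direction. Since the $c$-packed lower bound of \thmref{lowerbounddiscretecpacked} already achieves ratio $\Omega(t)$ but only with constant probability (with a positive-measure set of ``bad'' directions of size $\alpha/\pi$), the main task is to exhibit a construction where the exceptional directions form only a measure-zero set, at the cost of giving up $c$-packedness for any sublinear $c$.

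A natural candidate is to amplify the construction of \thmref{lowerbounddiscretecpacked}: let $P$ be a densely sampled line segment with many vertices, and take $Q$ to be a polygonal curve that consists of many small ``V''-shaped detours stacked along $P$'s direction, with apex directions spread over many angles. The curve $Q$ then has total length $\Omega(t)$ while fitting into a ball of constant radius, so it fails to be $c$-packed for any sublinear $c$. The original \Frechet distance stays $\Omega(1)$ because at least one detour forces a matched pair with vertical separation $\Omega(1)$. For any projection direction $\textbf{u}$, the dense projection $P'$ has inter-vertex spacing $O(1/t)$ along its range, and each apex of $Q$ either projects into the image of $P'$ (contributing distance $O(1/t)$ after matching with the nearest $p'_i$) or forces the one measure-zero direction in which $P'$ itself degenerates. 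Combining these two observations gives $\distFr{P'}{Q'} \leq O(1/t) \cdot \distFr{P}{Q}$, hence the desired ratio.

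The proof will proceed in three steps. First, verify the structural properties of $P$ and $Q$: both have $t$ vertices, both lie in a region of constant diameter, and their total length is $\Omega(t)$, which rules out $c$-packedness for sublinear $c$. Second, lower bound $\distFr{P}{Q}$ by a constant using the geometric obstruction created by the detours (any traversal must match some apex of $Q$ with a vertex of $P$ lying on the base line, yielding a perpendicular separation of $\Omega(1)$). Third, for any fixed projection direction $\textbf{u}$ in a measure-one subset of the unit sphere, exhibit an explicit monotone traversal of the projected curves whose maximum pair distance is $O(1/t)$; this uses the density of $P'$'s projection together with the fact that each apex's projection falls within $P'$'s range. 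The continuous \Frechet analog and the dynamic time warping version follow similarly, using Lemma~\lemref{frechetprojection} to confirm that projection only shrinks pairwise distances and applying the same matching argument to the continuous parameterization.

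The main obstacle is step three: ruling out a positive-measure set of bad directions, which is the gap between the $c$-packed bound of \thmref{lowerbounddiscretecpacked} and the uniform bound sought here. This will require a careful choice of the orientations and sizes of the ``V''-shaped detours in $Q$, so that the union of the adversarial directions for the individual detours collapses to a measure-zero set. An alternative route, if the planar construction turns out to be delicate, is to allow the ambient dimension $d$ to grow with $t$ and invoke concentration of measure on the unit sphere: for $d$ sufficiently large and any fixed vector $v$, the projection $\langle \textbf{u}, v\rangle$ concentrates around $\|v\|/\sqrt{d}$, so taking $d = \Theta(t^2)$ and applying a union bound over the $t^2$ vertex pairs gives the ratio $\Omega(t)$ with overwhelming probability, which is consistent with the ``probability 1'' formulation of the theorem.
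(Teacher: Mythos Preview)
Your proposal has a genuine gap in the crucial step three, and the gap is not merely a matter of ``careful choice'' of parameters---the segment-plus-detours architecture cannot deliver a measure-zero set of bad directions.

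Concretely: if $P$ is a segment of length $L$ along the $x$-axis and a detour of $Q$ has its apex at $(x_0,h)$ with $h=\Theta(1)$, then under projection onto $\textbf{u}=(\cos\theta,\sin\theta)$ the apex lands inside the range of $P'$ only when $\tan\theta \in [-x_0/h,\ (L-x_0)/h]$. The complement is an interval of \emph{positive} length in $\theta$, not a single point. Spreading the apex positions or angles does not help: for every direction within a fixed neighbourhood of the perpendicular to $P$, the range of $P'$ has length $O(L|\cos\theta|)$ while some apex projects to a point at distance $\Theta(h|\sin\theta|)$ outside that range, so $\distFr{P'}{Q'}=\Theta(1)$ there. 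Thus the bad set always has positive measure, and you recover at best the probabilistic statement of \thmref{lowerbounddiscretecpacked}, not the ``for every direction'' statement required here. Your high-dimensional fallback has the same defect (overwhelming probability is still strictly less than one) and additionally changes the ambient dimension, which the theorem does not permit.

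The paper's construction avoids this by abandoning the ``base segment'' idea entirely and using two nearly rotationally symmetric star curves $P=P_k$ and $Q=P_{k+1}$: each is a concatenation of $k$ (resp.\ $k{+}1$) unit-length rays through the origin, with tips equally spaced on the unit circle. The large original distance $\distFr{P}{Q}>1/2$ comes from the pigeonhole mismatch (one extra ray of $Q$ must be matched against a single ray of $P$). The key point for projections is that between any two consecutive rays of $Q$ there lies a ray of $P$, so for \emph{every} direction $\textbf{u}$ one can pair each projected ray of $P$ with a neighbouring projected ray of $Q$ at cost $\le 2\pi/k$; the single surplus ray of $Q$ is absorbed by matching two consecutive $Q$-rays to the one $P$-ray lying between them. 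Rotational near-symmetry is exactly what makes the exceptional set empty rather than an arc---there is no distinguished ``perpendicular'' direction to worry about. This idea is missing from your plan and is the essential ingredient.
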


\begin{proof}[Proof of \thmref{lowerboundcontinuousgeneral} for the continuous \Frechet distance]
We denote with $P_{k}$ the star-like clo\-sed curve with $2k+1$ vertices, defined as $P_{k}=\lbrace p_1,p_0,p_2, $ $p_0,\ldots,p_k,p_0,p_{k+1}\rbrace$. Let $p_i=(r_i,\theta_i)$ in polar coordinates be defined as $p_0=(0,0)$, $p_{k+1}=p_0$ and $p_i=(1,2\cdot (i-1)\cdot\pi/k)$ for $1\leq i\leq k$. Let $P=P_{k}$ and $Q=P_{k+1}$, and let $k$ be even. To have the same complexity for $P$ and $Q$ we can add two more points $p_1$ at the end, thus $t=2k+3$. We denote the indices of the curve $Q$ with $q_j$, $0\leq j\leq k+2$. \figref{example:buchin2} shows the curves $P$ and $Q$ for $k=12$ (in full blue and dotted red line respectively). 

\begin{figure}[ht]
\centering
\includegraphics[width=0.55\textwidth]{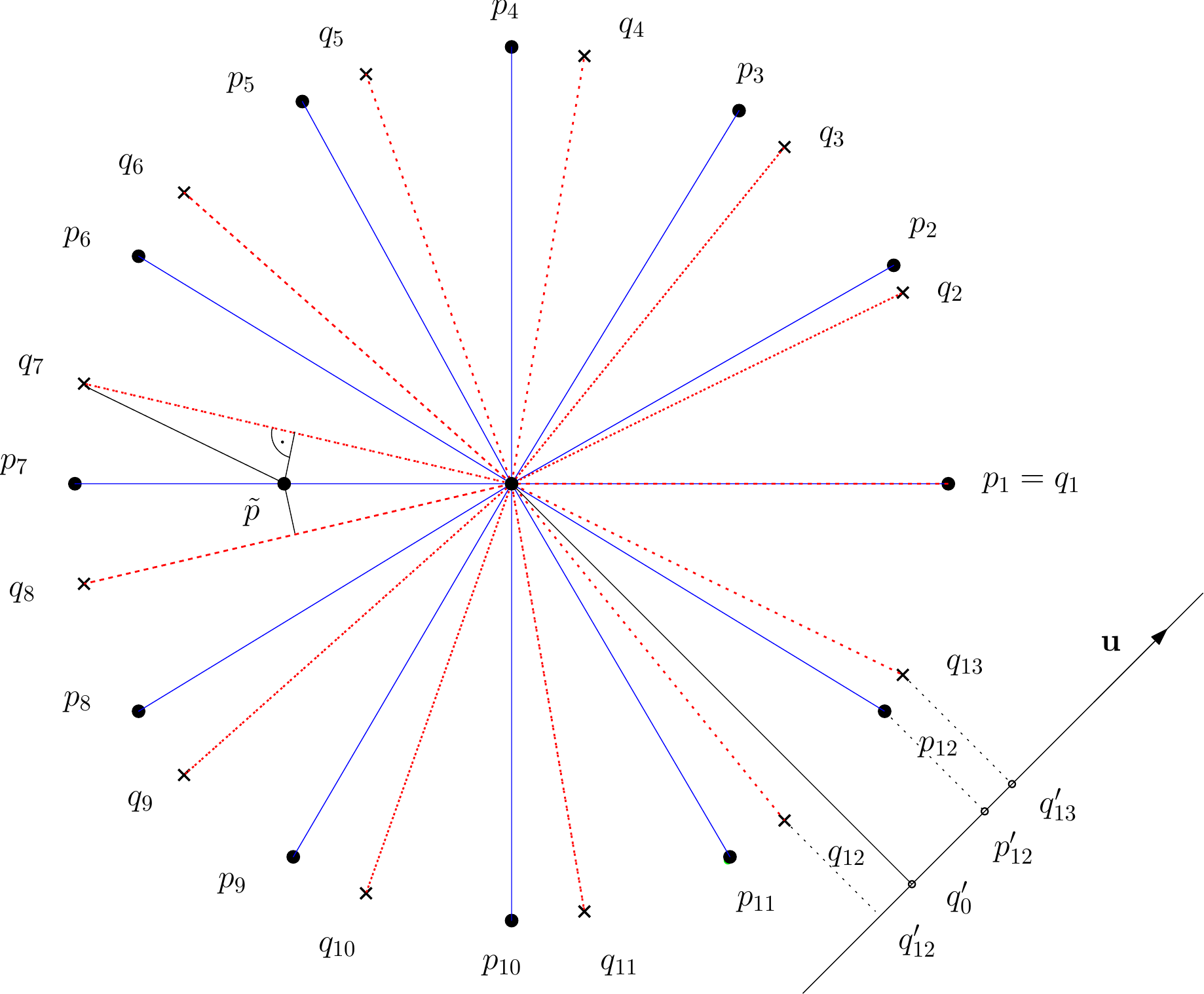}
\caption{Two curves $P$ and $Q$ with parameter $k=12$}
\figlab{example:buchin2}
\end{figure}

The \Frechet distance between the curves $P$ and $Q$ is $\distFr{P}{Q}=1/\left(2\cdot\cos (\pi/(k+1))\right)$. To show this, let $M$ be the matching of the points of $P$ and $Q$ that realizes the \Frechet distance. The curve $Q$ has one more ``ray'' of the star to be traversed. The ``rays'' $\overline{ p_0, p_1, p_0}$ and $\overline{ q_0, q_1, q_0}$ are equal, they are matched by $M$ at distance 0; and the ``rays'' $\overline{ p_0, p_i, p_0}$ and $\overline{ q_0, q_i, q_0}$ for $1\leq i\leq k/2$, and $\overline{ p_0, p_i, p_0}$ and $\overline{ q_0, q_{i+1}, q_0}$ for $1\leq i\leq k/2+2$ are pairwise matched by $M$ at distance smaller than $\pi/(k+1)$. There remain two consecutive ``rays'' $\overline{q_0q_jq_0}$ and $\overline{q_0q_{j+1}q_0}$ that have to be matched by the matching $M$ to $\overline{p_0p_jp_0}$, with $j=k/2+1$. The point $\tilde{p}$ with coordinates $\left(1/\left(2\cdot\cos (\pi/(k+1))\right), \pi \right)$ is the intersection of a bisector of $\overline{q_0 q_j}$ with $\overline{p_0 p_j}$. Such point $\tilde{p}$ matches the subcurve of $Q$ between the vertices $\lbrace q_{k/2+1}, q_0, q_{k/2+2}\rbrace$, thus the matching $M$ is completely described, and the \Frechet distance realized by $M$ is $\| q_{k/2+1} - \tilde{p}\|=\| p_0 - \tilde{p}\|=1/\left(2\cdot\cos (\pi/(k+1))\right)$, as claimed. It holds that $\distFr{P}{Q}> 1/2$ for any $k\geq 2$.

We notice that between every two lines $\overline{q_0q_j}$ and $\overline{q_0q_{j+1}}$ there has to be one line $\overline{p_0p_i}$ (the opposite does not have to hold). Thus the distance between $p_i$ and any of its neighboring $q_j$ and $q_{j+1}$ is at most $\max\lbrace \|q_{j}-p_{i}\|, \|q_{j+1}-p_{i}\|\rbrace  \leq \| q_{j+1}-q_{j}\| \leq 2\pi/(k+1)$, since $p_i$ is on the circular arc between $q_j$ and $q_{j+1}$.

If we now project the curves $P$ and $Q$ to the straight line that supports the unit vector $\textbf{u}$, with $\textbf{u}$ chosen uniformly at random on the unit hypersphere, let $P'=\lbrace p'_1,p'_0,\ldots,p'_k,p'_0,p'_{k+1}\rbrace$ and $Q=\lbrace q'_1,q'_0,\ldots,q'_{k+1},q'_0,q'_{k+2} \rbrace$ be their projections respectively. The line $\overline{q_0q'_0}$ satisfies one of the following two cases:
\begin{compactenum}[i)]
\item $\overline{q_0q'_0} \| \overline{q_0q_j}$ for some $1\leq j\leq k+1$, or
\item $\overline{q_0q'_0}$ lies between $\overline{q_0q_j}$ and $\overline{q_0q_{j+1}}$ for some $1\leq j\leq k+1$.
\end{compactenum}
Then in the first case, since $k$ is even, the straight line $\overline{q_0q'_0}$ lies between $\overline{q_0q_{(j+k/2)\mod (k+1)}}$ and $\overline{q_0q_{1+(j+k/2)\mod (k+1)}}$ (through the two vertices on the opposite side of the star). Therefore, we may only consider the second case.

The projected curves $P'$ and $Q'$ can be matched by a matching $M'$ as follows: let $p_i$ be the vertex of $P$ that lies between $\overline{q_0q_j}$ and $\overline{q_0q_{j+1}}$ from the case definition. Let $p'_i$ be its projection. Then let the subcurves $\lbrace p'_0,p'_i,p'_0\rbrace$ and $\lbrace q'_0,q'_{j},q'_0,q'_{j+1},q'_0\rbrace$ be matched to each other by matching $M'$. For the rest of the curves let $\lbrace p'_0,p'_{i+ \ell},p'_0\rbrace$ and $\lbrace p'_0,p'_{i- \ell},p'_0\rbrace$ be matched to $\lbrace q'_0,q'_{j+1+\ell},q'_0\rbrace$ and $\lbrace q'_0,q'_{j-\ell},q'_0\rbrace$ respectively, where $1\leq i-\ell$ or $i+\ell\leq k+1$. 

Let $M'(p'_i)$ be the point on $Q'$ that is matched to $p'_i$. Let $\overline{M}(p'_i)$ be the point on $Q$ such that $M'(p'_i)$ is its projection on $L$. If we denote with $\alpha_i$ the angle between the vector $\overline{M}(p'_i)-p_i$ and the unit vector $\textbf{u}$, then for the \Frechet distance between the projections $P'$ and $Q'$ (that lay in the one-dimensional space) it holds that
\begin{eqnarray*}
\distFr{P'}{Q'}&\leq& \max_{1\leq i\leq k}\lbrace \| M'(p'_i)-p'_i\|\rbrace = \max_{1\leq i\leq k}\lbrace \| \overline{M}(p'_i)-p_i\|\cdot |\cos \alpha_i|\rbrace \\
&\leq& \max_{1\leq i\leq k}\lbrace \| \overline{M}(p_i)-p_i\|\rbrace
\leq \frac{2\pi}{k}
\end{eqnarray*}
Therefore by projecting the curves $P$ and $Q$ to any straight line the \Frechet distance between the curves will be diminished at least by the factor
\[
\frac{\distFr{P'}{Q'}}{\distFr{P}{Q}}< \frac{2\pi}{k}\cdot 2=\frac{4\pi}{k}.
\]
This yields the claimed linear lower bound, since $k=(t-3)/2$  and proves the theorem with $f(t)=(t-3)/(8\pi)$.
\end{proof}

\begin{proof}[Proof of \thmref{lowerboundcontinuousgeneral} for the discrete \Frechet distance]
The lower bound given by \thmref{lowerboundcontinuousgeneral} holds also for the discrete \Frechet distance, with $f(t)=(t-5)/(16\pi)$. We adapt the curves $P$ and $Q$ from the proof for the continuous \Frechet distance as follows. Let us add to each ``ray'' $\overline{p_0 p_i p_0}$ of the curve $P_k$ the vertices $\hat{p}_i$ (i.e. the ``ray'' becomes $\overline{p_0 \hat{p}_i p_i \hat{p}_i p_0}$), with polar coordinates $\hat{p}_i=\left( 1/\left(2\cdot\cos (\pi/(k+1)\right),\right.$ $\left. 2\cdot (i-1)\cdot \pi/k\right)$. The curve $P_k$ contains now $4k+1$ vertices and $t=4k+5$. The rest of the construction and analysis can be used verbatim.
\end{proof}

\section{Experiments}
\seclab{experimentalpart}

We performed the preliminary experiments on the dataset of the 6th ACM SIGSPATIAL GISCUP 2017 competition\footnote{
\url{http://sigspatial2017.sigspatial.org/giscup2017/download}, downloaded on February 7th, 2018}. Their dataset $\mathcal{D}$ contains 20199 realistic polygonal curves from $\mathbb{R}^2$, with complexities between 9 and 767. We have repeated the following procedure for 504 pairs of curves of $\mathcal{D}$ selected uniformly at random. For each pair of curves (or their subcurves) the projection line was sampled $r=1000$ times. We observed the obtained distribution of the distortion $c$ of the discrete \Frechet distance.
\begin{compactenum}[i)]
\item We calculated the distortion $c=\distFr{P'}{Q'}/\distFr{P}{Q}$ for the whole curves.
\item We observed the prefix curves $P_\ell = \lbrace p_1,\ldots,p_\ell\rbrace$ and $Q_\ell = \lbrace q_1,\ldots,q_\ell\rbrace$ of $P$ and $Q$ respectively, with complexity $\ell$ equal 10, or to the multiples of $50$. The distortion $c=\distFr{P'_\ell}{Q'_\ell}/\distFr{P_\ell}{Q_\ell}$ is calculated.
\item For every prefix length $\ell$ we chose at random subcurves of $P$ and $Q$ of complexity $\ell$, defined by $\ell$ consecutive vertices of $P$ and $Q$ respectively. Let these curves be $P_{\ell,r}$ and $Q_{\ell,r}$. We calculated the distortion $c=\distFr{P'_{\ell,r}}{Q'_{\ell,r}}/\distFr{P_{\ell,r}}{Q_{\ell,r}}$.
\end{compactenum} 
This yielded 4286 pairs of (sub)curves. 

\begin{figure}[ht]
\centering
\includegraphics[width=0.49\textwidth, height= 0.36\textwidth]{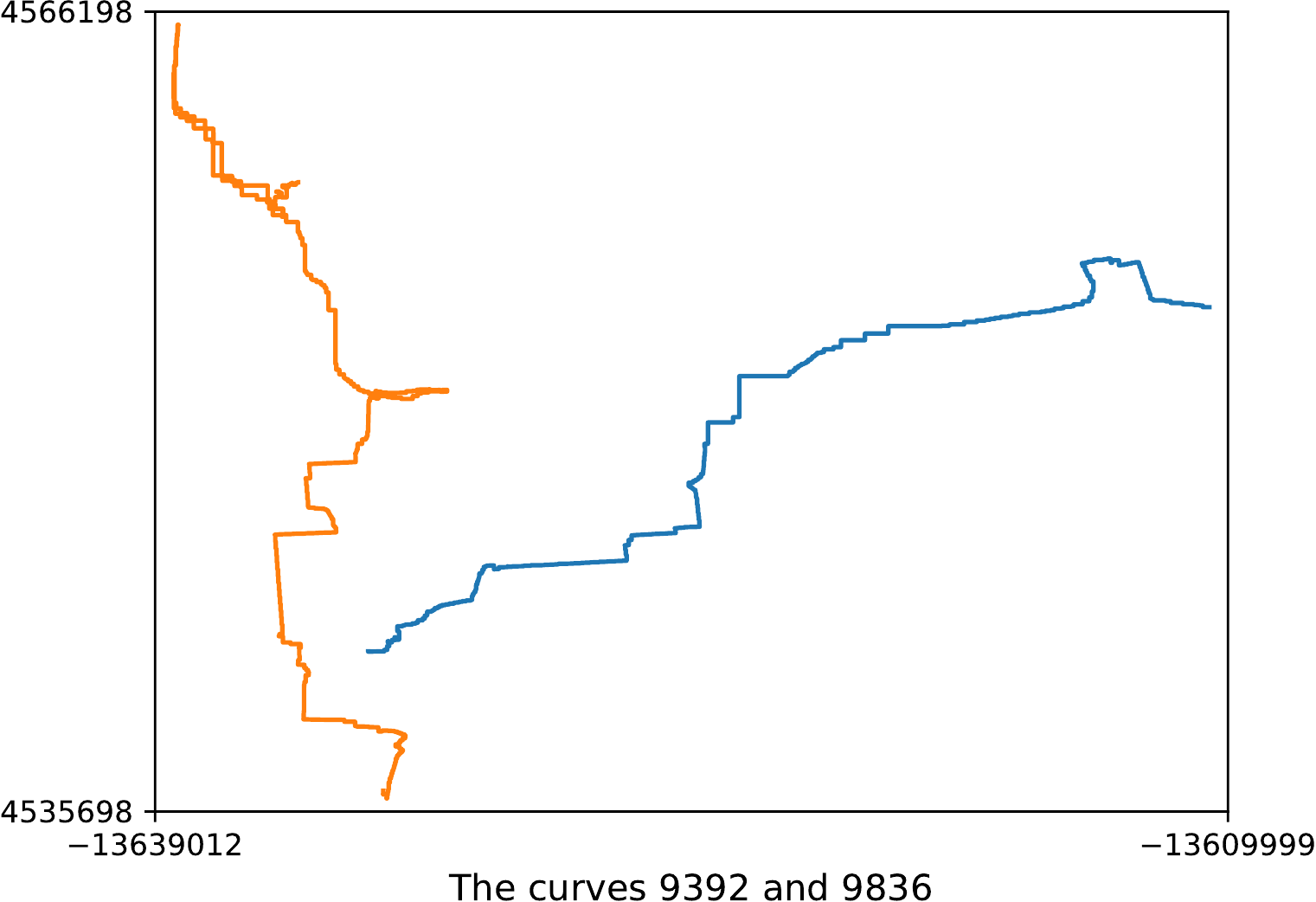}
\includegraphics[width=0.49\textwidth, height= 0.36\textwidth]{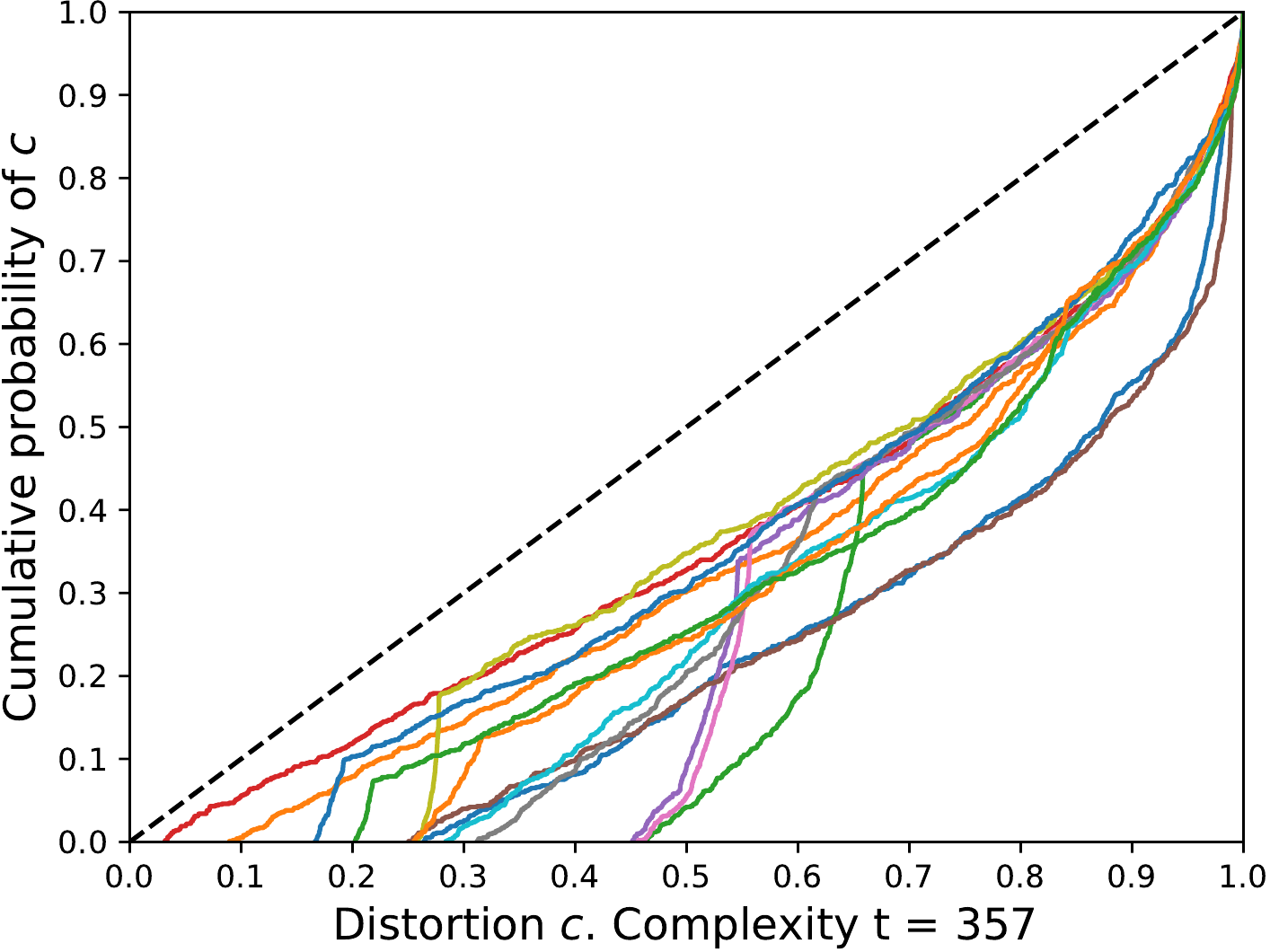}
\caption{The curves $(9392,9836)$ from the dataset $\mathcal{D}$ (left). The cumulative probability distribution of the distortion $c$, over all tested subcurves of the input pair of curves $(9392,9836)$ from $\mathcal{D}$ (right).}
\figlab{example:twocurves}
\end{figure}

E.g. we observe the pair of the curves $P$ and $Q$ (numbered 9382 and 9836) shown in \figref{example:twocurves} (left) with complexities 308 and 357 respectively. For these curves and their subcurves, the cumulative probability distributions of $c$ were calculated, over the set of results of 1000 sampled runs. We notice that the \Frechet distance of the curves $P$ and $Q$ in \figref{example:twocurves} (or their subcurves) is not dominated by one pair of vertices, and varies upon which parts of the curves are observed. For all pairs of subcurves of $P$ and $Q$ and their respective projections $P'$ and $Q'$ we may assume that for any $\gamma\in (0,1)$ it is
\begin{equation}
\text{Pr}\left[\frac{\distFr{P'}{Q'}}{\distFr{P}{Q}}\leq \gamma\right] \leq \gamma.
\label{assumptionondistribution}
\end{equation}

Indeed, when the cumulative probability distribution of the distortion $c$ is observed over all tested pairs of curves (\figref{experimentalresult1}, upper left), the mean and the standard deviation of the distortions obtained by our experiments for a given threshold $\gamma\in \lbrace 0.1, 0.2, 0.3, 0.4, 0.5, 0.6, 0.7, 0.8,$ $0.9\rbrace$, suggest that for the realistic input curves $P$ and $Q$ the assumption of Equation (\ref{assumptionondistribution}) holds with high probability. The outlying maxima occur for the curves whose shape is similar to the curves from the proof of  \thmref{lowerbounddiscretecpacked}, and thus strongly conditioned.

Furthermore, it seems that the distortion of the discrete \Frechet distance is bounded by a constant (with high probability), and that it does not depend on the complexity $t$ of the input curves, as suggested by \figref{experimentalresult1} and \figref{experimentalresult2}.


\begin{figure}[ht]
\centering
\includegraphics[width=0.49\textwidth, height = 0.375\textwidth]{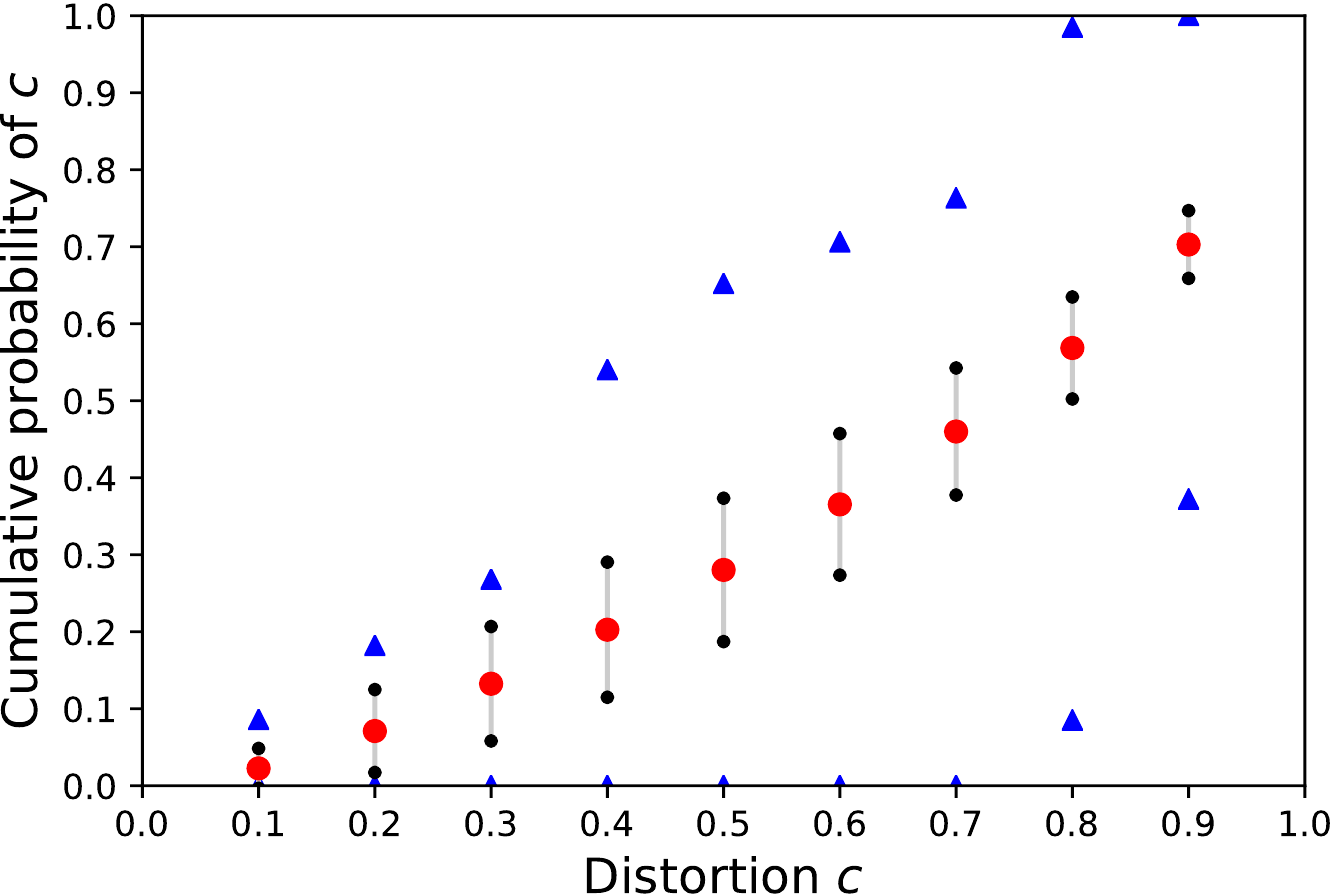}
\includegraphics[width=0.49\textwidth]{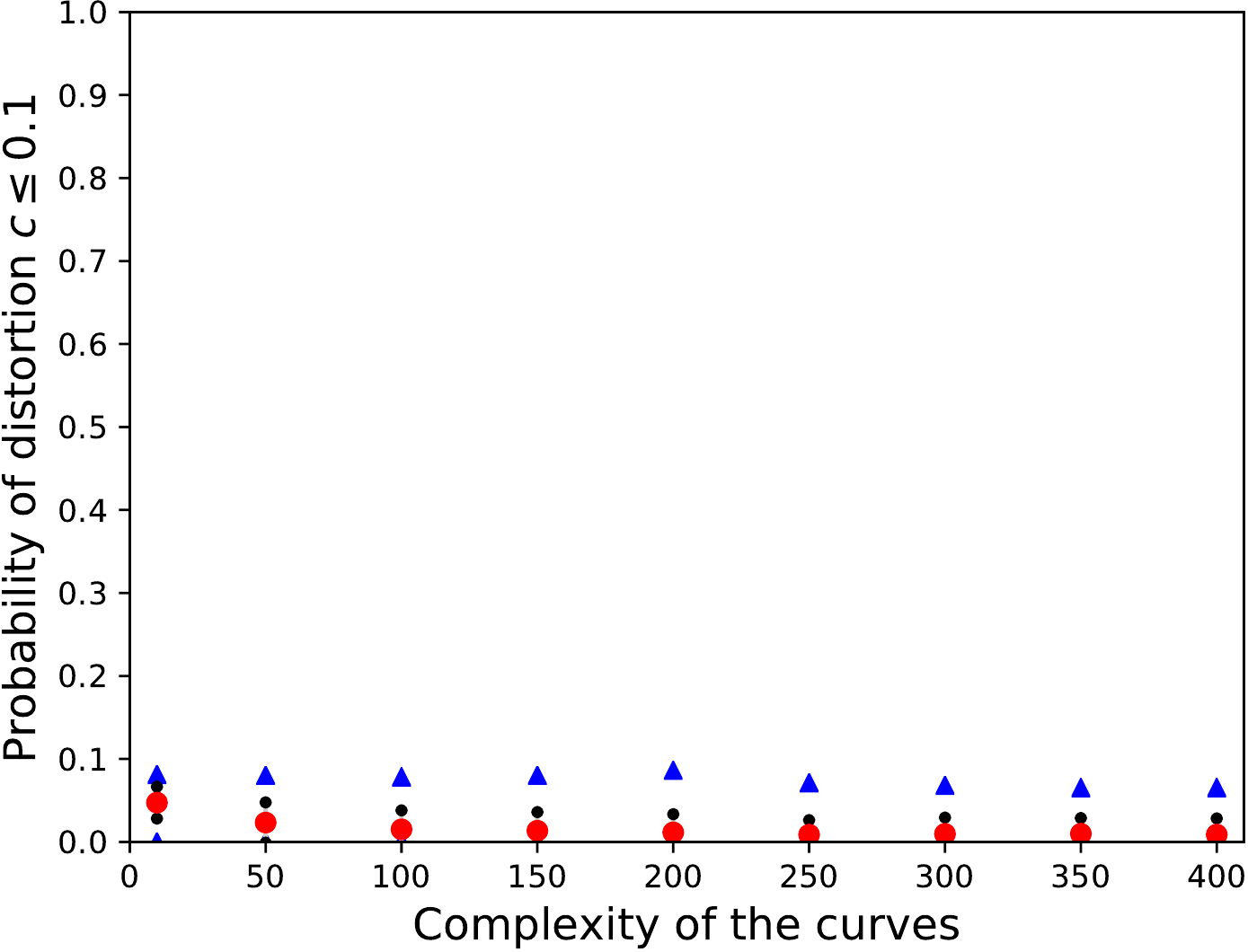}

\includegraphics[width=0.49\textwidth]{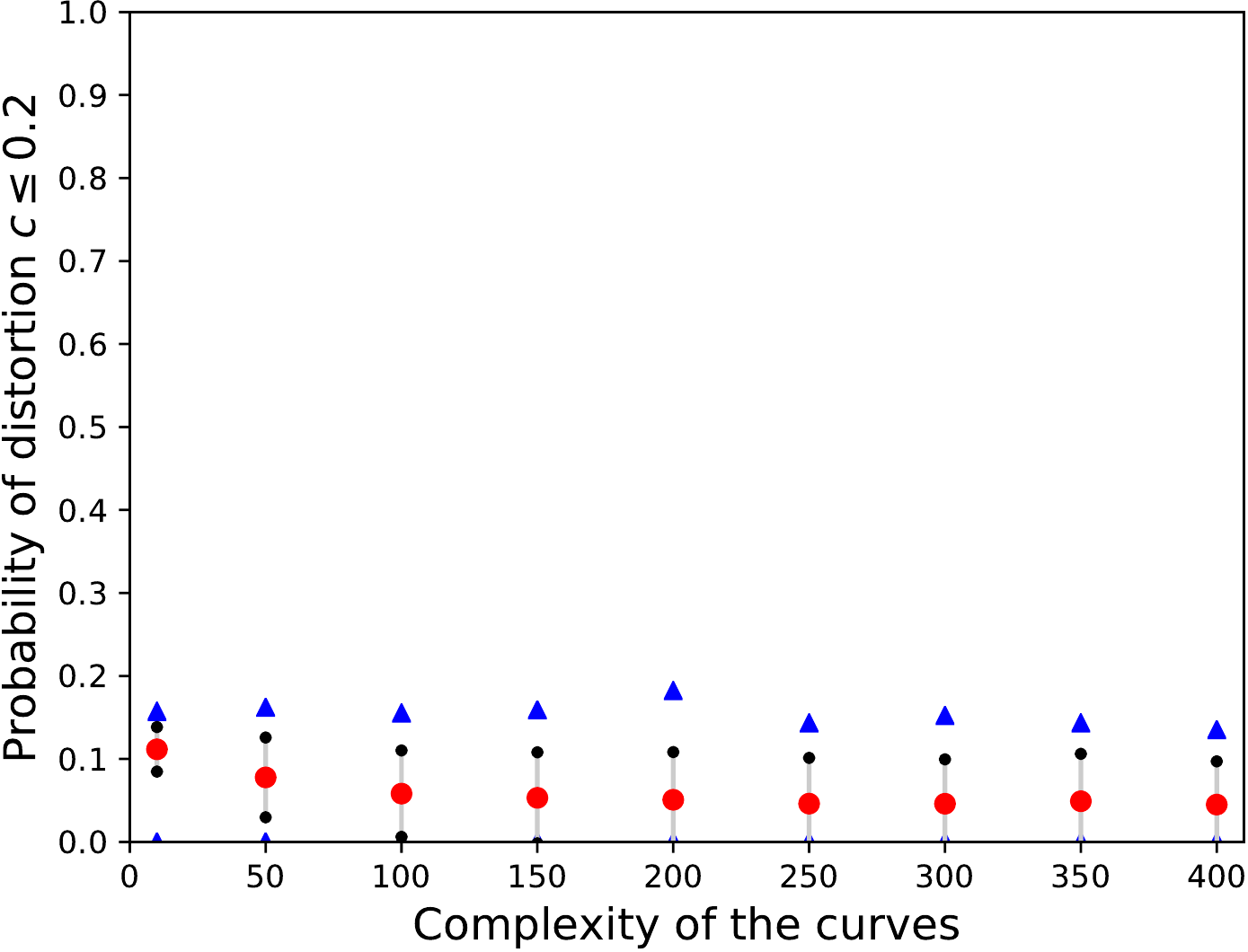}
\includegraphics[width=0.49\textwidth]{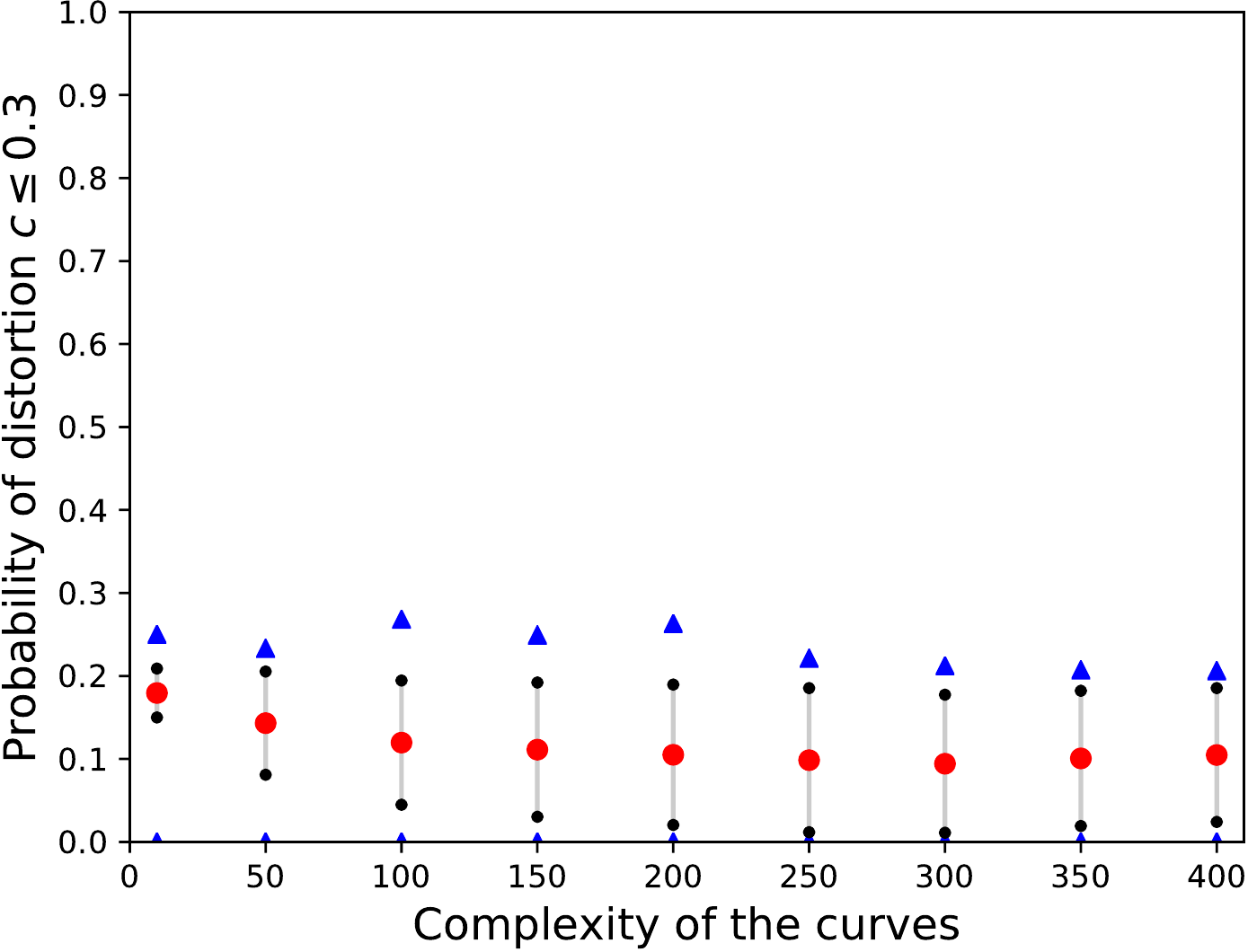}
\caption{The cumulative probability distribution of the distortion (upper left). The remaining subfigures show for a given threshold $\gamma$ of the distortion $c$, the cumulative probability $\text{Pr}[c\leq \gamma]$ as a function of the complexity $t$ of the curves, for $t\in\lbrace 10, 50, 100, 150, 200, 250, 300, 350, 400\rbrace$. 
The means $\mu$ of the values denoted by red circles. The intervals $[\mu-\sigma, \mu+\sigma]$ denoted by black dots, where $\sigma$ is the standard deviation. The minima and maxima denoted by blue triangles. Continued in \figref{experimentalresult2}
}
\figlab{experimentalresult1}
\end{figure}

\begin{figure}[ht]
\centering
\includegraphics[width=0.49\textwidth]{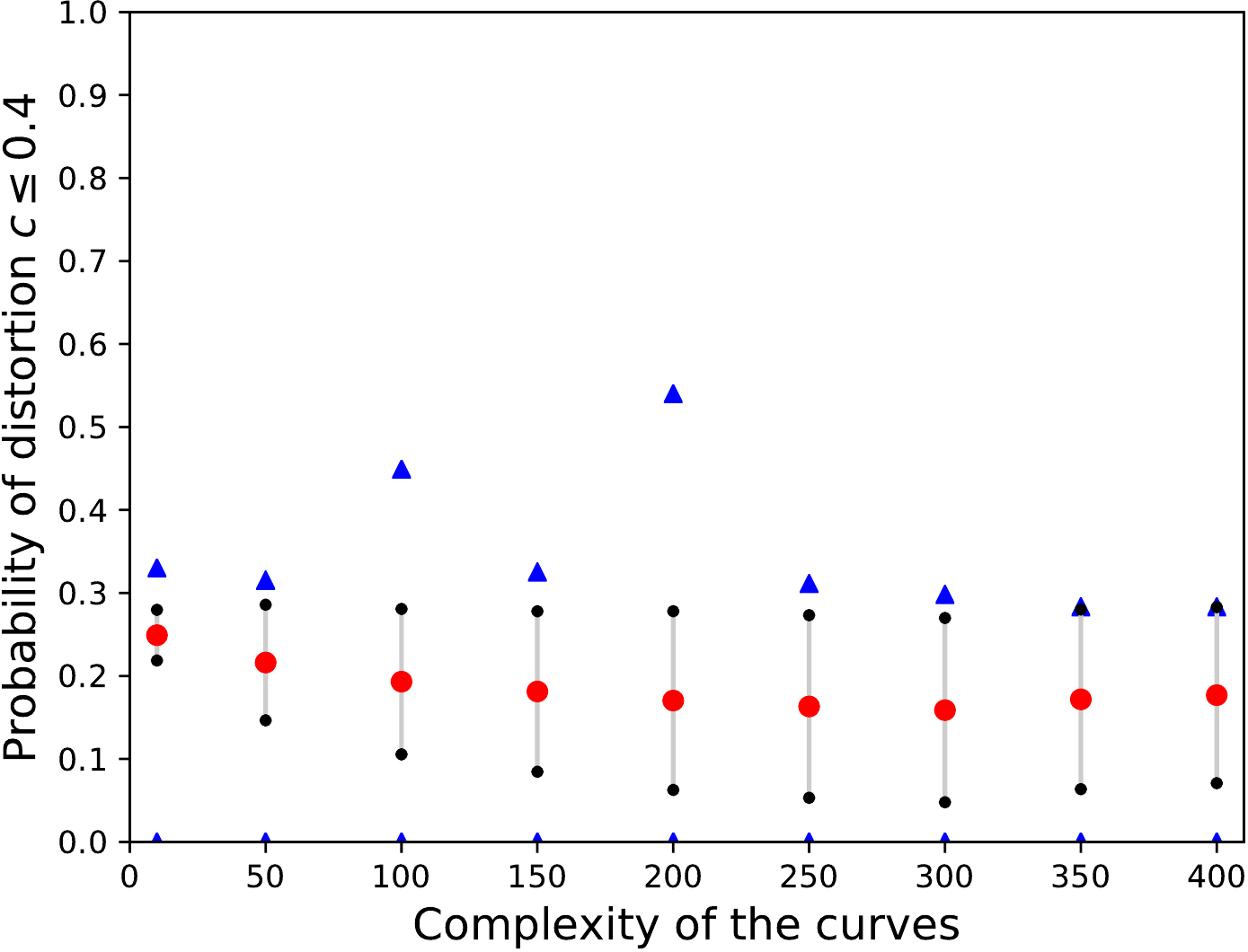}
\includegraphics[width=0.49\textwidth]{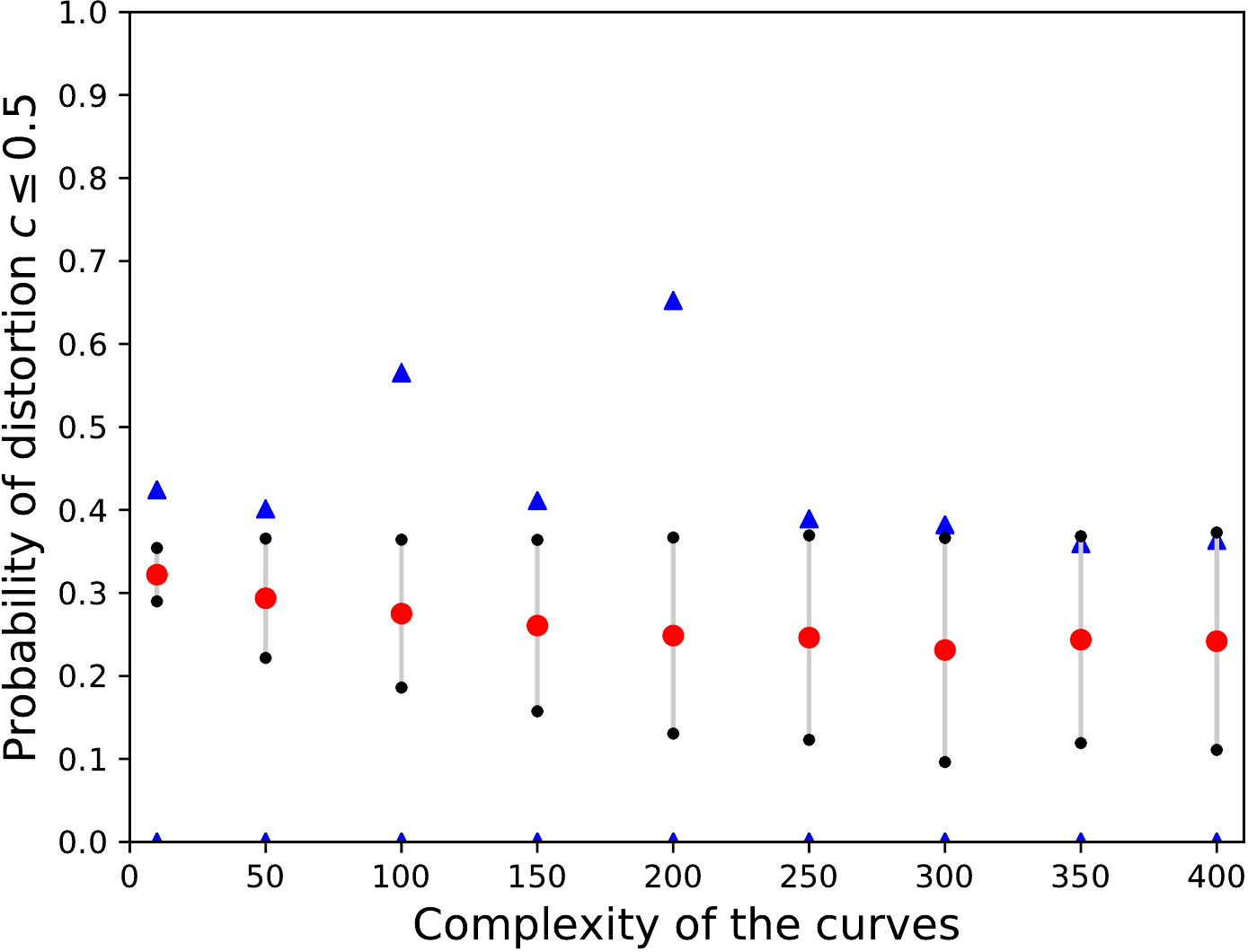}

\includegraphics[width=0.49\textwidth]{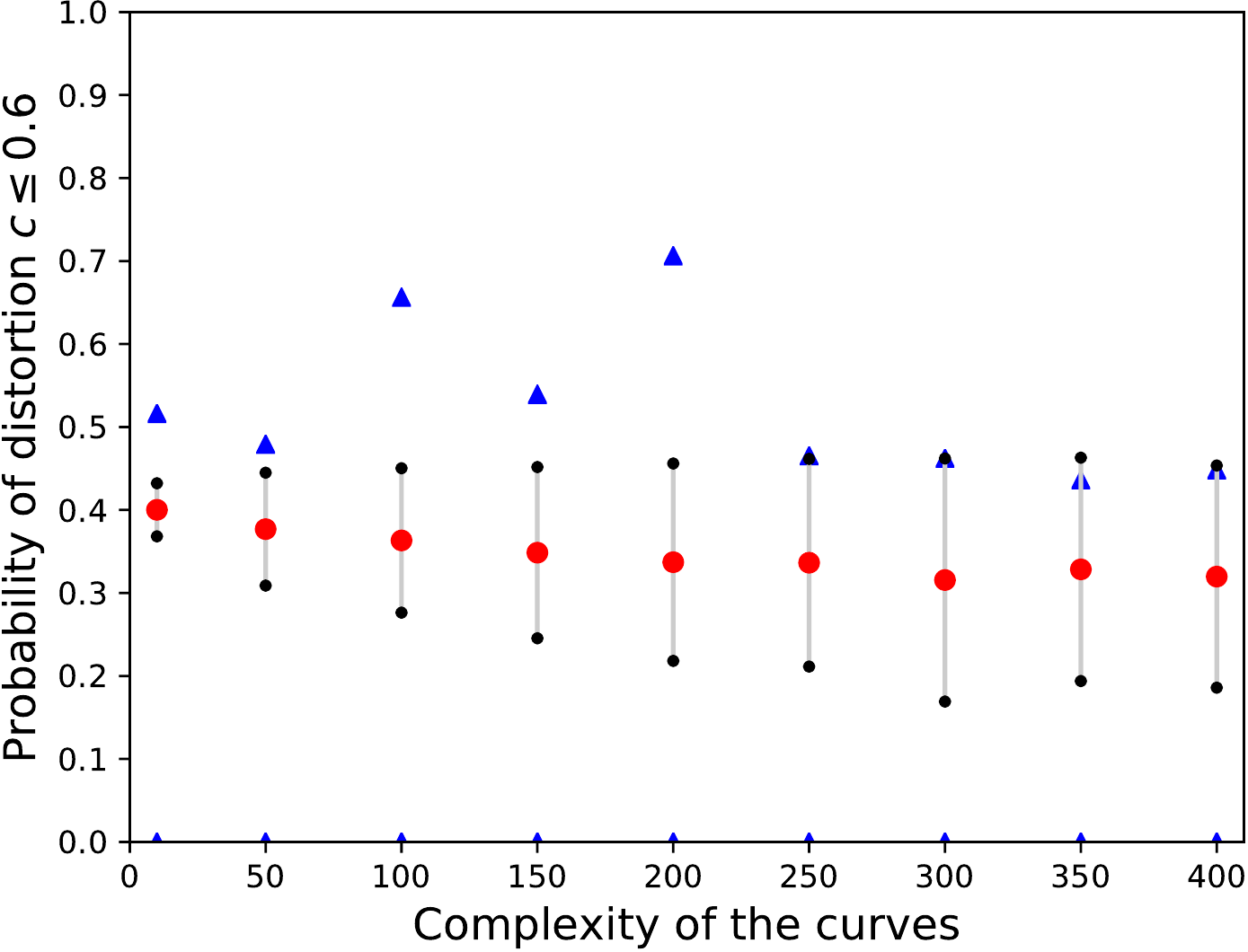}
\includegraphics[width=0.49\textwidth]{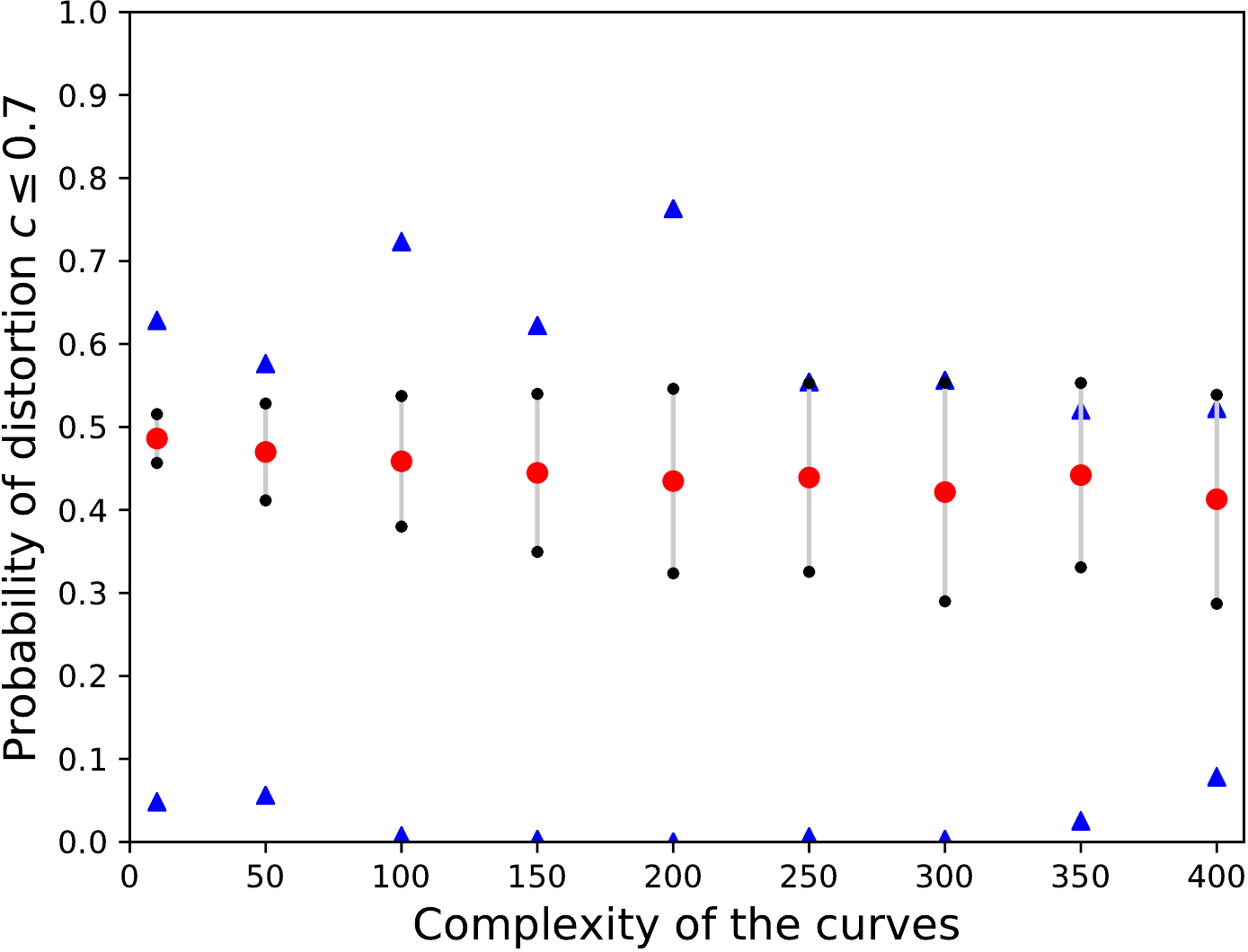}

\includegraphics[width=0.49\textwidth]{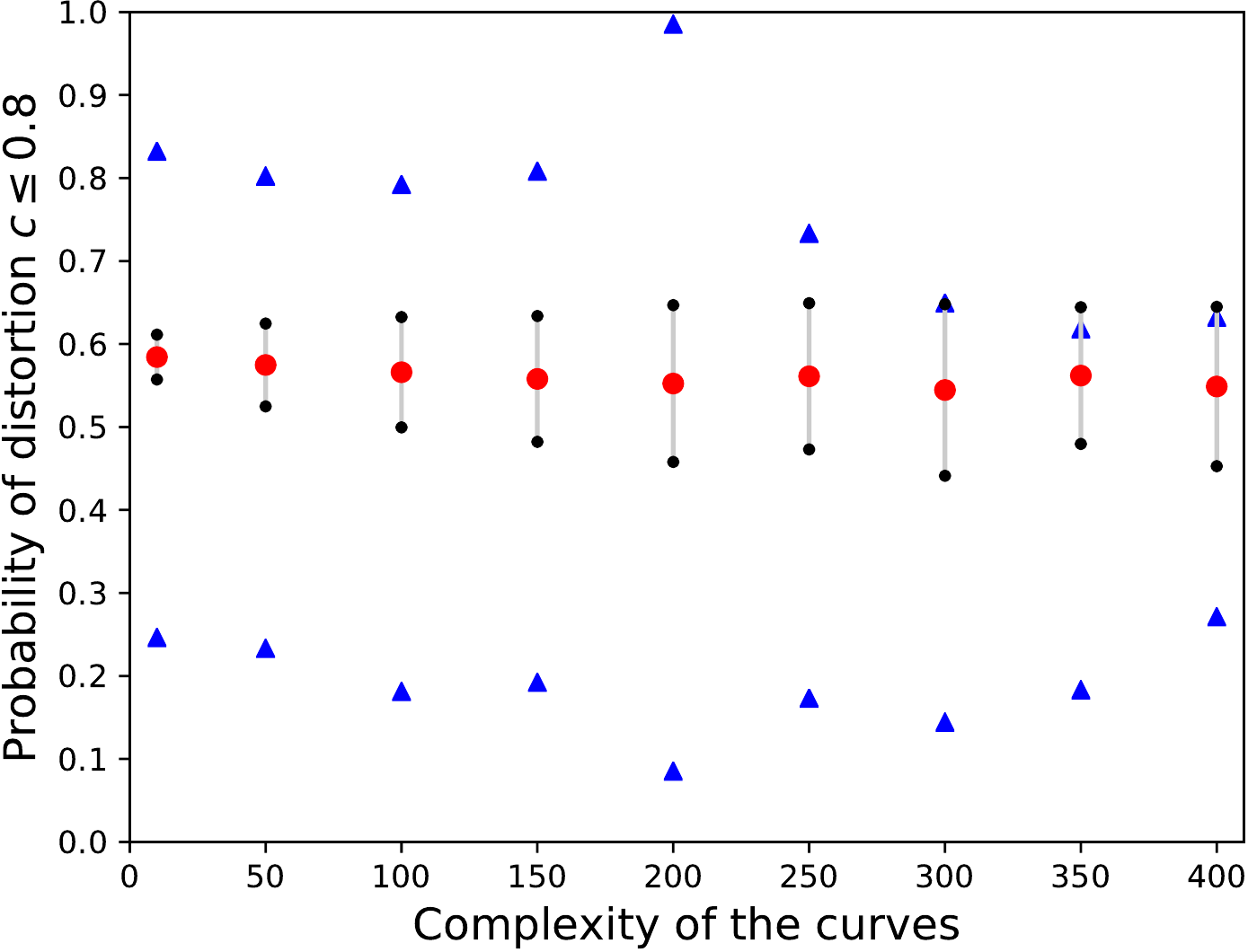}
\includegraphics[width=0.49\textwidth]{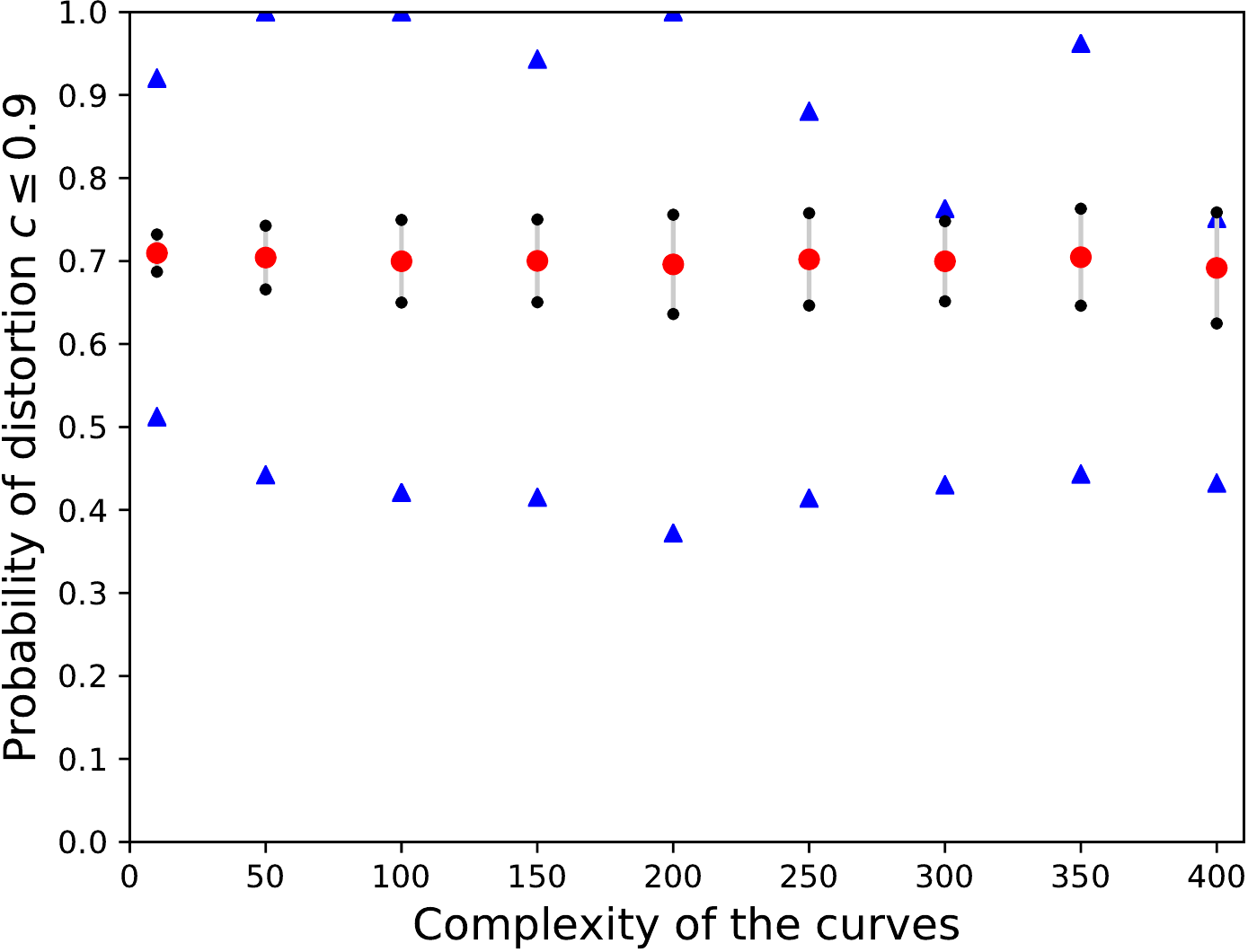}
\caption{\figref{experimentalresult1} continued.
}
\figlab{experimentalresult2}
\end{figure}

\section{Conclusions}


We studied the behavior of the discrete Fr\'echet distance between two
polygonal curves under projections to a random line.  Our results show
that in the worst case and under reasonable assumptions, the discrete Fr\'echet
distance between two polygonal curves of complexity $t$ in $\Re^d$, where $d\in\lbrace 2,3,4,5\rbrace$,
degrades by a factor linear in $t$ with constant probability. One can see this 
as a negative result, since we hoped that the Fr\'echet distance would be more 
robust under such projections. 
We also performed some preliminary experiments on the dataset of the
6th ACM SIGSPATIAL GISCUP 2017 competition (as seen in \secref{experimentalpart}).
The cumulative probability distribution of the distortion\footnote{Technically speaking,
this is the inverse of the distortion as defined in the introduction. We choose this definition
to simplify the presentation, since this definition ensures that $c\in [0,1]$.}
$c=\distFr{P'}{Q'}/\distFr{P}{Q}$ (\figref{experimentalresult1}, first row, left)
suggests that for realistic input curves we can expect that $\text{Pr}\left[
c\leq \gamma\right]\leq \gamma$. This holds independently of the
complexity $t$ of the input curves, as illustrated by
\figref{experimentalresult2} (first row, right) for the given threshold $\gamma=0.5$.
This implies that with probability of at least $0.5$ we expect that the
discrete \Frechet distance will be reduced at most by a factor 2 when projected
to a line chosen uniformly at random, independently of the input complexity.
These results stand in stark contrast with our lower bounds. They indicate that
highly distorted projections happen very rarely in practice, and only for
strongly conditioned input curves.

%





%
\subparagraph*{Acknowledgements.}

We thank Kevin Buchin for useful discussions on the topic of this paper.
%

\bibliography{Embedding}

\begin{thebibliography}{10}

\bibitem{AbboudBW15}
A.~Abboud, A.~Backurs, and V.~V. Williams.
\newblock Quadratic-time hardness of {LCS} and other sequence similarity
  measures.
\newblock {\em CoRR}, abs/1501.07053, 2015.

\bibitem{ad-ranges-18}
P.~Afshani and A.~Driemel.
\newblock On the complexity of range searching among curves.
\newblock In {\em Proceedings of the 29th ACM-SIAM Symposium on Discrete
  Algorithms, {SODA}}, pages 898--917, 2018.

\bibitem{aaks-dfst-14}
P.~K. Agarwal, R.~{Ben Avraham}, H.~Kaplan, and M.~Sharir.
\newblock Computing the discrete {F}r{\'{e}}chet distance in subquadratic time.
\newblock {\em {SIAM} J. Comput.}, 43(2):429--449, 2014.

\bibitem{afpy-dtw-16}
P.~K. Agarwal, K.~Fox, J.~Pan, and R.~Ying.
\newblock {Approximating Dynamic Time Warping and Edit Distance for a Pair of
  Point Sequences}.
\newblock In S.~Fekete and A.~Lubiw, editors, {\em 32nd International Symposium
  on Computational Geometry, {SoCG}}, volume~51 of {\em Leibniz International
  Proceedings in Informatics (LIPIcs)}, pages 6:1--6:16, Dagstuhl, Germany,
  2016. Schloss Dagstuhl--Leibniz-Zentrum f\"ur Informatik.

\bibitem{askey2010gamma}
R.~A. Askey and R.~Roy.
\newblock Gamma function.
\newblock {\em NIST handbook of mathematical functions, US Dept. Commerce,
  Washington, DC}, pages 135--147, 2010.

\bibitem{approx_BackursS16}
A.~Backurs and A.~Sidiropoulos.
\newblock Constant-distortion embeddings of {H}ausdorff metrics into
  constant-dimensional l{\_}p spaces.
\newblock In {\em Approximation, Randomization, and Combinatorial Optimization.
  Algorithms and Techniques, {APPROX/RANDOM}}, pages 1:1--1:15, 2016.

\bibitem{BadoiuCIS05}
M.~Badoiu, J.~Chuzhoy, P.~Indyk, and A.~Sidiropoulos.
\newblock Low-distortion embeddings of general metrics into the line.
\newblock In {\em Proceedings of the 37th Annual {ACM} Symposium on Theory of
  Computing, {STOC}}, pages 225--233, 2005.

\bibitem{BadoiuDGRRRS05}
M.~Badoiu, K.~Dhamdhere, A.~Gupta, Y.~Rabinovich, H.~R{\"{a}}cke, R.~Ravi, and
  A.~Sidiropoulos.
\newblock Approximation algorithms for low-distortion embeddings into
  low-dimensional spaces.
\newblock In {\em Proceedings of the 16th Annual {ACM-SIAM} Symposium on
  Discrete Algorithms, {SODA}}, pages 119--128, 2005.

\bibitem{bartal2014impossible}
Y.~Bartal, L.~Gottlieb, and O.~Neiman.
\newblock On the impossibility of dimension reduction for doubling subsets of
  lp.
\newblock In {\em ACM Symposium on Computational Geometry, {SoCG}}, pages
  60--66, 2014.

\bibitem{Bringmann14}
K.~Bringmann.
\newblock Why walking the dog takes time: Fr\'echet distance has no strongly
  subquadratic algorithms unless {SETH} fails.
\newblock In {\em Proceedings of the 55th Annual IEEE Symposium on Foundations
  of Computer Science}, {FOCS}, pages 661--670, 2014.

\bibitem{BK-focs15}
K.~Bringmann and M.~K{\"{u}}nnemann.
\newblock Quadratic conditional lower bounds for string problems and dynamic
  time warping.
\newblock In {\em {IEEE} 56th Annual Symposium on Foundations of Computer
  Science, {FOCS}}, pages 79--97, 2015.

\bibitem{BringmannK17}
K.~Bringmann and M.~K{\"{u}}nnemann.
\newblock Improved approximation for {F}r{\'e}chet distance on $c$-packed
  curves matching conditional lower bounds.
\newblock {\em Int. J. Comput. Geom. Appl.}, 27(1-2):85--120, 2017.

\bibitem{buchin2014four}
K.~Buchin, M.~Buchin, W.~Meulemans, and W.~Mulzer.
\newblock Four {S}oviets walk the dog-with an application to {A}lt's
  conjecture.
\newblock {\em Proceedings of the 25th Annual ACM-SIAM Symposium on Discrete
  Algorithms}, pages 1399--1413, 2014.

\bibitem{folding-17}
K.~Buchin, J.~Chun, M.~L{\"{o}}ffler, A.~Markovic, W.~Meulemans, Y.~Okamoto,
  and T.~Shiitada.
\newblock Folding free-space diagrams: Computing the {F}r{\'{e}}chet distance
  between 1-dimensional curves (multimedia contribution).
\newblock In {\em 33rd International Symposium on Computational Geometry,
  {SoCG}}, pages 64:1--64:5, 2017.

\bibitem{BergCG13}
M.~de~Berg, A.~F. Cook, and J.~Gudmundsson.
\newblock Fast {F}r{\'{e}}chet queries.
\newblock {\em Comput. Geom.}, 46(6):747--755, 2013.

\bibitem{dtswk-08}
H.~Ding, G.~Trajcevski, P.~Scheuermann, X.~Wang, and E.~Keogh.
\newblock Querying and mining of time series data: Experimental comparison of
  representations and distance measures.
\newblock {\em Proc. VLDB Endow.}, 1(2):1542--1552, Aug. 2008.

\bibitem{dh-jydfd-13}
A.~Driemel and S.~{Har-Peled}.
\newblock Jaywalking your dog -- computing the {Fr\'{e}chet} distance with
  shortcuts.
\newblock {\em SIAM Journal of Computing}, 42(5):1830--1866, 2013.

\bibitem{dhw-afd-12}
A.~Driemel, S.~{Har-Peled}, and C.~Wenk.
\newblock Approximating the {F}r{\'e}chet distance for realistic curves in
  near-linear time.
\newblock {\em Discrete {\&} Computational Geometry}, 48(1):94--127, 2012.

\bibitem{DriemelKS16}
A.~Driemel, A.~Krivo\v{s}ija, and C.~Sohler.
\newblock Clustering time series under the {Fr{\'{e}}chet} distance.
\newblock In {\em Proceedings of the 27th Annual {ACM-SIAM} {S}ymposium on
  {D}iscrete {A}lgorithms, {SODA}}, pages 766--785, 2016.

\bibitem{ds-lshc-17}
A.~Driemel and F.~Silvestri.
\newblock Locally-sensitive hashing of curves.
\newblock In {\em 33st International Symposium on Computational Geometry,
  {SoCG}}, pages 37:1--37:16, 2017.

\bibitem{eiter1994computing}
T.~Eiter and H.~Mannila.
\newblock Computing discrete {F}r{\'e}chet distance.
\newblock Technical Report CD-TR 94/64, Christian Doppler Laboratory, 1994.

\bibitem{EmirisP18}
I.~Z. Emiris and I.~Psarros.
\newblock Products of {E}uclidean metrics and applications to proximity
  questions among curves.
\newblock In B.~Speckmann and C.~D. T{\'{o}}th, editors, {\em 34th
  International Symposium on Computational Geometry, {SoCG}}, pages
  37:1--37:13, 2018.

\bibitem{FellowsFLLRS13}
M.~R. Fellows, F.~V. Fomin, D.~Lokshtanov, E.~Losievskaja, F.~A. Rosamond, and
  S.~Saurabh.
\newblock Distortion is fixed parameter tractable.
\newblock {\em {TOCT}}, 5(4):16:1--16:20, 2013.

\bibitem{gs-dtw-17}
O.~Gold and M.~Sharir.
\newblock Dynamic time warping and geometric edit distance: Breaking the
  quadratic barrier.
\newblock In {\em 44th International Colloquium on Automata, Languages, and
  Programming, {ICALP}}, pages 25:1--25:14, 2017.

\bibitem{HastadIL03}
J.~H{\aa}stad, L.~Ivansson, and J.~Lagergren.
\newblock Fitting points on the real line and its application to {RH} mapping.
\newblock {\em J. Algorithms}, 49(1):42--62, 2003.

\bibitem{huber1982gamma}
G.~Huber.
\newblock Gamma function derivation of n-sphere volumes.
\newblock {\em The American Mathematical Monthly}, 89(5):301--302, 1982.

\bibitem{Indyk01}
P.~Indyk.
\newblock Algorithmic applications of low-distortion geometric embeddings.
\newblock In {\em 42nd Annual Symposium on Foundations of Computer Science,
  {FOCS}}, pages 10--33, 2001.

\bibitem{i-approxnn-02}
P.~Indyk.
\newblock Approximate nearest neighbor algorithms for {F}r\'echet distance via
  product metrics.
\newblock In {\em Symposium on Computational Geometry, {SoCG}}, pages 102--106,
  2002.

\bibitem{IndMat04}
P.~Indyk and J.~Matou\v{s}ek.
\newblock Low-distortion embeddings of finite metric spaces.
\newblock In J.~E. Goodman and J.~O'{}Rourke, editors, {\em Handbook of
  Discrete and Computational Geometry}, pages 177--196. CRC Press, 2004.

\bibitem{keogh2005exact}
E.~Keogh and C.~A. Ratanamahatana.
\newblock Exact indexing of dynamic time warping.
\newblock {\em Knowledge and information systems}, 7(3):358--386, 2005.

\bibitem{matousek1996distortion}
J.~Matou{\v{s}}ek.
\newblock On the distortion required for embedding finite metric spaces into
  normed spaces.
\newblock {\em Israel Journal of Mathematics}, 93(1):333--344, 1996.

\bibitem{mueller07dtw}
M.~M\"uller.
\newblock Dynamic time warping.
\newblock In {\em Information Retrieval for Music and Motion}, pages 69--84.
  Springer Berlin Heidelberg, 2007.

\bibitem{NayyeriR15}
A.~Nayyeri and B.~Raichel.
\newblock Reality distortion: Exact and approximate algorithms for embedding
  into the line.
\newblock In V.~Guruswami, editor, {\em {IEEE} 56th Annual Symposium on
  Foundations of Computer Science, {FOCS}}, pages 729--747, 2015.

\bibitem{RakthanmanonCMBWZZK12}
T.~Rakthanmanon, B.~J.~L. Campana, A.~Mueen, G.~E. A. P.~A. Batista, M.~B.
  Westover, Q.~Zhu, J.~Zakaria, and E.~J. Keogh.
\newblock Searching and mining trillions of time series subsequences under
  dynamic time warping.
\newblock In {\em The 18th {ACM} {SIGKDD} International Conference on Knowledge
  Discovery and Data Mining}, pages 262--270, 2012.

\end{thebibliography}



\end{document}